\documentclass[USenglish,a4paper,11pt]{article}
\pdfoutput=1

\input{Style.sty}
\addbibresource{Bibliography.bib}

\title{Classical and Quantum Algorithms for Variants of \\ Subset-Sum via Dynamic Programming}
\author{
  ~~~~~~~~
  Jonathan Allcock\thanks{Tencent Quantum Laboratory, Hong Kong. \url{jonallcock@tencent.com}} ~~~ \and
  Yassine Hamoudi\thanks{Simons Institute for the Theory of Computing, University of California, Berkeley. \url{ys.hamoudi@gmail.com}} ~~~ \and
  Antoine Joux\thanks{CISPA Helmholtz Center for Information Security. \url{joux@cispa.de}} ~~~~~~~~ \and
  Felix Klingelh\"ofer\thanks{G-SCOP, Université Grenoble Alpes. \url{felix.klingelhofer@grenoble-inp.fr}} \and
  Miklos Santha\thanks{Centre for Quantum Technologies and MajuLab, National University of Singapore. \url{miklos.santha@gmail.com}}
}
\date{\today}

\begin{document}

\maketitle


\begin{abstract}
  \Sub\ is an NP-complete problem where one must decide if a multiset of $n$ integers contains a subset whose elements sum to a target value $m$. The best-known classical and quantum algorithms run in time $\wbo{2^{n/2}}$ and $\wbo{2^{n/3}}$, respectively, based on the well-known meet-in-the-middle technique. Here we introduce a novel classical dynamic-programming-based data structure with applications to \Sub\ and a number of variants, including \Equ\ (where one seeks two disjoint subsets with the same sum), \TSub\ (a relaxed version of \Sub\ where each item in the input set can be used twice in the summation), and \SSub, a generalization of both of these variants, where one seeks two disjoint subsets whose sums differ by some specified value.

Given any modulus $p$, our data structure can be constructed in time $\bo{n^2p}$, after which queries can be made in time $\bo{n^2}$ to the lists of subsets summing to any value modulo $p$. We use this data structure in combination with variable-time amplitude amplification and a new quantum pair finding algorithm, extending the quantum claw finding algorithm to the multiple solutions case, to give an~$\bo{2^{0.504n}}$ quantum algorithm for \SSub. This provides a notable improvement over the best-known~$\bo{2^{0.773n}}$ classical running time established recently by Mucha et al.~\cite{MNPW19c}. Incidentally, we obtain new~$\wbo{2^{n/2}}$ and~$\wbo{2^{n/3}}$ classical and quantum algorithms for \Sub, not based on the seminal meet-in-the-middle approach of Horowitz and Sahni~\cite{HS74j}. We also study \PEqu\ and \PMEqu, variants of \Equ\ where the existence of a solution is guaranteed by the pigeonhole principle. For the former problem, we give faster classical and quantum algorithms with running time $\wbo{2^{n/2}}$ and~$\wbo{2^{2n/5}}$, respectively. For the more general modular problem, we give a classical algorithm that also runs in time~$\wbo{2^{n/2}}$.

\end{abstract}

\newpage

\section{Introduction}
\label{sec:intro}

\Sub\ is the problem of deciding whether a given multiset of $n$ integers has a subset whose elements sum to a target integer $m$.

\begin{problem}[\Sub]
  \label{Pbm:Sub}
  Given a multiset $\{a_1, \dots, a_n\}$ of positive integers and a target integer~$m$, find a subset $S \subseteq [n]$ such that $\sum_{i \in S} a_i = m$.
\end{problem}

It is often useful to express \Sub\ using inner product notation. We set  $\bar{a} = (a_1, \ldots , a_n) \in \N^n$, where the elements are taken in arbitrary order, and the task is to find $\bar{e} \in \{0,1\}^n$ such that
  $\bar{a} \cdot \bar{e} = \sum_{i=1}^n a_i e_i = m$.
The problem is famously NP-complete and featured on Karp's list of 21 NP-complete problems~\cite{Kar72c} in 1972 (under the name of knapsack). It can be solved classically in time $\wbo{2^{n/2}}$ via the \textit{meet-in-the-middle} technique~\cite{HS74j}. Whether this problem can be solved in time $\wbo{2^{\pt{1/2-\delta} n}}$, for some $\delta >0$, is an important open question, but we know that the Exponential Time Hypothesis implies that \Sub\ cannot be computed in time $m^{o(1)} 2^{o(n)}$~\cite{BLT15j, JLL16j}. \Sub\ can also be solved in pseudopolynomial time, for instance in $\bo{nm}$ by a textbook dynamic programming approach, which was improved to a highly elegant $\wbo{n + m}$ randomized algorithm by Bringmann~\cite{Bri17c}. However, assuming the Strong Exponential Time Hypothesis (\seth), it can be shown that for all~$\eps > 0,$ there exists $\delta > 0,$ such that \Sub\ cannot be computed in time $\bo{m^{1- \eps} 2^{n \delta}}$~\cite{ABHS19c}. On a quantum computer, meet-in-the-middle can be combined with quantum search to solve \Sub\ in time $\wbo{2^{n/3}}$. A modular version of \Sub\ can be similarly defined:

\begin{problem}[\MSub]
  \label{Pbm:MSub}
  Given a multiset $\{a_1, \dots, a_n\}$ of positive integers, a target integer $m$ and a modulus $q$, find a subset $S \subseteq [n]$ such that $\sum_{i \in S} a_i \equiv m \pmod q$.
\end{problem}

The $\wbo{2^{n/2}}$ classical and $\wbo{2^{n/3}}$ quantum meet-in-the-middle algorithms, as well as the classical
$\bo{nm}$ dynamic programming algorithm can be used to solve \MSub\ with the same running times, by replacing regular addition with modular addition. While the $\wbo{n + m}$ algorithm of Bringmann does not immediately give rise to an $\wbo{q}$ algorithm for \MSub, several recent algorithms have achieved this complexity~\cite{ABJ19c, ABB21c, CI21c}. Also, SETH implies that for all $\eps > 0,$ there exists $\delta > 0,$ such that \MSub\ cannot be computed in time $\bo{q^{1- \eps} 2^{n \delta}}$ because an instance of \Sub\ where each $a_i < m$ is a special case of \MSub\ when we choose $q = nm$.


\subsection{Some variants of \Sub}

\Sub\ has several close relatives we will be concerned with in this paper. First among these is \Equ, introduced by Woeginger and Yu~\cite{WY92j}, where one must decide if a set of $n$ positive integers contains two disjoint subsets whose elements sum to the same value:

\begin{problem}[\Equ~\cite{WY92j}]
  \label{Pbm:Equ}
  Given a set $\{a_1, \dots, a_n\}$ of positive integers, find two distinct subsets $S_1, S_2 \subseteq [n]$ such that $\sum_{i \in S_1} a_i = \sum_{i \in S_2} a_i$. In inner product notation, we are looking for a nonzero vector $\bar{e} \in \{-1, 0,1\}^n$ such that
$\bar{a} \cdot \bar{e} = 0.$
\end{problem}

The folklore classical algorithm~\cite{Woe08j} for \Equ\ runs in time $\wbo{3^{n/2}}\le\bo{2^{0.793n}}$, and is also based on a meet-in-the-middle approach. In the classical case, we arbitrarily partition the input into two sets of the same size, giving rise to vectors $\bar{a}_1, \bar{a}_2 \in \N^{n/2}$. Then we compute and sort the possible $3^{n/2}$ values $\bar{a}_1 \cdot \bar{e}$, for $\bar{e} \in \{-1, 0,1\}^{n/2}$. Finally, we compute the possible $3^{n/2}$ values of the form $\bar{a}_2 \cdot \bar{e}$ and, for each value, check via binary search if it has a collision (i.e. an item of the same value) in the first set of values. In the quantum case, we use a different balancing, dividing the input into a set of size $n/3$ and a set of size $2n/3$, and then use quantum search over the larger set to find a collision. This folklore quantum algorithm has a running time of $\wbo{3^{n/3}}\le\bo{2^{0.529n}}$. The classical running time of \Equ\ was reduced in a recent work by Mucha et al.~\cite{MNPW19c} to $\bo{2^{0.773n}}$, and it is an open problem whether this can be further improved. The modular version of \Equ\ is defined as:

\begin{problem}[\MEqu]
  \label{Pbm:MEqu}
  Given a set $\{a_1, \dots, a_n\}$ of positive integers and a modulus $q$, find two distinct subsets $S_1, S_2 \subseteq [n]$ such that $\sum_{i \in S_1} a_i \equiv \sum_{i \in S_2} a_i \pmod q$.
\end{problem}

Similarly to \Sub, the $\bo{2^{0.793n}}$ time meet-in-the-middle algorithm for \Equ\ gives rise to an algorithm of the same time for \MEqu. Moreover, we can suppose that $q \geq 2^n$, because otherwise we can just consider $a_1, \ldots , a_k$ from the input, where~$k$ satisfies $2^{k-1} \leq q < 2^k$. By the pigeonhole principle, such an instance has a solution which we will show (see Theorem~\ref{thm:phPMSub}) can be found in time $\wbo{2^{k/2}}$. Thus, \MEqu\ can always be solved in time $\bo{q^{0.793}}$ classically. Intriguingly, when expressed as a function of~$n$, faster algorithms are known for both \Sub\ and \MSub\ than for \Equ\ and \MEqu, respectively, whereas expressed as a function of $q$ (or as a function of $\sum_{i=1}^n a_i$ in the non-modular cases), the situation is the opposite. This holds both classically and quantumly.

A natural generalization of \Sub\ is to allow each item in the input set to be used more than once in the summation, where the maximum number of times each item can be used is specified as part of the input to the problem.
This is the analog of bounded knapsack, a well-studied problem in the literature (see for example~\cite{KPP04b}). In particular, we will study the case when every item can be used at most twice.

\begin{problem}[\TSub]
  \label{Pbm:TSub}
  Given a multiset $\{a_1, \dots, a_n\}$ of positive integers and a target integer $0 < m < 2 \sum_{i=1}^n a_i$, find a vector $\bar{e} \in \{0,1, 2\}^n$, such that $\bar{a} \cdot \bar{e} = m$.
\end{problem}

There is a natural variant of \Sub\ that generalizes both  \Equ\ and \TSub. We call this variant \SSub, whose investigation is the main subject of this paper.

\begin{problem}[\SSub]
  \label{Pbm:SSub}
  Given a multiset $\{a_1, \dots, a_n\}$ of positive integers and an integer $0 \leq s < \sum_{i=1}^n a_i$, find two distinct subsets $S_1, S_2 \subseteq [n]$ such that  $\sum_{i \in S_1} a_i = s + \sum_{i \in S_2} a_i$.
\end{problem}

The condition $S_1 \neq S_2$ is necessary in the case $s = 0$ to exclude the trivial solutions $S_1 = S_2$.
The problem \Equ\ is a special case of \SSub\ in this case, and it is easy to show (see Proposition~\ref{Prop:size}) that \TSub\ can also be reduced to \SSub\ without increasing the size of the input. This means that any algorithm for \SSub\ automatically gives rise to an algorithm of the same complexity for \Equ\ and \TSub, and therefore we focus on constructing classical and quantum algorithms for \SSub. We also consider the modular version of \SSub:

\begin{problem}[\MSSub]
  \label{Pbm:MSSub}
  Given a multiset $\{a_1, \dots, a_n\}$ of positive integers, an integer $0 \leq s < \sum_{i=1}^n a_i$ and a modulus $q$, find two distinct subsets $S_1, S_2 \subseteq [n]$ such that $\sum_{i \in S_1} a_i \equiv s + \sum_{i \in S_2} a_i \pmod q$.
\end{problem}

We additionally study the following variant of \Equ\ where, by the pigeonhole principle, a solution is guaranteed to exist. This search problem is total in the sense that its decision version is trivial because the answer is always `yes'. Such problems belong to the complexity class TFNP~\cite{MP91j} consisting of NP-search problems with total relations. Problems in TFNP cannot be NP-hard unless NP equals co-NP. More precisely, the following two problems belong to the Polynomial Pigeonhole Principle complexity class PPP, defined by Papadimitriou~\cite{Pap90c}, where the totality of the problem is syntactically guaranteed by the pigeonhole principle.

\begin{problem}[\PEqu]
  \label{Pbm:PEqu}
  Given a set $\{a_1, \dots, a_n\}$ of positive integers such that $\sum_{i=1}^n a_i < 2^n-1$, find two distinct subsets $S_1, S_2 \subseteq [n]$ such that $\sum_{i \in S_1} a_i = \sum_{i \in S_2} a_i$.
\end{problem}

There are $2^n$ subsets $S \subseteq [n]$. Since they all verify $0 \leq \sum_{i \in S} a_i \leq 2^n-2$ there must exist two distinct subsets $S_1,S_2$ that sum to the same value, according to the pigeonhole principle. The modular version of \PEqu\ similarly belongs to the class PPP:

\begin{problem}[\PMEqu]
  \label{Pbm:PMEqu}
  Given a set $\{a_1, \dots, a_n\}$ of positive integers and a modulus $q$ such that $q \leq 2^n - 1$, find two distinct subsets $S_1, S_2 \subseteq [n]$ such that $\sum_{i \in S_1} a_i \equiv \sum_{i \in S_2} a_i \pmod q$.
\end{problem}

Observe that \PEqu\ is a special case of \PMEqu\ when $q = 2^n - 1.$


\subsection{Our contributions and techniques}

We give new classical and quantum algorithms for \Sub\ and several closely related problems defined in the previous section. Our results are succinctly stated below and summarized in Table~\ref{tb:results-summary}. The algorithms for \Sub\ achieve the same complexity as the currently best-known algorithms based on the meet-in-the-middle method\footnote{After completion of this work, it was pointed out to us by an anonymous referee that a classical algorithm for \Sub, similar to ours, was sketched in~\cite{AKKN16c}.}. Our quantum algorithm for \SSub\ (and for its special cases of \Equ\ and \TSub) improves on the currently best-known $\bo{2^{0.529n}}$ quantum algorithm for these problems, which is also based on meet-in-the-middle. Our quantum algorithm for \PEqu\ further improves, in this special case, on our algorithm for general \Equ. We also initiate the study of the \PEqu\ problem (and its modular variant) in the classical setting, where we obtain a better complexity than what was known before for the general \Equ\ problem.

\begin{rtheorem}[Theorems~\ref{thm:Quantum-DP-SS},~\ref{thm:Classical-DP-SS} {\normalfont (Restated)}]
  There are representation-technique-based classical and quantum algorithms for \Sub\ that run in time~$\wbo{2^{n/2}}$ and~$\wbo{2^{n/3}}$, respectively.
\end{rtheorem}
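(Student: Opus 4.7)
The plan is to reduce \Sub\ to a restricted search inside the modular dynamic-programming data structure described in the previous sections. For any modulus~$p$, every $\bar{e}\in\{0,1\}^n$ with $\bar{a}\cdot\bar{e}=m$ also satisfies $\bar{a}\cdot\bar{e}\equiv m \pmod{p}$, so it suffices to search for a solution inside the list~$L$ of subsets of~$[n]$ whose sum is congruent to~$m$ modulo~$p$. Choosing~$p$ at random from an appropriate range (for instance, primes near a target value), one has $\mathbb{E}|L|=\bo{2^n/p}$, with concentration provided by standard tail bounds. From the representation-technique viewpoint, the DP structure groups together, for each residue modulo~$p$, all the subsets that ``represent'' that residue, and the algorithm only has to scan the comparatively small representation class of $m \bmod p$ for a genuine preimage of~$m$.

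For the classical $\wbo{2^{n/2}}$ algorithm, the plan is to take $p=\Theta(2^{n/2})$, build the data structure in $\bo{n^2 p}=\wbo{2^{n/2}}$ time, query for the list~$L$, and enumerate through its $\bo{2^{n/2}}$ members at $\bo{n^2}$ cost per element, checking for each whether its integer sum equals~$m$. This succeeds whenever a solution exists, because the target subset is guaranteed to lie in~$L$. The total running time is $\wbo{2^{n/2}}$.

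For the quantum $\wbo{2^{n/3}}$ algorithm, the plan is to shrink the modulus to $p=\Theta(2^{n/3})$, so that the data structure is built in $\wbo{2^{n/3}}$ time and~$L$ has expected size~$2^{2n/3}$. Rather than enumerating~$L$ explicitly, I would use the DP structure to provide indexed quantum access to the elements of~$L$ at $\bo{n^2}$ query cost, and apply Grover search (or amplitude amplification) to locate a subset summing to exactly~$m$. With $\bo{\sqrt{|L|}}=\wbo{2^{n/3}}$ Grover iterations, each of polynomial cost, the total running time is $\wbo{2^{n/3}}$.

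The most delicate point is the quantum step: one needs, for each index $i\in\{1,\ldots,|L|\}$, to reconstruct the $i$-th subset of~$L$ by backtracking through the DP table, and this reconstruction must be coherent and polynomial-time so that it can serve as a Grover oracle. A secondary technical issue is controlling~$|L|$ in the worst case rather than in expectation, which is handled either by a random prime choice of~$p$ together with a tail bound on the number of subsets in any residue class, or by truncating the returned list once it exceeds the target threshold and repeating the procedure with a fresh~$p$.
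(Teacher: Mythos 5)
Your plan captures the paper's core mechanism: pick a random prime modulus, build the DP table, use the fast indexed oracle for the bin $T_{p, m \bmod p}$, and then search inside the bin. The gap is in the expectation bound. You write $\mathbb{E}|L| = \bo{2^n/p}$, but in fact
\[
  \Ex_p\bigl[t_{p,\,m\bmod p}\bigr] \;=\; \abs{\{S:\Sigma(S)=m\}} \;+\; \sum_{S:\Sigma(S)\neq m}\Pr_p[\Sigma(S)\equiv_p m] \;=\; K + \wbo{2^n/p},
\]
where $K$ is the number of exact solutions (every solution lies in $L$ for \emph{every} $p$, with probability $1$). When $K$ is large — say $K=\Theta(2^{n})$, as happens for instance when all $a_i$ are equal and $m$ is a popular sum — the bin size is dominated by $K$, and neither of your fallbacks helps: resampling $p$ cannot shrink $|L|$ below $K$, and truncating the $\prec$-ordered enumeration at $2^{n/2}$ (classical) or searching only $\wbo{2^{2n/3}}$ indices (quantum) could in principle miss all solutions, since the $\prec$ order is oblivious to which sets are actual solutions. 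Your Grover step, written as $\bo{\sqrt{|L|}}$, also implicitly assumes $|L| = \wbo{2^{2n/3}}$, which is false in this regime.

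The paper closes this hole with a cheap preliminary step that you omit: run a direct quantum search over all of $\mathcal{P}([n])$ (respectively, sample $2^{n/2}$ random subsets classically) and abort after the time budget. If this first step fails, one may \emph{assume} $K \le 2^{n/3}$ (resp.\ $K \le 2^{n/2}$), which is exactly what makes the expectation $K + \wbo{n\,2^n/p}$ collapse to $\wbo{2^{2n/3}}$ (resp.\ $\wbo{2^{n/2}}$), after which Markov gives the needed size control. Alternatively, in the quantum case you could recover the bound without the preliminary step by invoking Fact~\ref{fact:search} in its $\sqrt{N/K}$ form and noting that whenever $|L|$ is large it is because solutions dominate it — but this requires an argument you haven't made, and your current statement of the expectation and your Grover cost are both incorrect as written. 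Otherwise the construction (random prime, table $t_p$ in time $\bo{n^2p}$, coherent oracle via Theorem~\ref{Thm:effEnum}) matches the paper.
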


\begin{rtheorem}[Theorems~\ref{Thm:Dsum},~\ref{Thm:classical-Dsum} {\normalfont (Restated)}]
  There are classical and quantum algorithms for \SSub\ that run in time~$\bo{2^{0.773 n}}$ and $\bo{2^{0.504n}}$, respectively.
\end{rtheorem}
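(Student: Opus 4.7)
The plan is to prove the two bounds separately, since the classical and quantum halves rely on rather different machinery. For the classical $2^{0.773n}$ bound, the key observation is that \SSub\ is equivalent to finding $\bar e \in \{-1,0,1\}^n$ with $\bar a\cdot\bar e=s$, so the representation-technique algorithm of Mucha et al.~\cite{MNPW19c} for \Equ\ applies almost verbatim. Concretely, I would fix proportions of $+1$ and $-1$ entries, decompose $\bar e = \bar e_1 - \bar e_2$ with $\bar e_1,\bar e_2\in\{0,1\}^n$ of disjoint supports, introduce a random modulus $p$ and residue $r$, and enumerate candidates with restricted inner products $r+s$ and $r$ modulo $p$ (the only adjustment for \SSub\ is the extra $s$ shift in one of the residues). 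Combining the two lists via a second meet-in-the-middle and counting representations leads to the same optimization as in~\cite{MNPW19c}, yielding $2^{0.773n}$.

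The quantum $2^{0.504n}$ bound is where the novelty lies. My plan is to combine the three pieces announced in the abstract: the $\wbo{n^2 p}$ DP data structure, variable-time amplitude amplification (VTAA), and the new multi-solution quantum pair-finding routine. First, choose a modulus $p=2^{\beta n}$ and build the data structure, so that for every residue $v$ we have $O(n^2)$-time access to the list $L_v$ of subsets summing to $v \bmod p$. Then cast \SSub\ as a pair-finding instance by splitting $[n]$ into two blocks: we look for a pair of subsets, one from each block, whose contributions to a chosen signed decomposition combine to $s$. Conditioning on the modular contribution of the first block fixes the target residue of the second, reducing the search to a claw-type problem on the DP lists, whose oracle calls have variable but small average cost. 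Plugging this into the new multi-solution pair-finding routine and smoothing the variable evaluation time with VTAA yields a total cost that, once optimized over $\beta$ and the block sizes, should evaluate to $2^{0.504n}$.

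The main obstacle I anticipate is the quantum analysis. Two things in particular need care. First, controlling the number of solutions to the pair-finding instance so that its multi-solution extension yields the expected speedup without losing factors from repeated claws; this probably requires choosing $p$ so that each list has close to its typical size with high probability, together with a clean handling of the disjointness constraint $S_1\cap S_2=\emptyset$. Second, verifying that VTAA indeed attains the expected evaluation time: the DP queries take $O(n^2)$ worst case but the returned list sizes fluctuate, so one must check that the overall runtime distribution has the right concentration for variable-time amplification to help. Once these two points are in place, the $2^{0.504n}$ bound follows by balancing the preprocessing cost $\wbo{n^2 p}$ against the amplified search cost; the classical side, by contrast, is essentially a rebranding of the Mucha et al.\ analysis with the shift parameter $s$ carried along throughout.
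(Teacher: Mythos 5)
Your proposal identifies the right ingredients (the DP data structure, the new multi-solution quantum pair finding, variable-time amplitude amplification, and an extension of the representation technique of Mucha et al.), but it has two genuine gaps, one on each side.

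On the classical side, your claim that the Mucha et al.\ analysis carries over ``almost verbatim'' with an $s$-shift in one residue is precisely the point where the argument breaks. Their analysis is parametrized by a \emph{minimum} solution $S_1,S_2$ of size $\lmin n$ and uses the fact that, when $\lmin > 1/2$, the number of distinct collision values is at least $2^{(1-\lmin)n}$; this is the quantity that justifies the modulus/bin-size trade-off. That lower bound \emph{fails} for \SSub\ with $s \neq 0$ (equivalently, for \TSub). The paper explicitly switches to a \emph{maximum} solution of ratio $\ell$ and proves Lemma~\ref{lem:maxRatioEsum}: for a maximum solution $S_1,S_2$ of size $\ell n$, augmenting both by any $X \subseteq \overline{S_1\cup S_2}$ produces solutions with $2^{(1-\ell)n}$ pairwise distinct values. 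Your rebranding step needs this replacement lemma (or an argument that minimum-solution counting still works with $s\neq 0$), and without it the classical bound does not follow.

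On the quantum side, the representation-technique piece alone (DP table, choose modulus $p = 2^{\beta n}$, pair finding, VTAA) does \emph{not} yield $2^{0.504n}$ as a worst case over all instances. In the paper the exponent of that piece is $(1+\ell)/4$ for $\ell\le 3/5$ and $\ell/2+1/10$ for $\ell>3/5$; the latter tends to $0.6$ as $\ell\to 1$. The worst-case exponent $0.504$ is achieved only by pairing the representation procedure with a \emph{separate} quantum meet-in-the-middle algorithm, whose exponent $(h(\ell)+\ell)/3$ is better for $\ell$ near $0$ or $1$, and reading off the crossover at $\ell_2\approx 0.809$. Your ``split $[n]$ into two blocks and look for one subset from each block'' language appears to conflate meet-in-the-middle (which does split $[n]$, with a $1{:}2$ partition) with the representation technique, where pair finding is between the two bins $T_{p,k}$ and $T_{p,(k-s)\bmod p}$ of subsets of the \emph{whole} of $[n]$, and where no block sizes are optimized. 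Related to this, your discussion of VTAA misses the key regime distinction: VTAA is used only for $\ell>3/5$ (where the bin is chosen so small that a random bin contains a solution only with exponentially small probability); for $\ell\le 3/5$ a single run suffices, with $b=(1+\ell)/4$. Bounding $\Ex[\tau^2]$ for VTAA also requires the second-moment bound on bin-size products (Lemma~\ref{lem:binSizeProd}), which needs the assumption that the total number of solution pairs is not too large -- handled in the paper by a preliminary Grover search -- a step absent from your outline.
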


\begin{rtheorem}[Theorem~\ref{thm:phsq} {\normalfont (Restated)}]
  There is a quantum algorithm for \PEqu\ that runs in time~$\wbo{2^{2n/5}}$.
\end{rtheorem}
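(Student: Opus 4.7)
\textit{Proof plan.} The plan is to combine the paper's $\wbo{n^2 p}$-time dynamic-programming data structure with quantum element distinctness. First I would fix $p = 2^{2n/5}$ and build the data structure in time $\wbo{2^{2n/5}}$, which partitions the $2^n$ subsets of $[n]$ into $p$ classes according to their sum modulo $p$, each of average size $2^n/p = 2^{3n/5}$. The key pigeonhole observation is that within a single residue class $r$ the attainable sums lie in $\{r, r+p, r+2p, \ldots\} \cap [0, \sum_i a_i]$, a set of size at most $2^{3n/5}$. Since $\sum_i a_i < 2^n - 1$ and the total number of subsets is $2^n$, a global counting argument forces at least one class $r^*$ to contain strictly more subsets than distinct attainable sums, guaranteeing a \PEqu\ collision within $r^*$.

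I would then run Ambainis's quantum element distinctness on the subsets of the relevant residue class, accessing them via $\wbo{n^2}$-time queries to the DP. On a list of size $O(2^{3n/5})$ containing at least one collision, the walk finds it in $\wbo{(2^{3n/5})^{2/3}} = \wbo{2^{2n/5}}$ quantum steps, which matches the preprocessing cost.

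The main obstacle is that the residue class $r^*$ containing the collision is not known in advance, and a naive amplitude amplification over the $p$ classes would multiply the running time by $\sqrt{p} = 2^{n/5}$, inflating the bound to $\wbo{2^{3n/5}}$. I would address this by using variable-time amplitude amplification together with the new quantum pair finding primitive introduced earlier in the paper (which extends quantum claw finding to the multiple-solutions case), so that the outer search over classes is amortized into the inner element-distinctness walk rather than appearing as a multiplicative $\sqrt{p}$ overhead. Reconstructing the two witnessing subsets from the outputs of the DP and the walk is then a routine post-processing step, yielding total running time $\wbo{2^{2n/5}}$.
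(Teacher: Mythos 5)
Your pigeonhole observation and your choice of $p = 2^{2n/5}$ are exactly right, and the plan to query the relevant bin via the $\wbo{n^2}$ DP oracle and run element distinctness on it is the same as the paper's. However, there is a genuine gap in the middle: you treat the location of the over-full residue class $r^*$ as unknown and propose a quantum outer search over the $p$ classes, but your own arithmetic correctly shows that naive amplitude amplification over $p$ classes costs $\sqrt{p}$ and pushes the total to $\wbo{2^{3n/5}}$. Your proposed fix---variable-time amplitude amplification combined with the pair-finding primitive---does not repair this. Variable-time amplitude amplification (Fact~\ref{fact:vtaa}) improves the dependence on the \emph{second moment of the running time}, not on the \emph{success probability} $\rho$; one still pays $\wbo{\sqrt{\ex{\tau^2}/\rho}}$, and with a unique marked class $\rho \approx 1/p$, so the $\sqrt{p}$ factor survives and the bound remains $\wbo{2^{3n/5}}$. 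The pair-finding theorem does not help either, since its advantage over element distinctness only kicks in when there are $K \gg 1$ disjoint marked pairs, which the pigeonhole principle does not guarantee.

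The missing idea is that no quantum search over classes is needed at all: building the table $t_p[i,j]$ in time $\wbo{n^2 p}$ already hands you, classically and deterministically, the cardinalities $t_{p,k} = t_p[n,k]$ of \emph{every} bin. The pigeonhole argument (Lemma~\ref{lem:pigeonhole-collision}) does not merely guarantee that some marked class exists---it gives an explicitly checkable criterion (some $t_{p,k}$ strictly exceeds $2^n/p$, or $t_{p,p-1} = 2^n/p$) that can be located by a linear scan over the $p$ entries of the last row, in time $\wbo{p} = \wbo{2^{2n/5}}$. Once $k$ is pinned down classically, a single run of element distinctness on $T_{p,k}$ (truncated to its first $2^{3n/5+1}$ elements under the ordering $\prec$, which is justified because only $2^{3n/5}$ distinct attainable sums are congruent to $k$ modulo $p$) finds the collision in time $\wbo{(2^{3n/5})^{2/3}} = \wbo{2^{2n/5}}$. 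This is exactly where the $\PEqu$ promise pays off: the deterministic choice of $p$ and the exact bin counts from the DP table eliminate the randomization and amplification machinery that is needed for general $\SSub$.
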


\begin{rtheorem}[Theorems~\ref{thm:phsAlgo},~\ref{thm:phPMSub} {\normalfont (Restated)}]
There are classical deterministic algorithms for \PEqu\ and \PMEqu\ that run in time~$\wbo{2^{n/2}}$.
\end{rtheorem}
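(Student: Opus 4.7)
The plan is to construct both algorithms around a pigeonhole argument enabled by the paper's dynamic-programming data structure. I give the details for \PEqu\ and then explain how the same scheme extends to \PMEqu.

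For \PEqu, set $p = 2^{\lceil n/2 \rceil}$ and build the DP data structure with modulus $p$. This takes $\bo{n^2 p} = \wbo{2^{n/2}}$ time and provides, for each residue $v \in \{0,\ldots,p-1\}$, the list $B_v$ of subsets summing to $v$ modulo $p$ together with its cardinality $|B_v|$.

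The crucial observation is a two-level pigeonhole. Every subset in $B_v$ has an integer sum lying in the arithmetic progression $\{v, v+p, v+2p, \ldots\} \cap [0,M]$, where $M = \sum_{i=1}^n a_i < 2^n - 1$, so the number of achievable sums inside that bucket is at most $N_v := \lfloor (M-v)/p \rfloor + 1$. Summing over residues gives $\sum_v N_v = M+1 < 2^n = \sum_v |B_v|$, hence some residue $v^*$ satisfies $|B_{v^*}| > N_{v^*}$, and applying the pigeonhole inside $B_{v^*}$ produces two distinct subsets with the same integer sum---which is exactly a \PEqu\ solution.

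Algorithmically I would scan $v = 0, \ldots, p-1$ comparing the stored $|B_v|$ against $N_v$ in $\bo{1}$ per bucket, locating $v^*$ in $\bo{p} = \wbo{2^{n/2}}$ total time. Then, using the DP query primitive to enumerate the first $N_{v^*} + 1$ subsets of $B_{v^*}$ at $\bo{n^2}$ per query, I would compute their integer sums, sort them, and output the colliding pair guaranteed by the pigeonhole. Since $N_{v^*} + 1 = \bo{M/p} = \bo{2^{n/2}}$, this phase also runs in $\wbo{2^{n/2}}$.

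For \PMEqu with $q \le 2^n - 1$, the same scheme applies after choosing $p$ as a divisor of $q$ of order $\sqrt{q}$: subsets in $B_v$ then have sums modulo $q$ confined to the $(q/p)$-element coset $\{v, v+p, \ldots, v + (q/p-1)p\} \bmod q$, yielding $N_v = q/p$ and $\sum_v N_v = q < 2^n$, so the same critical-bucket argument goes through. When no such divisor exists (e.g., $q$ prime), the regime $q \le 2^{n/2}$ is handled directly by taking $p = q$, and otherwise one truncates to the first $\lceil \log_2(q+1) \rceil$ elements and refines the choice of $p$. The main obstacle throughout is this calibration of $p$: it must be small enough to afford the $\bo{n^2 p}$ construction, yet large enough that the per-bucket count $N_v$ remains $\bo{2^{n/2}}$.
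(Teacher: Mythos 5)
Your argument for \PEqu\ is correct and essentially identical to the paper's: build the DP table with $p = 2^{\lceil n/2\rceil}$, locate an oversized bucket by a pigeonhole counting argument (the paper counts residues in $[0,2^n-2]$, you count residues in $[0,M]$ --- same thing since $M \le 2^n - 2$), and enumerate $O(2^n/p)$ sets from that bucket to find the collision. This part is fine.

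The \PMEqu\ part has a genuine gap. Your divisor trick works only when $q$ has a divisor $p \approx \sqrt q$: in that case $p \mid q$ implies $(\Sigma(S)\bmod q)\bmod p = \Sigma(S)\bmod p$, so the ordinary DP table computes the bucket sizes, each bucket corresponds to $q/p$ residues modulo $q$, and the construction/enumeration costs balance at $\wbo{\sqrt q} \le \wbo{2^{n/2}}$. But the hard regime is a prime (or divisor-poor) modulus $q$ with $q > 2^{n/2}$. Your fallback --- ``truncate to the first $\lceil\log_2(q+1)\rceil$ elements and refine the choice of $p$'' --- does not resolve anything: after truncation you still have $q > 2^{k/2}$ for the truncated length $k$, $q$ is still prime, and no suitable $p$ exists, so you have no DP table you can build. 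The core obstruction is that for a general modulus the map $S \mapsto (\Sigma(S)\bmod q)\bmod p$ is not additive, so Lemma~\ref{lem:dp-table-time} does not apply. The paper's Theorem~\ref{thm:phPMSub} spends all of its effort circumventing exactly this: it writes $q = q_1 2^{n/2} + q_2$, defines buckets by the \emph{quotient} $\lfloor(\Sigma(S)\bmod q)/2^{n/2}\rfloor$ rather than a remainder, introduces the auxiliary DP-computable array $C[j]$ built from the truncated weights $a_i' = \lfloor a_i/2^{n/2}\rfloor$, proves that $B[j]$ is sandwiched between nearby ranges of $C$ (Lemma~\ref{lem:BsubC} and Corollary~\ref{cor:incl}), and runs a dichotomic search using those approximations to locate an oversized bucket. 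None of that machinery appears in your sketch, and without it (or the alternative meet-in-the-middle dichotomy from the paper's closing remark) the prime-$q$, large-$q$ case is simply unhandled. You should either develop the quotient-bucket approximation argument or switch to the sorted-two-halves meet-in-the-middle dichotomy; as written, the \PMEqu\ proof does not go through.
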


\begin{table}[htb]
  \centering
     \renewcommand{\arraystretch}{1.4}
  \begin{tabular}{c@{\hskip 2mm}cc}
       \toprule
              & Classical & Quantum  \\ \midrule
       \Sub   & $2^{n/2}$~\cite{HS74j,AKKN16c}, [Thm.~\ref{thm:Classical-DP-SS}]& $2^{n/3}$~\cite{BJLM13c}, [Thm.~\ref{thm:Quantum-DP-SS}] \\
       \SSub  & $2^{0.773n}$~\cite{MNPW19c}, [Thm.~\ref{Thm:classical-Dsum}] & $2^{0.504n}$~[Thms.~\ref{Thm:Dsum},~\ref{Thm:Esum_quantum}]   \\
       \PEqu  & $2^{n/2}$~[Thm.~\ref{thm:phsAlgo}] & $2^{2n/5}$~[Thm.~\ref{thm:phsq}]\\[1.4mm]
       {\renewcommand{\arraystretch}{0.9} \begin{tabular}{@{}c@{}} \textsc{Pigeonhole Modular} \\ \textsc{Equal-Sums} \end{tabular}} & $2^{n/2}$~[Thm.~\ref{thm:phPMSub}] & \textendash \\ \bottomrule
  \end{tabular}
  \caption{Best-known classical and quantum running times for variants of \Sub. Our results are indicated by reference to the corresponding theorems in this paper. The $\wbo{\cdot}$ notation is implied for all running times. For \Sub\ our dynamic programming based results have the same time complexity as the best previous algorithms that were based on meet-in-the-middle. The results of~\cite{MNPW19c} are for \Equ, a special case of \SSub. In Appendix~\ref{app:classical-shifted-sums}, we give an extension of their algorithm to \SSub.} \label{tb:results-summary}
\end{table}

At a high level, all of our algorithms use a \emph{representation technique} approach. While this technique was originally designed to solve \Sub\ when the instances are drawn from some specific distribution~\cite{HJ10c}, here we follow the path of Mucha et al.~\cite{MNPW19c} and use it in a worst-case analysis. Among our three main algorithms, the quantization of this technique for \SSub\ is the most challenging. We will therefore explain first, via this algorithm, the difficulties we had to address and the methods we used to tackle them.

\subparagraph*{\SSub.}
The representation technique approach for \SSub\ consists first of selecting a random prime $p\in \{2^{bn}, \ldots , 2^{bn+1}\}$, where $b \in (0,1)$ is some appropriate constant, and a random integer $k \in  \{0, \ldots , p-1\}$. Then we consider the random {\em bin}~$T_{p,k}$, defined as
  \[T_{p,k} = \set[\Big]{S\subseteq \{1, \ldots , n\} : \sum_{i \in S} a_i \equiv k \pmod p},\]
and we search that bin and $T_{p,(k-s) \bmod p}$ for a colliding solution (i.e. a pair of sets $(S_1,S_2) \in T_{p,k} \times T_{p,(k-s) \bmod p}$ such that $\sum_{i \in S_1} a_i = s + \sum_{i \in S_2} a_i$). The choice of the bin size (which, on average, is roughly~$2^{(1-b)n}$) should balance two opposing requirements: the bins should be sufficiently large to contain a solution and also sufficiently small to keep the cost of collision search low.

To satisfy the above two requirements, our algorithm uses the concept of a \textit{maximum solution}. This is the maximum of $\abs{S_1} + \abs{S_2}$, when $S_1, S_2$ are disjoint and form a solution.  Let this maximum solution size be~$\ell n$, for some $\ell\in(0,1)$. The algorithm consists of two different procedures, designed to handle different maximum solution sizes. For $\ell$ close to 0 or close to 1, the quantization of the meet-in-the-middle method adapted to solutions of size $\ell n$ is used because it performs better. In this case, the quantization does not present any particular difficulties: it is a straightforward application of quantum search with the appropriate balancing. We therefore focus the discussion on the representation technique procedure used for values of~$\ell$ away from $0$ or $1$. When $S_1, S_2$ form a maximum solution of size $\ell n$ then, for every set $X \subseteq \overline{S_1 \cup S_2}$ in the complement of the solution, the pairs $S_1 \cup X, S_2 \cup X$ also form a solution, and all these solutions have different values (see Lemma~\ref{lem:maxRatioEsum}). This makes it possible to bound from below, not only the number of solutions, but also the number of {\em solution values} by $2^{(1 - \ell) n}$, which makes the use of the representation technique successful.

The most immediate way to quantize the procedure is to replace classical collision finding with the quantum element distinctness algorithm of Ambainis~\cite{Amb07j}. However, in a straightforward application of this algorithm we face a difficulty. For concreteness, we explain this when $\ell = 3/5$. In that case, by the above, the total number of solutions with different values is at least~$2^{2n/5}$. This is handy for applying quantum element distinctness: we can select a random prime $p \in \{2^{2n/5}, \ldots , 2^{2n/5 + 1}\}$ and expect to have a solution in the random bin $T_{p,k} $ with reasonable probability. The expected size $|T_{p,k}|$ of the bin is about~$2^{3n/5}$, and therefore the running time of Ambainis' algorithm should be of the order of $|T_{p,k}|^{2/3}$ which is also about $2^{2n/5}$. However, the quantum element distinctness algorithm requires us to perform queries to $T_{p,k}$. That is, for some indexing $T_{p,k} = \{S_1, \ldots, S_{\abs{T_{p,k}}}\}$ of the elements of~$T_{p,k}$, we need to implement the oracle
  $O_{T_{p,k}}\ket{I}\ket{0} = \ket{I}\ket{S_I},$
where $1 \leq I \leq \abs{T_{p,k}}$. In other words, given $1 \leq I \leq \abs{T_{p,k}}$, we have to be able to find the $I$th element in~$T_{p,k}$ (for some ordering of that set). In the usual description of  the element distinctness algorithm there is a simple way to do that (for example, the set over which the algorithm is run is just a set of consecutive integers). However, finding a simple bijection among the first~$|T_{p,k}|$ integers and~$T_{p,k}$ is not a trivial task. Unlike in the classical case, explicitly enumerating $T_{p,k}$ is not an option because this would take too long, requiring about~$2^{3n/5}$ time steps. Instead, we use \emph{dynamic programming} to compute the table of cardinalities,
  \[t_p[i,j] = \abs[\Big]{\set[\Big]{S \subseteq \{1,\dots,i\} : \sum_{s \in S} a_s \equiv j \pmod p}}.\]
Computing the cardinality of the bins is cheaper than computing their contents, and can be done
in time $\bo{n^2 p} = \wbo{2^{2n/5}}$. Crucially, once the table is constructed, one can deduce the paths through it that led to $t_p[n,k] = \abs{T_{p,k}}$, in order to find each element of $T_{p,k}$ in time $\bo{n^2}$. More precisely, we define a particular strict total order $\prec$ over ${\cal P}([n])$ and prove:

\begin{rtheorem}[Theorem~\ref{Thm:effEnum} {\normalfont (Restated)}]
  Let $T_{p,k}$ be enumerated as $T_{p,k} = \{S_1, \ldots, S_{\abs{T_{p,k}}}\}$ where $S_1 \prec \dots \prec S_{\abs{T_{p,k}}}$. Given any integer $I \in \set{1, \dots, \abs{T_{p,k}}}$ and random access to the elements of the table~$t_p$, the set $S_I$ can be computed in time~$\bo{n^2}$.
\end{rtheorem}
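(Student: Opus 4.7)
The plan is \emph{unranking}: given an index $I \in \set{1, \dots, \abs{T_{p,k}}}$, use the table $t_p$ to reconstruct $S_I$ one element at a time, deciding for $i = n, n-1, \dots, 1$ whether $i \in S_I$. I would define $\prec$ as the reverse lexicographic order on characteristic vectors read from the largest index downward: $S \prec T$ whenever $i^* = \max \set{i : \ind{i \in S} \neq \ind{i \in T}}$ satisfies $i^* \notin S$ and $i^* \in T$. With this choice, the DP recurrence
\[t_p[i,j] = t_p[i-1,j] + t_p[i-1, (j - a_i) \bmod p]\]
corresponds to a clean partition of $T^{(i)}_{p,j} = \set{S \subseteq \set{1,\dots,i} : \sum_{s \in S} a_s \equiv j \pmod p}$: the first $t_p[i-1,j]$ elements in $\prec$-order are exactly those that do not contain $i$, and they inherit the $\prec$-order of $T^{(i-1)}_{p,j}$; the remaining $t_p[i-1, (j-a_i) \bmod p]$ elements are those that do contain $i$, and the map $S \mapsto S \setminus \set{i}$ is a $\prec$-order-preserving bijection onto $T^{(i-1)}_{p, (j-a_i) \bmod p}$. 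Verifying this partition/bijection property from the definition of $\prec$ is a routine first step.

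The unranking algorithm maintains a pair $(j, I')$ with the invariant that $S_I \cap \set{1, \dots, i}$ is the $I'$th element of $T^{(i)}_{p,j}$ in $\prec$-order, starting from $(k, I)$ at stage $i = n$. At each stage, compare $I'$ with $t_p[i-1, j]$: if $I' \leq t_p[i-1, j]$, then $i \notin S_I$ and we pass $(j, I')$ to stage $i - 1$; otherwise $i \in S_I$ and we pass $\pt{(j - a_i) \bmod p,\; I' - t_p[i-1, j]}$ to stage $i - 1$. After stage $i = 1$, every decision has been made and $S_I$ has been determined. Correctness is a straightforward induction on $i$ using the partition claim above.

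For the running time, there are $n$ stages; each performs $O(1)$ arithmetic operations modulo $p$ and $O(1)$ random-access lookups into $t_p$. Since the entries of $t_p$ and the residues $j$ are bounded by $p < 2^{bn+1}$, each such operation costs $\bo{n}$ bit operations, giving $\bo{n^2}$ overall. I expect no real obstacle: once $\prec$ is chosen so that the DP recurrence directly reflects the partition induced by the largest element, the procedure is a standard combinatorial unranking, and the only real care needed is to verify the order-preserving-bijection property that makes the recursion well-defined.
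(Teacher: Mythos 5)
Your proposal matches the paper's proof essentially line for line: the same order $\prec$ (largest differing index decides), the same unranking recursion that compares $I'$ against $t_p[i-1,j]$ and branches on whether $i$ enters the set, and the same invariant (your ``$S_I\cap\{1,\dots,i\}$ is the $I'$th element of $T^{(i)}_{p,j}$'' is exactly the paper's second inductive statement). One small slip in your running-time analysis: the entries of $t_p$ are \emph{not} bounded by $p$ (they count subsets of $\{1,\dots,i\}$ and can be as large as $2^n$, e.g.\ $t_p[n,k]\approx 2^n/p\gg p$ when $p<2^{n/2}$), but they are still $\bo{n}$-bit integers, so your $\bo{n^2}$ conclusion is unaffected.
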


This novel data structure will be used in our algorithms for \Sub, \SSub\ and \PEqu. We now describe the additional quantum tools we use for \SSub. The algorithm randomly chooses a bin of size about $2^{(1-b)n}$ where $b$ is defined differently depending on whether~$\ell$ is above or below~3/5, as different quantum tools are required in these two regions. When $\ell \leq 3/5$, with high probability a random bin will contain multiple solutions from which we can profit. To that end, we construct a quantum algorithm for finding a pair marked by a binary relation $\rel(x,y) := \ind{f(x) = g(y)}$ that tests if two values $f(x)$ and $g(y)$ are equal or not. Our algorithm generalizes the quantum element distinctness~\cite{Amb07j} and claw finding~\cite{Tan09j} algorithms to the case of multiple marked pairs. Using an appropriate variant of the birthday paradox (see Lemma~\ref{Lem:BPclaw}) we prove:

\begin{rtheorem}[Theorem~\ref{Thm:PairFinding} {\normalfont ({\sc Quantum pair finding} - Restated)}]
  Consider two sets of $N \leq M$ elements, respectively, and an evaluation function on each set.
  Suppose that there are $K$ disjoint pairs in the product of the two sets such that in each pair the elements evaluate to the same value. There is a quantum algorithm that finds such a pair in time
  $\wbo[\big]{(NM/K)^{1/3}}$ if $N \leq M \leq KN^2$ and $\wbo[\big]{(M/K)^{1/2} }$ if $M \geq KN^2$.
\end{rtheorem}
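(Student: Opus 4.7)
The plan is to reduce \emph{quantum pair finding} to two building blocks: the single-solution claw-finding algorithm of Tani~\cite{Tan09j} applied to a random subsample (first regime), and a classical-table-plus-Grover strategy (second regime). Write the two sets as $A$ and $B$ with $\abs{A}=N\le M=\abs{B}$, and let $(x_1,y_1),\dots,(x_K,y_K)$ be the $K$ disjoint pairs with $f(x_i)=g(y_i)$.

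For the first regime $N\le M\le KN^2$, I would draw uniformly random subsets $A'\subseteq A$ and $B'\subseteq B$ of sizes $N'$ and $M'$ chosen so that $N'M'=\Theta(NM/K)$, and then invoke Tani's single-claw algorithm on $(A',B')$ at the cost $\widetilde O\bigl((N'M')^{1/3}\bigr)=\widetilde O\bigl((NM/K)^{1/3}\bigr)$. The birthday-paradox estimate of Lemma~\ref{Lem:BPclaw} is precisely what guarantees that, for this choice of product, at least one marked pair $(x_i,y_i)$ survives in $A'\times B'$ with constant probability, so a claw truly exists in the subsample. The product $N'M'$ is pinned but the split is free, and Tani's algorithm runs at the stated cost only when $N'\le M'\le (N')^2$. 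A short calculation shows that choosing $M'$ in the window $[\max(M/K,\sqrt{NM/K}),\,(NM/K)^{2/3}]$ (and $N'=NM/(KM')$) simultaneously satisfies $N'\le N$, $M'\le M$, $N'\le M'$, and $M'\le (N')^2$; the upper endpoint dominates the lower exactly when $M\le KN^2$, which is what delineates this regime.

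For the second regime $M\ge KN^2$, the balance above is infeasible, so I would switch to a Grover-based strategy: classically evaluate $f$ on all of $A$ and store the values in a sorted table in time $\widetilde O(N)$, then run Grover search over $y\in B$ with the marking predicate ``$g(y)$ appears in the table''. Since at least $K$ elements of $B$ are marked, Grover returns one in time $\widetilde O(\sqrt{M/K})$, and the assumption $M\ge KN^2$ gives $\sqrt{M/K}\ge N$, so the precomputation is absorbed and the total cost is $\widetilde O(\sqrt{M/K})$.

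The main obstacle is not the quantum ingredients—Tani's algorithm and Grover search are used as black boxes—but the parameter bookkeeping: one has to verify that the window for $M'$ above is nonempty exactly when $M\le KN^2$, that Lemma~\ref{Lem:BPclaw} applies to the chosen product $N'M'=\Theta(NM/K)$ of subsample sizes to guarantee a surviving pair with constant probability, and that the small failure probabilities of the subsampling step and Tani's algorithm can be boosted to $1-o(1)$ by $O(\log)$ independent repetitions, which only costs a logarithmic factor absorbed into the $\widetilde O(\cdot)$.
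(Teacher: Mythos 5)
Your proof of the second regime ($M \geq KN^2$) is essentially identical to the paper's: build a sorted classical table of $f$-values on $[N]$ in $\wbo{N}$ time, then Grover-search over $[M]$, and use $M \geq KN^2$ to show the table construction is dominated. For the first regime, however, you take a genuinely different route. The paper modifies the MNRS quantum-walk analysis directly: it walks on the product Johnson graph $J(N,r)\times J(M,r)$ with $r = (NM/K)^{1/3}$, and Lemma~\ref{Lem:BPclaw} is used exactly as stated (with the same $r$ on both sides) to lower-bound the fraction $\eps$ of marked walk vertices by $\Omega(r^2 K / (NM))$; the $\wbo{r + 1/\sqrt{\eps\delta}}$ cost of the MNRS framework then gives $\wbo{(NM/K)^{1/3}}$. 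You instead treat Tani's single-claw algorithm as a black box and reduce to it by random subsampling to sizes $N', M'$ with $N'M' = \Theta(NM/K)$, with the extra work of locating the window $[\max(M/K, \sqrt{NM/K}), (NM/K)^{2/3}]$ for $M'$ that simultaneously satisfies Tani's balance condition $N' \leq M' \leq (N')^2$, stays inside $[N]\times[M]$, and is nonempty precisely when $M \leq KN^2$. Your parameter bookkeeping checks out, and this cleanly explains the regime boundary. The trade-off is that the paper reuses one $r$-symmetric walk and applies Lemma~\ref{Lem:BPclaw} verbatim, whereas your reduction needs an asymmetric variant of that lemma ($N'$ samples from $[N]$, $M'$ from $[M]$, with $N'M' = \Theta(NM/K)$ rather than $r^2 = \Theta(NM/K)$) to conclude that a marked pair survives in $A'\times B'$ with constant probability. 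That generalization does go through — redoing the inclusion–exclusion gives a first term $N'M'K/(NM)$ and correction terms controlled by $M'/M \leq 1$ and $N'/N \leq 1$ — but it is an additional step you should state and prove rather than invoke the lemma ``precisely'' as you write. You should also note that Lemma~\ref{Lem:BPclaw} samples with replacement while you draw subsets, which only helps, and that relation variant 2 (excluding $x = y$) needs the same minor adjustment to Tani's checking predicate that it would need in the paper's walk.
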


The best complexity when $\ell \leq 3/5$ is then obtained by choosing the bin size parameter~$b$ as a function of $\ell$, which balances the cost of the construction of the dynamic programming table and the quantum pair finding. When $\ell > 3/5$, choosing a bin size $2^{(1-b)n}$, for $b \le 1-\ell$, guarantees that a random bin contains at least one solution with high probability. However, a better running time at first seems to be achievable by the following argument: Choose $b > 1-\ell$, for which there is an exponentially small probability that a random bin contains a solution, and use amplitude amplification to boost the success probability. Balancing again the dynamic programming and quantum pair finding costs would then give an optimal bin size of $2^{3n/5}$, independent of $\ell$. However, this argument contains a subtlety. Standard amplitude amplification requires that the random bin $T_{p,k}$ simultaneously satisfies two conditions: besides containing a solution, it should also have size close to the expected size of about $2^{(1-b)n}$. But there is no guarantee that these two events coincide, and a priori it could be that the exponentially small fraction of $T_{p,k}$ containing a solution also happens to have sizes that far exceed the expectation. Fortunately, by carefully bounding the expectation of the product of bin sizes, we can use the variable-time amplitude amplification algorithm of Ambainis~\cite{Amb12c}, and achieve the same running time as given by the above argument. We believe that this is a nice and natural application of this method. The running time of our algorithm for \SSub, as a function of $\ell$, is shown in Fig.~\ref{fig:quantum-ess-improved}.

\begin{figure}[htbp]
  \centering
  \includegraphics[width=0.65\textwidth]{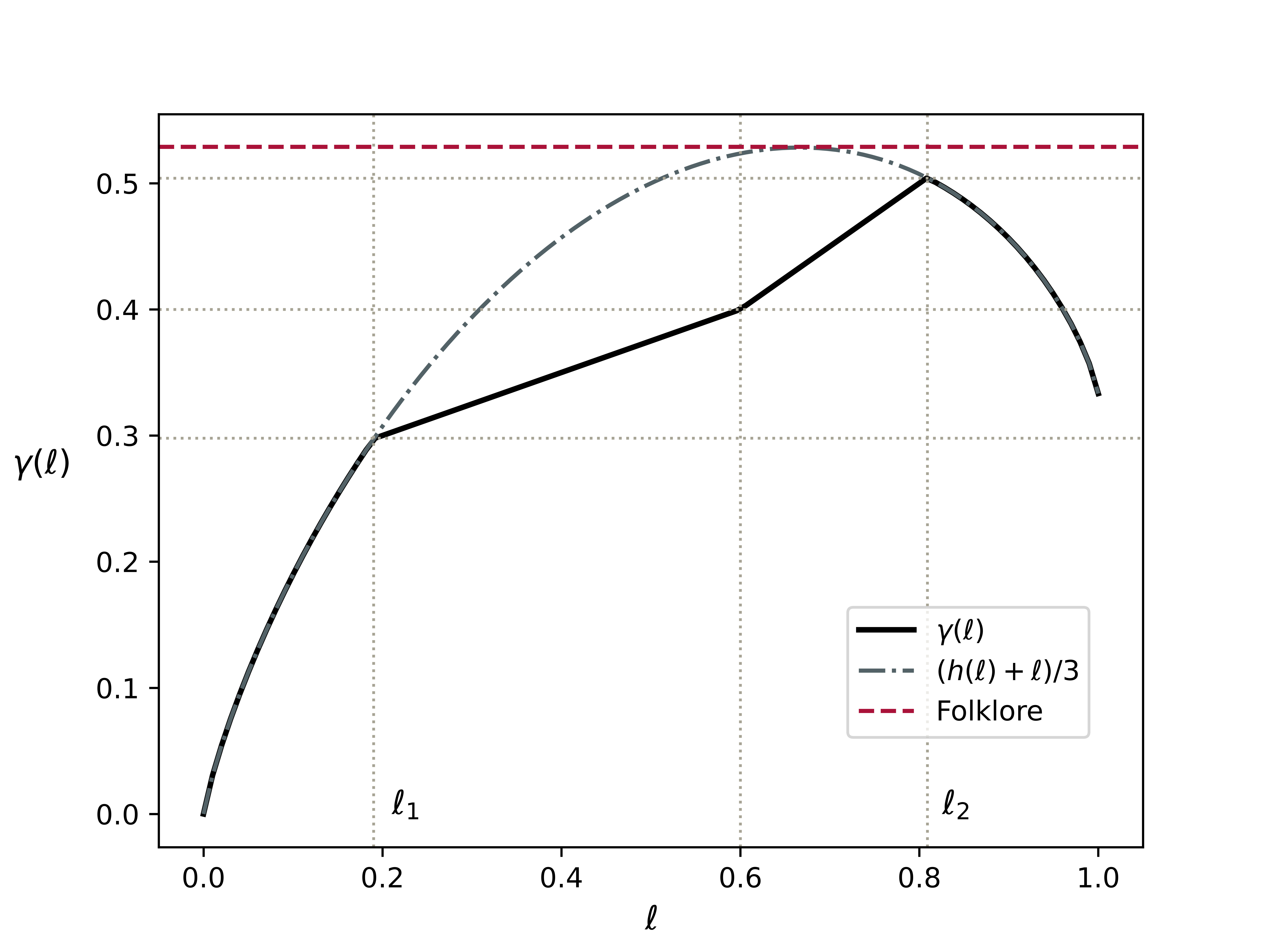}
  \caption{Running time exponent $\gamma(\ell)$ of the quantum \SSub\ algorithm, as a function of the maximum solution ratio $\ell$ (see Theorem~\ref{Thm:Dsum}). The maximum value of $\gamma(\ell)$ is $\approx 0.504$ and it occurs at $\ell=\ell_2\approx 0.809$. For reference, the curve $(h(\ell) + \ell)/3$ corresponding to Theorem~\ref{Thm:Mim-Dsum} is plotted for all values of $\ell$, as is the value of $0.529$ corresponding to the exponent of the folklore (quantized) meet-in-the-middle algorithm, as applied to \SSub.}
  \label{fig:quantum-ess-improved}
\end{figure}

\subparagraph*{\PEqu.}
This problem can be solved by any (classical or quantum) algorithm that solves the general \Equ\ (or \SSub) problem. However, one can make use of the explicit promise of $a_1 + \dots + a_n < 2^n-1$ to design faster algorithms than provided for by the general case when $\ell > 3/5$. Indeed, by the pigeonhole principle, for \emph{any} value of $p$, if a bin $T_{p,k}$ has a size larger than $2^n/p$ then it must contain a solution (see Lemma~\ref{lem:pigeonhole-collision}). Moreover, there must exist at least one such oversized bin. The array~$t_p$ can now be constructed both for \emph{locating the index $k$} of one oversized bin and searching for a solution in it. We thus obtain a classical algorithm running in time $\wbo{p + 2^n/p}$, and a quantum algorithm running in time $\wbo{p + (2^n/p)^{2/3}}$. These two quantities are minimized by \emph{deterministically} choosing $p = 2^{n/2}$ and $p = 2^{2n/5}$ respectively (see Section~\ref{sec:non-mod}).

\subparagraph*{\PMEqu.}
In general, we do not know how to extend our techniques to the problems modulo some integer~$q$. A natural approach would be to consider the bins $T_{p,k} = \set*{S\subseteq \{1, \ldots , n\} : \pt{\sum_{i \in S} a_i \bmod q} \equiv k \pmod p}$, but it is unclear how to compute the corresponding table $t_p$ efficiently. We give a solution to this problem for \PMEqu\ that works in the classical setting only (see Section~\ref{Sec:PMAlgo}).


\subsection{Related works}

The closest work to our contribution is the paper of Mucha et al.~\cite{MNPW19c} solving \Equ\ classically in time $\bo{2^{0.773n}}$. Their algorithm and ours use the same two basic procedures, based respectively on the meet-in-the-middle method and the representation technique. Let us point out some of the differences. Unlike our algorithm that is based on the concept of the size of a maximum solution, the classical algorithm is analysed as a function of a {\em minimum size} solution, defined as $\abs{S_1} + \abs{S_2}$, where this sum is minimized over all solutions. The use the classical algorithm makes from a minimum solution $S_1, S_2$ of size~$\ell n$ is that when $\ell > 1/2$ the number of {\em solution values} can be bounded from below by $2^{(1 - \ell) n}$. However, this  does not hold for \TSub\ when $\ell n$ is the size of a minimum solution, but \textit{is} valid for both \Equ\ and \TSub\ when it is the size of a maximum solution. Another difference with~\cite{MNPW19c} is that their classical representation technique  algorithm always samples~$p$ from the same set $\set{2^{(1 - \ell)n}, \dots,  2^{(1 - \ell)n + 1}}$, while we randomly choose $p\in \set{2^{bn}, \dots, 2^{bn +1}}$ where $b$ is defined differently depending on in which of two distinct regions $\ell$ lies. This makes it possible to use different quantum techniques in these two regions.

The representation technique was designed by Howgrave-Graham and Joux~\cite{HJ10c} to solve random \Sub\ instances under some hypotheses (heuristics) about how such instances behave during the run of the algorithm. The idea is to decompose a single solution to the initial problem into many distinct decompositions of a sum of half-solutions. To compensate for this blow-up, an additional linear constraint is added to select approximately one of these decompositions. Under some rather strong assumptions, which are satisfied for a large fraction of randomly chosen instances, \cite{HJ10c} can solve \Sub\ instances in time $\bo{2^{0.337n}}$. Since then, several variants of this classical method have been proposed~\cite{BCJ11c, EM20c, BBSS20c, CJRS22c}, while others have investigated quantum algorithms based on the representation technique. Bernstein et al.~\cite{BJLM13c} improved on~\cite{HJ10c} using quantum walks, and their algorithm (again under some hypotheses) runs in time $\bo{2^{0.242n}}$. Further quantum improvements were made in this context by~\cite{HM18c} and~\cite{BBSS20c}. However, we emphasize that the algorithms in all these papers work for random inputs generated from some distributions. The paper~\cite{MNPW19c} gave the first classical algorithm based on the representation technique that works for worst-case inputs and with proven bounds. To our knowledge, for worst-case inputs with provable guarantees, the first quantum algorithm based on the representation technique is given in our work.

Dynamic programming is notoriously hard to quantize, with a key obstacle being the intrinsically sequential way in which the solution to a large problem is constructed from the solutions to smaller subproblems. Certain basic dynamic programming algorithms can be trivially accelerated by quantum search or minimum finding (see, e.g.~\cite{Abb19c}) but beyond that few other quantum improvements are known. One notable exception is the work of Ambainis et al.~\cite{ABI19c} who gave faster quantum algorithms for several NP-hard problems for which the best classical algorithms use dynamic programming. Their algorithms precompute solutions for smaller instances via dynamic programming and then use non-trivial quantum search recursively on the rest of the problem. In our work, the dynamic programming subroutine that we use is classical (although for \SSub\ it is performed in superposition) and the sequential nature of the process is therefore not an issue. Rather than using quantum computing to accelerate classical dynamic programming, we instead use dynamic programming to enable fast queries, required for quantum search and pair finding, to be performed on complicated sets.


The class PPP is arguably less studied than other syntactically definable subclasses of TFNP, such as PLS (Polynomial Local Search) and PPA (Polynomial Parity Argument), and it is not known whether \PEqu\ is complete in PPP. In fact, the first complete problem for the class was only identified relatively recently~\cite{SZZ18c}. Our results for \PEqu\ suggest that the problem is indeed simpler to solve than \Equ. In spirit, a similar result was obtained in~\cite{BST02j} where it was shown that, for an optimization problem closely related to \Equ, better approximation schemes can be obtained for instances with guaranteed solutions.


\subsection{Open problems}

We suggest two directions for future work on the modular versions of the problems studied in this paper:

\begin{enumerate}
  \item Quantization of \PMEqu. That is, can we construct a quantum algorithm improving on the classical running time given in Theorem~\ref{thm:phPMSub}?
  \item Can we prove a modular version of the algorithm of~\cite{MNPW19c}? That is, can \MEqu\ be solved classically in time $\bo{q^{0.773}}$?
\end{enumerate}


\subsection{Structure of the paper}

The paper is organised as follows. In Section~\ref{sec:Prelim} we define the quantum computational model and some basic notations, and we state several facts, propositions and algorithmic tools (such as our quantum pair finding algorithm) used in subsequent sections. In Section~\ref{Sec:DynProg} we introduce the dynamic programming data structure and show how it can be used to implement fast subset-sum queries. We present a simple application of this data structure to the \Sub\ problem in Section~\ref{Sec:Subset}, where we obtain new classical and quantum algorithms achieving the same complexity as the best-known algorithms based on meet-in-the-middle. The main applications are described in Section~\ref{Sec:DSum} for the \SSub\ problem, and in Section~\ref{sec:pigeonhole} for the pigeonhole variants of \Equ. Finally, we adapt our results to the classical \SSub\ problem in Appendix~\ref{app:classical-shifted-sums}, and we describe an alternative quantum algorithm for \Equ\ whose complexity is parametrized by the minimum solution ratio in Appendix~\ref{app:quantum-equal}.

\section{Preliminaries}
\label{sec:Prelim}

\subsection{Notations}

We use the $\wbo{x}$ and $\wom{x}$ notations to hide factors that are polylogarithmic in the argument~$x$. For integers $0 \leq m <n$, we denote by $[m \tdots n]$ the set $\{m, m+1, \ldots n\}$, and by $[n]$ the set~$[1 \tdots n]$. For sets $S,S' \subseteq [n]$ we denote by $\bar{S}$ the complement $[n] \setminus S$, and by $S \Delta S'$ the symmetric difference of~$S$ and~$S'$. Given a multiset $A = \{a_1, \ldots, a_n\}$ and a subset $S\subseteq [n]$ we denote $\Sigma_A(S) := \sum_{i\in S} a_i$. When the set $A$ is clear from the context, we will omit the subscript and simply denote the subset sum by $\Sigma(S)$. The power set of $[n]$  will be denoted by ${\cal P}([n]) := \{S: S \subseteq[n]\}$. For arbitrary integers $a$ and $b$ and a modulus~$p$ we say that $a$ is congruent to $b$ modulo $p$, and we write $a \equiv b \pmod p$ or $a \equiv_p b$ if $a - b$ is divisible by~$p$. By $a \bmod p$ we denote the unique integer in $\{0, \ldots , p-1\}$ that is congruent to $a$ modulo~$p$. The binary entropy function will be denoted by $h(x) = -x\log_2(x) - (1-x)\log_2(1-x)$.


\subsection{Quantum computational model}

Similar to previous works on quantum element distinctness~\cite{Amb07j}, quantum dynamic programming~\cite{ABI19c} and quantum subset sum algorithms~\cite{BJLM13c}, in our quantum algorithm running time analysis we assume the standard circuit model (where computational time corresponds to the number of single and two-qubit gates) augmented with random access to quantum memory. That is, coherent access to any element of an $m$-qubit array can be performed in time polylogarithmic in $m$. Note that fully quantum memory is required in Algorithm~\ref{Algo:Rep-Dsum} for \SSub\ since multiple bins $T_{p,k}$ must be computed and stored in superposition.  On the other hand,  Algorithm~\ref{Algo:Quantum-DP-SS} for \Sub\ only requires a single bin $T_{p,k}$ to be searched. Thus, while the memory cells must be accessed in superposition, the data that each cell holds is classical.


\subsection{Basic Facts}

\begin{fact}[\cite{Gal68b}, page 530]
  \label{fact:entropy}
  For every constant $\ell \in (0,1)$ and for every large enough integer $n$, the following bounds hold:
    $\frac{2^{n h(\ell)}}{\sqrt{8 n \ell(1-\ell)}}
  \leq {\binom{n}{\ell n}} <  \frac{2^{n h(\ell)}}{\sqrt{2\pi n \ell(1-\ell)}}$.
\end{fact}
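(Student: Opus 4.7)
The plan is to derive both bounds from Stirling's approximation in the refined form due to Robbins: for every positive integer $m$,
\[
\sqrt{2\pi m}\,(m/e)^m\, e^{1/(12m+1)} \;<\; m! \;<\; \sqrt{2\pi m}\,(m/e)^m\, e^{1/(12m)}.
\]
Since the binomial coefficient is only defined at integer lower arguments, I would first reduce to the case $k := \ell n \in \mathbb{N}$; when $\ell n$ is not an integer one interprets $\binom{n}{\ell n}$ as $\binom{n}{\lfloor \ell n \rfloor}$ and absorbs the resulting $O(1/n)$ multiplicative perturbation into the ``$n$ large enough'' quantifier (this is the one slightly tedious bookkeeping step).

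Writing $\binom{n}{k} = n!/(k!(n-k)!)$ and plugging Stirling in, the algebraic main term separates cleanly. The exponential part collapses to the entropy via
\[
\frac{(n/e)^n}{(k/e)^k\,((n-k)/e)^{n-k}} \;=\; \ell^{-\ell n}(1-\ell)^{-(1-\ell)n} \;=\; 2^{n h(\ell)},
\]
and the $\sqrt{2\pi m}$ prefactors combine to give the polynomial factor
\[
\frac{\sqrt{2\pi n}}{\sqrt{2\pi k}\,\sqrt{2\pi(n-k)}} \;=\; \frac{1}{\sqrt{2\pi n\,\ell(1-\ell)}}.
\]

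For the \emph{upper} bound, I would use the Stirling upper bound on $n!$ in the numerator and the lower bounds on $k!$ and $(n-k)!$ in the denominator, leaving a residual correction
\[
\exp\!\Bigl(\tfrac{1}{12n} - \tfrac{1}{12k+1} - \tfrac{1}{12(n-k)+1}\Bigr).
\]
Since $\tfrac{1}{\ell}+\tfrac{1}{1-\ell} > 1$ for every $\ell \in (0,1)$, this exponent is negative for all sufficiently large $n$, so the correction factor is strictly less than $1$ and one obtains $\binom{n}{\ell n} < 2^{n h(\ell)}/\sqrt{2\pi n\,\ell(1-\ell)}$.

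For the \emph{lower} bound, I would reverse the roles of the Stirling inequalities. The residual correction factor becomes $e^{-c(\ell)/n}$ for some positive $c(\ell)$ depending only on $\ell$. Noting that $\sqrt{2\pi}/\sqrt{8} = \sqrt{\pi/4} \approx 0.886 < 1$, the slack between $\sqrt{2\pi}$ and $\sqrt{8}$ in the denominator of the claimed bound exactly absorbs this $O(1/n)$ correction for all $n$ large enough, yielding $\binom{n}{\ell n} \ge 2^{n h(\ell)}/\sqrt{8 n\,\ell(1-\ell)}$. The only real obstacle is verifying that these correction factors stay on the right side of $1$ respectively $\sqrt{\pi/4}$ uniformly in $n$ once $n$ exceeds an $\ell$-dependent threshold; everything else is routine symbolic manipulation, and indeed this is the formulation given in Gallager's textbook cited in the paper.
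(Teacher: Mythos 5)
Your Stirling/Robbins derivation is correct and is precisely the standard argument that appears in Gallager's textbook, which the paper simply cites without reproducing a proof. The one place worth flagging is that the constant in the lower bound ($\sqrt{8}$ rather than $\sqrt{2\pi}$) is there exactly to absorb the $e^{-c(\ell)/n}$ Stirling correction once $n$ exceeds an $\ell$-dependent threshold, which you correctly identify as the only nontrivial bookkeeping step.
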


\begin{fact}
  \label{fact:primedivide}
  Let $b>0$ be a constant, $n$ a large enough integer and $a_1 \neq a_2$ two integers. Then, for a random prime $p \in [2^{bn} \tdots 2^{bn +1} ]$ we have $\Pr_p\bc{a_1 \equiv_p a_2} \leq \frac{\log\pt{\abs{a_1-a_2}}}{2^{bn}}$.
\end{fact}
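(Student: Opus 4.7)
The plan is to reduce the statement to two standard number-theoretic ingredients: an upper bound on the number of distinct prime divisors of a nonzero integer, and a lower bound on the number of primes in the dyadic interval $[2^{bn} \tdots 2^{bn+1}]$. Setting $d := \abs{a_1 - a_2} \geq 1$, the event $a_1 \equiv_p a_2$ is equivalent to $p \mid d$, so the task is to bound the fraction of primes $p \in [2^{bn} \tdots 2^{bn+1}]$ that divide $d$.

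First I would bound the numerator. Since every prime divisor of $d$ is at least $2$ and a product of distinct prime divisors of $d$ still divides $d$, the number of distinct prime divisors of $d$ is at most $\log_2 d$. Next I would bound the denominator using classical estimates on the prime-counting function (the Rosser--Schoenfeld refinement of the Prime Number Theorem, or even elementary Chebyshev-type bounds), which yield that for $n$ large enough the number of primes in $[2^{bn} \tdots 2^{bn+1}]$ is at least $\Omega(2^{bn}/n)$. Taking the ratio, the probability that a uniformly random prime from the interval divides $d$ is at most $\bo{n \log(d) / 2^{bn}}$. This matches the claimed bound up to the factor of $n$, which I would absorb into the $\log$ term (note that in all applications $d \leq 2^{\poly(n)}$, so $n$ is comparable to or dominated by $\log d$).

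There is no real obstacle here, since both ingredients are textbook. The only delicate point is how cleanly one wants to match the exact form $\log(\abs{a_1-a_2})/2^{bn}$: a fully tight constant-free statement would require invoking $\pi(x)$ explicitly and possibly trading the crude combinatorial bound $\log_2 d$ for the sharper $\omega(d) = \bo{\log d / \log \log d}$, but such refinements are unnecessary for how this fact is deployed in the paper (where it is only applied under union bounds inside $\wbo{\cdot}$-style running time analyses).
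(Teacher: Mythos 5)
Your approach is the right shape, but there is a concrete gap: you bound the \emph{total} number of prime divisors of $d := \abs{a_1 - a_2}$ by $\log_2 d$, whereas what you actually need (and what the paper uses) is the number of prime divisors of $d$ \emph{that lie in the interval $[2^{bn} \tdots 2^{bn+1}]$}, or at least that exceed $2^{bn}$. Since any $k$ distinct primes each $\geq 2^{bn}$ have product $\geq 2^{bnk}$, at most $\log(d)/(bn)$ of them can divide $d$. Pairing this with the prime-counting lower bound of $2^{bn}/(bn)$ for the interval (which the paper cites from Hardy--Wright), the two $bn$ factors cancel and you get exactly $\log(d)/2^{bn}$, with no leftover factor of $n$ and no hidden constants. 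Your version, by contrast, lands on $O(n\log(d)/2^{bn})$, which is weaker than the statement by a factor of $\Theta(n)$; you recognize this and argue it is harmless in the applications, but that is a justification for why a weaker fact would suffice elsewhere in the paper, not a proof of the fact as stated. The fix does not require $\pi(x)$ asymptotics beyond Chebyshev nor the Hardy--Ramanujan bound on $\omega(d)$ that you float at the end --- it only requires noticing that you should count large prime divisors, not all of them.
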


\begin{proof}
  The number of primes that belong to the interval $[2^{bn} \tdots 2^{bn +1} ]$ is at least $2^{bn}/bn$ for $n$ large enough (see \cite[p.371]{HW75b}). Moreover, there are at most $\frac{\log\pt{\abs{a_1-a_2}}}{bn}$ prime numbers larger than $2^{bn}$ that divide $a_1 - a_2$. The result follows by a union bound.
\end{proof}


\subsection{Size preserving reductions}

We say that a polynomial time reduction between two problems is {\em size preserving} if it preserves the number of input items. The following propositions justify the assumption in our algorithms that the input items are bounded by $2^{4n}$.

\begin{proposition}
  \label{Prop:size}
  There exist size preserving reductions from \Equ\ to \SSub\ and from \TSub\ to \SSub.
\end{proposition}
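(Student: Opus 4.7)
The first reduction, from \Equ\ to \SSub, is immediate: an \Equ\ instance $\{a_1,\dots,a_n\}$ is mapped to the \SSub\ instance with the same multiset and shift $s=0$. A \SSub\ solution $(S_1,S_2)$ with $S_1\neq S_2$ and $\Sigma(S_1)=\Sigma(S_2)$ is, by definition, an \Equ\ solution, and $n$ is unchanged.

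For the second reduction, from \TSub\ to \SSub, the key algebraic trick is the substitution $e_i = 1+f_i$ with $f_i\in\{-1,0,1\}$. Letting $T=\sum_{i=1}^n a_i$, the \TSub\ equation $\bar a\cdot \bar e = m$ becomes $\bar a\cdot\bar f = m-T$, and writing $S_1=\{i:f_i=1\}$, $S_2=\{i:f_i=-1\}$ (disjoint by construction), this reads $\Sigma(S_1)-\Sigma(S_2)=m-T$. The plan is therefore, given a \TSub\ instance $\{a_1,\dots,a_n\}$ with target $0<m<2T$, to distinguish three cases: if $m=T$, output $\bar e=(1,\dots,1)$ directly; if $m>T$, build the \SSub\ instance with the same multiset and $s=m-T$; if $m<T$, build the \SSub\ instance with $s=T-m$ and swap the two returned subsets afterwards. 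The hypothesis $0<m<2T$ is exactly what guarantees $0\le s < T = \sum_i a_i$, so the \SSub\ instance is valid.

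To turn a \SSub\ solution into a \TSub\ solution, note that \SSub\ returns possibly non-disjoint subsets $(S_1,S_2)$; I would replace them by $S_1'=S_1\setminus S_2$ and $S_2'=S_2\setminus S_1$. Subtracting the shared part $\Sigma(S_1\cap S_2)$ from both sides of $\Sigma(S_1)=s+\Sigma(S_2)$ yields $\Sigma(S_1')-\Sigma(S_2')=s$, with $S_1'\cap S_2'=\emptyset$. Define $e_i=2$ for $i\in S_1'$, $e_i=0$ for $i\in S_2'$, and $e_i=1$ otherwise; then $\bar a\cdot\bar e = T+(\Sigma(S_1')-\Sigma(S_2')) = T+s = m$ (or $T-s=m$ in the swapped case), which is a valid \TSub\ solution. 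Both reductions run in polynomial time and leave $n$ untouched, hence are size preserving. No step presents any real obstacle; the only point to verify carefully is the bound $0\le s<T$ coming from the assumption $0<m<2T$ and the handling of the boundary case $m=T$ to avoid producing the trivial \SSub\ solution $S_1=S_2=\emptyset$.
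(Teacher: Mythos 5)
Your proof is correct and takes essentially the same approach as the paper: both reduce \Equ\ by setting $s=0$, and both reduce \TSub\ via the substitution $e_i = 1 + f_i$ with $f_i\in\{-1,0,1\}$ (the paper writes this as $e_i = e_i^1 - e_i^2 + 1$), obtaining the shift $s = |m - W|$ and handling $m=W$ separately. You are slightly more explicit than the paper about two small points — replacing a possibly non-disjoint \SSub\ solution $(S_1,S_2)$ by $(S_1\setminus S_2, S_2\setminus S_1)$, and why $0<m<2T$ forces $0\le s<T$ — which the paper leaves implicit, but the underlying argument is identical.
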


\begin{proof}
  In fact, the claimed reductions are not only size preserving, but also keep the input items $\{a_1, \ldots, a_n\}$ intact. \Equ\ is simply the restriction of \SSub\ to the case $s=0$. Consider an instance of \TSub, and set $W = \sum_{i=1}^n a_i$. We can suppose, without loss of generality, that $W < m < 2W$ because $\sum_{i=1}^n a_i \cdot e_i = m$ if and only if $\sum_{i=1}^n a_i \cdot (2 - e_i )= 2W -m$, and the $m=W$ case is trivial.

  We claim that $\bar{a} = (a_1, \ldots, a_n), m$ is a positive instance of \TSub\ if and only if $\bar{a}, m-W$ is a positive instance of \SSub. For a vector $\bar{e} \in \{0,1, 2\}^n$, define $e_i^1 = 1$ if $e_i = 2$ and 0 otherwise, and similarly define $e_i^2 = 1$ if $e_i = 0$ and 0 otherwise. Then $\bar{e}^1, \bar{e}^1 \in \{0,1\}^n$ and  $e_i = e_i^1 - e_i^2 +1$ implying $\bar{a} \cdot \bar{e} = m$ if and only if $\bar{a} \cdot \bar{e}^1 = \bar{a} \cdot \bar{e}^2 + m - W.$ Therefore $\bar{e}$ is a solution of \TSub\ if and only if $\bar{a}, m-W$ is a solution of \SSub.
\end{proof}

Given a positive instance of \TSub, the above reduction produces a positive instance of  \SSub\ with $s \neq 0$.
It is easy to check that the reduction also works in the reverse direction. Taken together, this implies that \SSub\ with $s \neq 0$ is just a reformulation of \TSub\ and, as we have already observed, with $s = 0$ is exactly \Equ.

\begin{proposition}
  \label{Prop:selfsize}
  There exists a probabilistic size preserving reduction from \SSub\ to an instance of \SSub\ where the input items satisfy $\sum_{i=1}^n a_i < 2^{4n}$. A similar statement holds also for \Sub.
\end{proposition}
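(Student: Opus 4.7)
The plan is to use a random-prime hashing reduction. Given a \SSub\ instance $(a_1,\dots,a_n,s)$ with $\sum_i a_i \geq 2^{4n}$ (otherwise there is nothing to do), I would sample a random prime $p$ uniformly from $[2^{3n+1} \tdots 2^{3n+2}]$, which by the prime number theorem contains $\Omega(2^{3n}/n)$ primes for $n$ large enough. The reduced input keeps the same $n$ items but replaces each $a_i$ by $a_i' := a_i \bmod p$, so that the new total satisfies $\sum_i a_i' < n \cdot 2^{3n+2} < 2^{4n}$ for $n \geq 5$, which is the desired size bound. Since in \SSub\ the difference $\Sigma(S_1) - \Sigma(S_2)$ is invariant under removing $S_1 \cap S_2$ from both sides, I may restrict attention to disjoint pairs $(S_1,S_2)$ throughout the analysis.

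For correctness, I would argue that with overwhelming probability over the choice of $p$, the genuine solutions of the original instance coincide exactly with the \emph{modular} solutions of the reduced one. Each disjoint pair of subsets of $[n]$ is naturally encoded by a ternary vector in $\{-1,0,1\}^n$, giving at most $3^n$ candidates. For any such pair that is not a genuine solution, the integer $\Delta := \Sigma_a(S_1) - \Sigma_a(S_2) - s$ is nonzero with $|\Delta| \leq 2\sum_i a_i$. Fact~\ref{fact:primedivide} then bounds $\Pr_p[p \mid \Delta] \leq \log(2\sum_i a_i)/2^{3n+1}$, and a union bound over the $3^n$ candidates yields total spurious-collision probability at most $3^n \cdot \log(2\sum_i a_i)/2^{3n+1}$, which is exponentially small whenever $\log\sum_i a_i$ is polynomially bounded in the input size (the standard computational assumption). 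On this event, a pair $(S_1,S_2)$ is a solution of the original iff $\Sigma_{a'}(S_1)-\Sigma_{a'}(S_2)\equiv s\pmod p$. Since $|\Sigma_{a'}(S_1)-\Sigma_{a'}(S_2)| \leq np$, only $O(n)$ targets of the form $s - kp$ with $|k| \leq n$ can witness this congruence; each of them can be fed as the target of a \SSub\ instance on the reduced items $a_i'$, and any returned pair is verified against the original to discard residual failures.

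The analogous statement for \Sub\ follows via the identical recipe, union-bounding over $2^n$ subsets rather than $3^n$ disjoint pairs, and reducing the target $m$ to $m \bmod p$ modulo the chosen $p$. The main technical point I expect to be careful about is the union bound itself: the prime range must be large enough that $2^{3n}$ dominates the $3^n$ (respectively $2^n$) candidate differences weighted via Fact~\ref{fact:primedivide}, while the upper bound $p \leq 2^{3n+2}$ must keep $n \cdot p$ strictly below $2^{4n}$. The slack between $2^{3n}$ (enough to beat the ternary/binary candidates) and $2^{4n}$ (the target sum bound) is exactly what makes both constraints simultaneously satisfiable, so I do not foresee further subtleties beyond bookkeeping on the $O(n)$ candidate targets.
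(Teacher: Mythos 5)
Your approach — hash the $a_i$'s through a random prime $p$ and union-bound spurious congruences over the $3^n$ candidate pairs via Fact~\ref{fact:primedivide} — is the same as the paper's, but you are actually more careful on two points where the paper's write-up is loose. First, the paper asserts that a solution of the original instance yields a solution of the reduced instance $(a_i \bmod p,\, s \bmod p)$; but that is not quite right, since a genuine equality $\Sigma_a(S_1)=s+\Sigma_a(S_2)$ only gives a \emph{congruence} $\Sigma_{a'}(S_1)\equiv s+\Sigma_{a'}(S_2)\pmod p$, and the integer difference may equal $s\bmod p$ shifted by any multiple $kp$ with $|k|=O(n)$. Your step of feeding each of the $O(n)$ candidate targets $s\bmod p + kp$ to a reduced \SSub\ instance and verifying the returned pair against the original is the correct fix. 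Second, the paper samples $p\in[2^{4n-1}\tdots 2^{4n}]$, which only guarantees $\sum_i a_i' < n\cdot 2^{4n}$ and slightly overshoots the stated bound; your choice $p\in[2^{3n+1}\tdots 2^{3n+2}]$ gives $\sum_i a_i' < n\cdot 2^{3n+2} < 2^{4n}$, which actually meets it. Both discrepancies are harmless for the paper's downstream uses (only $\log\sum_i a_i=O(n)$ is ever needed), but your version is airtight where the paper is hand-wavy.
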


\begin{proof}
  We prove the statement for \SSub, the proof for \Sub\ is analogous. Let $a_1, \dots, a_n, s$ be an instance of \SSub. Without loss of generality, we can suppose that $\sum_{i=1}^n a_i < 2^{2^n}$ because otherwise we can solve the instance in polynomial time of the input size. We choose a random prime $p \in \{2^{4n-1}, \ldots , 2^{4n} \}$ and define the reduced instance by $s' = s \bmod p$ and $a'_i = a_i \bmod p$. It is obvious that if the original instance has a solution then so does the reduced instance. We now claim that if the original instance does not have a solution, then the reduced instance has a solution only with probability at most $\bo{2^{-n}}$. This is because a random prime $p$ divides $\sum_{i \in S_1} a_i -  \sum_{i \in S_2} a_i - s$ with probability $\bo{\log(2^{2^n})/2^{4n}}$ when $(S_1,S_2)$ is not a solution (Fact~\ref{fact:primedivide}), and there are $\bo{2^{2n}}$ such pairs to consider.
\end{proof}


\subsection{Quantum algorithms}

We use the following generalization of Grover's search to the case of an unknown number of solutions.

\begin{fact}[{\sc Quantum search}, Theorem 3 in \cite{BBHT98j}]
  \label{fact:search}
  Consider a function $f : [N] \rightarrow \{0,1\}$ with an unknown number $K = \abs{f^{-1}(1)}$ of marked items. Suppose that~$f$ can be evaluated in time~$\tau$. Then, the quantum search algorithm finds a marked item in $f$ in expected time~$\wbo{\sqrt{N/K} \cdot \tau}$.
\end{fact}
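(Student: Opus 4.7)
The plan is to invoke the standard \textbf{BBHT} construction of Boyer, Brassard, H\o yer and Tapp. I would proceed in three steps. First, I would recall the amplitude evolution under Grover's iterate: starting from the uniform superposition over $[N]$ and applying $j$ iterations (each costing one evaluation of $f$ and $\polylog(N)$ additional gates in the random-access memory model) produces a state that, when measured, yields a marked item with probability $\sin^2\pt{(2j+1)\theta}$, where $\sin^2\theta = K/N$. If $K$ were known, choosing $j \approx \tfrac{\pi}{4}\sqrt{N/K}$ would succeed with probability $\Omega(1)$, giving the desired expected time after one verification query.

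Second, since $K$ is unknown, I would run rounds $r = 0, 1, 2, \dots$ with a geometrically growing guess $M_r = \ceil{c^r}$ for some constant $c \in (1,2)$. In round $r$, pick $j$ uniformly at random in $\set{0, \dots, M_r - 1}$, apply $j$ Grover iterations, measure, and evaluate $f$ once on the outcome to check whether it is marked. The key averaging identity
\[
  \frac{1}{M}\sum_{j=0}^{M-1} \sin^2\pt{(2j+1)\theta} \;=\; \frac{1}{2} - \frac{\sin(4M\theta)}{4M\sin(2\theta)}
\]
shows that this randomized round succeeds with probability at least $1/4$ whenever $M \geq 1/\sin(2\theta)$, i.e.\ once $M_r$ exceeds a constant multiple of $\sqrt{N/K}$.

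Third, I would bound the expected cost. Let $r^\star$ be the first round with $M_{r^\star} \geq C \sqrt{N/K}$ for the constant $C$ arising from the previous step. The cost incurred during rounds $0, \dots, r^\star$ is $\sum_{r \leq r^\star} \bo{M_r \cdot \tau}$, which telescopes to $\bo{c^{r^\star} \tau} = \bo{\sqrt{N/K}\cdot \tau}$ by a geometric series. From round $r^\star$ onwards each round succeeds independently with probability at least $1/4$, so in expectation only $\bo{1}$ further rounds are needed, each of cost at most $\bo{\sqrt{N/K}\cdot \tau}$. Summing yields expected running time $\bo{\sqrt{N/K}\cdot \tau}$, and the tilde in $\wbo{\cdot}$ absorbs the $\polylog(N)$ overhead from implementing the oracle via random access to quantum memory.

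The main obstacle, though modest given how classical the result is, lies in the averaging identity of the second step: it is what allows a crudely chosen uniform distribution of iteration counts over $\set{0, \dots, M_r - 1}$ to succeed with constant probability as soon as $M_r$ crosses the threshold $\sqrt{N/K}$, rather than requiring $M_r$ to be tuned close to the exact optimum $\tfrac{\pi}{4}\sqrt{N/K}$. Everything else reduces to the standard Grover analysis together with a geometric summation over the exponentially growing guesses.
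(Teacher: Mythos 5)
Your reconstruction is correct and is precisely the BBHT argument (Theorem 3 of Boyer, Brassard, H{\o}yer and Tapp) that the paper cites as a black-box fact without reproving: Grover amplitude evolution, the averaging identity $\frac{1}{M}\sum_{j=0}^{M-1}\sin^2((2j+1)\theta)=\frac{1}{2}-\frac{\sin(4M\theta)}{4M\sin(2\theta)}$ giving constant success probability once $M \geq 1/\sin(2\theta) = \Theta(\sqrt{N/K})$, and the geometric growth of iteration-count guesses summing to expected cost $O(\sqrt{N/K}\cdot\tau)$. Since the paper simply invokes the cited theorem, there is no distinct approach in the paper to compare against; your blind proof faithfully matches the source.
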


Given a classical subroutine with stopping time $\tau$ that returns a marked item with probability~$\rho$, we can convert it into a constant success probability algorithm with expected running time $\bo{\ex{\tau}/\rho}$ by repeating it $\bo{1/\rho}$ times. Ambainis proved a similar result for the case of quantum subroutines, with a dependence on the second moment of the stopping time $\tau$, and a Grover-like speed-up for the dependence on $\rho$.

\begin{fact}[{\sc Variable-time amplitude amplification}, Theorem 2 in~\cite{Amb12c}]
  \label{fact:vtaa}
  Let $\mathcal{A}$ be a quantum algorithm that looks for a marked element in some set.
  Let $\tau$ be the random variable corresponding to the stopping time of the algorithm, and let $\rho$ be its success
  probability. Then the variable-time amplitude amplification algorithm finds a marked element in the above set with
  constant success probability in maximum time $\wbo{\sqrt{\ex{\tau^2}/\rho}}$.
\end{fact}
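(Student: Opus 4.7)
The plan is to segment the computation of $\mathcal{A}$ into a geometric sequence of time windows and recursively amplify the ``still-running'' amplitude at the boundary of each window. Following Ambainis, I would first recast $\mathcal{A}$ in a \emph{flagged} model: attach a dedicated flag qubit that, at each gate, may be irreversibly set to \emph{stop}, after which subsequent gates act as identity on that branch. The random variable $\tau$ is then the step at which a branch is flagged, $\rho$ is the probability the final output is marked, and the central quantities of the analysis are the tail probabilities $p_k := \pr{\tau > 2^k}$.

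First I would partition the timeline into $L = \lceil \log_2 \tau_{\max} \rceil$ segments of lengths $2^{k-1}$, and write the state at the end of segment $k$ as $\ket{\psi_k} = \alpha_k\ket{\mathrm{stop}}\ket{\cdot} + \sqrt{p_k}\ket{\mathrm{go}}\ket{\phi_k}$. I would then isolate, as a standalone lemma, the elementary identity $\ex{\tau^2} = \Theta\bigl(\sum_k 2^{2k}(p_{k-1} - p_k)\bigr)$, which links the segment-wise analysis back to the second moment of $\tau$ appearing in the theorem.

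Next I would build amplified subroutines $\mathcal{A}_k$ by induction on $k$. The subroutine $\mathcal{A}_k$ runs $\mathcal{A}_{k-1}$, appends the $k$th segment, and then uses $\bo{\sqrt{p_{k-1}/p_k}}$ rounds of standard amplitude amplification (each calling $\mathcal{A}_{k-1}$ and its inverse, along with a reflector about the \emph{go}-subspace) to restore the \emph{go}-amplitude to $\Omega(1)$. This yields a recurrence $T_k = \bo{\sqrt{p_{k-1}/p_k}\pt{T_{k-1} + 2^k}}$ which telescopes to $T_L = \bo{\tfrac{1}{\sqrt{p_L}}\sum_k 2^k\sqrt{p_{k-1}}}$. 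A final outer amplification, invoked via Fact~\ref{fact:search} to handle the unknown marked amplitude inside the \emph{go}-subspace of $\mathcal{A}_L$, boosts the marked probability from $\rho$ to a constant and multiplies the cost by $\bo{1/\sqrt{\rho}}$.

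The hard part will be bounding $\sum_k 2^k\sqrt{p_{k-1}}$ by $\sqrt{\ex{\tau^2}}$ up to polylogarithmic factors: a naive Cauchy--Schwarz with the factorisation $2^k\sqrt{p_{k-1}} = 2^k\sqrt{p_{k-1}-p_k}\cdot\sqrt{p_{k-1}/(p_{k-1}-p_k)}$ leaves the unmanageable weight $\sum_k p_{k-1}/(p_{k-1}-p_k)$, so one must first rewrite $\sqrt{p_{k-1}}$ as a telescoping sum $\sum_{j \ge k}\pt{\sqrt{p_{j-1}}-\sqrt{p_j}}$, swap the order of summation, and only then apply Cauchy--Schwarz --- at which point the identity from the opening lemma produces $\sqrt{\ex{\tau^2}}$ up to a $\polylog(\tau_{\max})$ factor. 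The remaining subtleties (handling branches that never stop, defining the \emph{go}-reflector so that halted branches contribute no amplitude to subsequent amplification rounds, and absorbing the $\bo{\log \tau_{\max}}$ overhead from the $L$ segments) then fold cleanly into the $\wbo{\cdot}$ notation.
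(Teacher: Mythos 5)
The paper presents this statement as a \emph{Fact}, cited from Ambainis's Theorem 2 in~\cite{Amb12c}, and gives no proof of its own; there is therefore no in-paper argument to compare your sketch against, and for the purposes of this paper the right move is exactly what the authors do, namely cite the result. With that caveat, your outline is a plausible reconstruction of the skeleton of Ambainis's argument: the flagged stopping-time model, geometric windows with tail probabilities $p_k = \pr{\tau > 2^k}$, recursively amplified subroutines $\mathcal{A}_k$ using $\bo{\sqrt{p_{k-1}/p_k}}$ rounds per level, a telescoping cost recurrence, and a final outer amplification to lift the marked amplitude by $\bo{1/\sqrt{\rho}}$.

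It remains, however, a plan rather than a proof, and the parts you defer are exactly where the content lies. The unrolled recurrence $T_L = \bo[\big]{p_L^{-1/2}\sum_k 2^k\sqrt{p_{k-1}}}$ degenerates at the last level, where $p_L = 0$ because every branch halts by $\tau_{\max}$, so the handoff from the per-level amplification to the final $\bo{1/\sqrt{\rho}}$ round needs a sharper inductive invariant on the marked amplitude that $\mathcal{A}_L$ actually produces. The reflector must be built so that already-halted, possibly-marked branches survive the intermediate amplification rounds coherently; you note this difficulty but do not resolve it, and it is precisely where Ambainis's construction is subtle. And the step you label ``the hard part,'' bounding $\sum_k 2^k\sqrt{p_{k-1}}$ by $\wbo{\sqrt{\ex{\tau^2}}}$, is where the second-moment dependence is established, so it cannot simply be flagged. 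One further discrepancy worth recording: Ambainis's Theorem 2 in fact bounds the cost by $\bo[\big]{T_{\max}\sqrt{\log T_{\max}} + \sqrt{\ex{\tau^2}/\rho}\,\log^{1.5} T_{\max}}$; both the paper's phrasing and your sketch suppress the additive $T_{\max}$-dependent term, which is safe only when $T_{\max}$ does not exceed $\sqrt{\ex{\tau^2}/\rho}$ up to polylogarithmic factors.
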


The next result is a variant of the Birthday paradox over a product space $[N] \times [M]$, where at least $K$ \emph{disjoint} pairs are marked by some binary relation $\rel$. Two pairs $(x,y)$ and $(x',y')$ are said to be disjoint if $x \neq x'$ and $y \neq y'$. The disjointness assumption is made to simplify the analysis and will be satisfied in our applications.

\begin{lemma}[\sc Variant of the Birthday paradox]
  \label{Lem:BPclaw}
   Consider three integers $1 \leq K \leq N \leq M$.
   Let~$\rel : [N] \times [M] \ra \rn$ be a binary relation such that there exist at least $K$ mutually disjoint pairs $(x_1,y_1),\dots,(x_K,y_K) \in [N] \times [M]$ with $\rel(x_k,y_k) = 1$ for all $k \in [K]$. Given an integer $r \leq \bo{\sqrt{NM/K}}$, define $\eps(r)$ to be the probability of obtaining both elements from at least one marked pair when~$r$ numbers from~$[N]$ and~$r$ numbers from $[M]$ are chosen independently and uniformly at random. Then,
      $\eps(r) \geq \om*{\frac{r^2K}{NM}}$.
\end{lemma}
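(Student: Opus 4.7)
The plan is to apply the inclusion-exclusion (Bonferroni) inequality to the events $A_k$ that both endpoints of the $k$-th marked pair appear in our samples, and to exploit the disjointness hypothesis to control the pairwise intersection terms.

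\textbf{Setup.} Let the two independent samples be $X = (X_1, \dots, X_r) \in [N]^r$ and $Y = (Y_1, \dots, Y_r) \in [M]^r$, chosen uniformly at random (assumed with replacement; the without-replacement case is similar and only improves things). For each $k \in [K]$, let $A_k$ be the event that $x_k$ appears in $X$ and $y_k$ appears in $Y$. The event ``both elements of at least one marked pair are sampled'' contains $\bigcup_{k=1}^K A_k$, so it suffices to lower bound $\Pr[\bigcup_k A_k]$.

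\textbf{First moment.} By independence of $X$ and $Y$, one has $\Pr[A_k] = (1-(1-1/N)^r)(1-(1-1/M)^r)$. Using the elementary bound $1-(1-1/L)^r \geq r/L - r^2/(2L^2) \geq r/(2L)$ valid whenever $r \leq L$ (which holds since $r \leq \sqrt{NM/K} \leq \sqrt{NM} \leq M$ and likewise for $N$ up to the regime $K=1$, which we handle separately), we get
\[
  \Pr[A_k] \;\geq\; \frac{r^2}{4NM}, \qquad \text{so} \qquad \sum_{k=1}^K \Pr[A_k] \;\geq\; \frac{Kr^2}{4NM}.
\]

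\textbf{Second moment via disjointness.} Here the hypothesis $x_k \neq x_{k'}$ and $y_k \neq y_{k'}$ is essential. For $k \neq k'$, again by independence of $X$ and $Y$,
\[
  \Pr[A_k \cap A_{k'}] \;=\; \Pr[x_k, x_{k'} \in X]\cdot\Pr[y_k, y_{k'} \in Y].
\]
A standard inclusion-exclusion with $(1-2/N)^r \leq ((1-1/N)^r)^2$ gives $\Pr[x_k, x_{k'} \in X] \leq (1-(1-1/N)^r)^2 \leq (r/N)^2$, and likewise for $Y$. Hence
\[
  \sum_{k<k'} \Pr[A_k \cap A_{k'}] \;\leq\; \binom{K}{2}\cdot\frac{r^4}{N^2M^2} \;\leq\; \frac{K^2 r^4}{2N^2M^2}.
\]

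\textbf{Bonferroni.} Combining the two bounds,
\[
  \epsilon(r) \;\geq\; \Pr\!\Big[\bigcup_k A_k\Big] \;\geq\; \frac{Kr^2}{4NM} - \frac{K^2 r^4}{2N^2M^2} \;=\; \frac{Kr^2}{4NM}\left(1 - \frac{2Kr^2}{NM}\right).
\]
Under the hypothesis $r \leq c\sqrt{NM/K}$ for a sufficiently small absolute constant $c$ (absorbed in the $O(\cdot)$ in the statement), the parenthetical factor is bounded below by a positive constant, yielding $\epsilon(r) \geq \Omega(Kr^2/(NM))$ as desired. The only delicate step is the second paragraph: picking the right constant in the $O(\sqrt{NM/K})$ threshold so that the second-order Bonferroni correction does not swallow the first-order contribution. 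Everything else is standard first/second-moment estimation, made possible precisely by the disjointness of the marked pairs.
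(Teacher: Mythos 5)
Your decomposition is genuinely different from the paper's: you apply Bonferroni to the $K$ events $A_k$ indexed by the marked pairs (``the $k$-th marked pair is fully hit''), whereas the paper applies it to the $r^2$ events $Z_{i,j}$ indexed by sample positions (``the $(i,j)$-th sampled pair is marked''). Both viewpoints are natural, both exploit disjointness only in the second-moment term, and your calculation of the second moment via $\Pr[x_k,x_{k'}\in X]\le(1-(1-1/N)^r)^2\le (r/N)^2$ is clean and correct.

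The problem is the first-moment step. The inequality $1-(1-1/N)^r \geq r/(2N)$ requires $r \leq N$, and $r \leq \bo{\sqrt{NM/K}}$ does \emph{not} imply this: $\sqrt{NM/K}\le N$ is equivalent to $M\le KN$, so the problematic regime is the entire range $M>KN$, not ``$K=1$'' as you suggest, and simply ``handling $K=1$ separately'' would not close it. Worse, in that regime the statement itself fails: take $N=K=1$, $M$ large, $r=\Theta(\sqrt M)$; then $\eps(r)=1-(1-1/M)^r=\Theta(1/\sqrt M)\ra 0$ while the claimed lower bound is $\om{r^2K/(NM)}=\om{1}$. So Lemma~\ref{Lem:BPclaw} needs an extra hypothesis such as $r=\bo{N}$, under which your first-moment bound (and hence your whole argument) goes through. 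For what it is worth, the paper's own proof also implicitly needs $r=\bo{N}$: in the $j=\ell$, $i\ne k$ case one has $\Pr(Z_{i,j}=Z_{k,j}=1)=K/(N^2M)$, not $K/(NM^2)$, so that cross-term is $\Theta(r^3K/(N^2M))$ and is dominated by the main term $r^2K/(NM)$ only when $r=\bo{N}$. In the actual application in Theorem~\ref{Thm:PairFinding}, $r=(NM/K)^{1/3}$ together with $M\le KN^2$ gives $r\le N$, so the downstream results are unaffected; but you should state and use $r\le cN$ explicitly rather than try to deduce it from $r\le\bo{\sqrt{NM/K}}$.
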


\begin{proof}
  Fix any $K$ disjoint marked pairs $(x_1,y_1),\dots,(x_K,y_K)$. Let $X_1,\dots,X_r$ (resp. $Y_1,\dots,Y_r$) be~$r$ independent and uniformly distributed random variables over $[N]$ (resp. $[M]$). For any indices $i,j$, let $Z_{i,j}$ denote the binary random variable that takes value $1$ if $\{X_i,X_j\}$ is one of the $K$ fixed pairs, and set $Z = \sum_{i,j} Z_{i,j}$. By definition, we have $\eps(r) \geq \pr{Z \neq 0}$. We lower bound this quantity by using the inclusion-exclusion principle,
    \[\eps(r) \geq \sum_{i,j} \Pr(Z_{i,j} = 1) - \frac{1}{2} \sum_{(i,j) \neq (k,\ell)} \Pr(Z_{i,j} = 1 \wedge Z_{k,\ell} = 1).\]
  The first term on the right-hand side is equal to $\sum_{i,j} \Pr(Z_{i,j} = 1) = \frac{r^2K}{NM}$. For the second term, the analysis depends on whether the indices $i,k$ and $j,\ell$ are distinct or not. If they are distinct then $\Pr(Z_{i,j} = 1 \wedge Z_{k,\ell} = 1) = \frac{K^2}{(NM)^2}$ since $Z_{i,j}$ and $Z_{k,\ell}$ are independent. Otherwise, suppose for instance that~$i = k$. Since the $K$ fixed pairs are disjoint, we have $\Pr(Z_{i,j} = 1 \wedge Z_{i,\ell} = 1) = \Pr(Z_{i,j} = 1 \wedge Y_j = Y_{\ell}) = \frac{K}{NM^2}$. Finally, there are $4\binom{r}{2}^2$ ways of choosing the indices $i,j,k,\ell$ when $i \neq k$ and $j \neq \ell$, and $4r\binom{r}{2}$ ways when $i=k$ or $j=\ell$. By putting everything together we obtain that,
    $\eps(r) \geq \frac{r^2K}{NM} - 2\binom{r}{2}^2 \frac{K^2}{(NM)^2} - 2r\binom{r}{2}\frac{K}{NM^2} \geq \om*{\frac{r^2K}{NM}}$.
\end{proof}

We use the above result to construct a quantum algorithm for finding a marked pair when the relation $\rel(x,y)$ is determined by checking if two underlying values $f(x)$ and $g(y)$ are equal or not. Our analysis essentially generalizes the quantum element distinctness~\cite{Amb07j} and claw finding~\cite{Tan09j} algorithms to the case of $K > 1$.

\begin{theorem}[\sc Quantum pair finding]
  \label{Thm:PairFinding}
  There is a bounded-error quantum algorithm with the following properties. Consider four integers $1 \leq K \leq N \leq M \leq R$ with $R \leq N^{\bo{1}}$. Let $f : [N] \ra [R]$ and $g : [M] \ra [R]$ be two functions that can be evaluated in time $\tau$. Define~$\rel : [N] \times [M] \ra \rn$ to be any of the two following binary relations:
    \begin{enumerate}
        \item $\rel(x,y) = 1$ if and only if $f(x) = g(y)$.
        \item $\rel(x,y) = 1$ if and only if $f(x) = g(y)$ and $x \neq y$.
    \end{enumerate}
  Suppose that there exist at least $K$ mutually disjoint pairs $(x,y) \in [N] \times [M]$ such that $\rel(x,y) = 1$. Then, the algorithm returns one such pair in time
    \[\left\{\begin{array}{ll}
      \wbo[\big]{(NM/K)^{1/3} \cdot \tau} & \text{\rm if} \ N \leq M \leq KN^2, \\ [3mm]
      \wbo[\big]{(M/K)^{1/2} \cdot \tau} & \text{\rm if} \ M \geq KN^2.
    \end{array}\right.\]
\end{theorem}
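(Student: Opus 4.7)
The plan is to apply the MNRS quantum walk framework on a product of Johnson graphs, combined with the multi-solution birthday bound of Lemma~\ref{Lem:BPclaw}. The construction generalises Ambainis's element distinctness walk and Tani's claw finding to the multiple-solution regime, with the two cases of the theorem corresponding to whether the walk remains useful or whether Grover search over the larger set already suffices by itself.

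For the regime $N \leq M \leq KN^2$, I would run a quantum walk on the Cartesian product $J(N,r) \times J(M,r)$ of two Johnson graphs, where the parameter $r \leq N/2$ is to be fixed. A vertex $(A,B)$ with $A \subseteq [N]$, $B \subseteq [M]$ and $\abs{A} = \abs{B} = r$ is declared marked if it contains a witness, i.e.\ there exist $x \in A$ and $y \in B$ with $\rel(x,y) = 1$. During the walk I would maintain, in superposition, a history-independent data structure storing the pairs $\{(x,f(x)) : x \in A\}$ and $\{(y,g(y)) : y \in B\}$ indexed by $f$- and $g$-values, so that markedness (including the auxiliary $x \neq y$ check for variant~2) can be tested in polylogarithmic time. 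The MNRS costs are then $S = \bo{r\tau}$ for setup, $U = \bo{\tau}$ per walk step, and $C = \polylog(n)$ for checking; the spectral gap of the product Johnson graph is $\om{1/r}$, and by Lemma~\ref{Lem:BPclaw} the fraction of marked vertices is at least $\om{r^2 K /(NM)}$. Plugging into $S + \frac{1}{\sqrt{\epsilon}}\bigl(\frac{1}{\sqrt{\delta}} U + C\bigr)$ yields $\wbo{r\tau + \sqrt{NM/(rK)}\,\tau}$, which is balanced by $r = (NM/K)^{1/3}$ and gives the claimed $\wbo{(NM/K)^{1/3}\tau}$. The hypothesis $M \leq KN^2$ is exactly what guarantees $r \leq N$, so the Johnson graph is well-defined; the hypothesis also keeps $r \leq \bo{\sqrt{NM/K}}$, which is required for Lemma~\ref{Lem:BPclaw} to apply.

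For the regime $M \geq KN^2$, the would-be walk parameter exceeds $N$, so the cheaper strategy is to query $[N]$ exhaustively and then Grover-search $[M]$. Concretely, I would first compute $f(x)$ for every $x \in [N]$ in time $\bo{N\tau}$ and store the values in a hash table, then invoke the quantum search of Fact~\ref{fact:search} on $[M]$ to find a $y$ such that $g(y)$ appears in the table (and, for variant~2, $y$ differs from the associated $x$). Because the $K$ mutually disjoint marked pairs involve $K$ distinct second coordinates, the number of marked $y \in [M]$ is at least $K$, so the search terminates in $\wbo{\sqrt{M/K}\,\tau}$. The total $\bo{(N + \sqrt{M/K})\tau}$ collapses to $\wbo{(M/K)^{1/2}\tau}$ precisely when $M \geq KN^2$.

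Two remaining points. Since $K$ is not assumed to be known, I would repeat each procedure with guesses $K' = N, N/2, N/4, \dots, 1$, an $O(\log N)$ multiplicative overhead absorbed in the $\wbo{\cdot}$ notation; amplitude amplification with each guess boosts the $\Theta(\epsilon)$ success probability to a constant. The main technical obstacle is arranging that the vertex data is maintained coherently so that setup, update and checking really have the advertised costs: this requires a history-independent dictionary structure (standard in the quantum walk literature) to prevent the auxiliary register from entangling with the walk history, plus some care to support both the $f(x) = g(y)$ lookup and the $x \neq y$ exclusion of variant~2 in superposition.
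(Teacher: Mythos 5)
Your proposal matches the paper's proof essentially line for line: the same quantum walk on $J(N,r)\times J(M,r)$ with $r=(NM/K)^{1/3}$ and the MNRS bound $S + \frac{1}{\sqrt{\eps}}(\frac{U}{\sqrt{\delta}}+C)$ fed by Lemma~\ref{Lem:BPclaw} in the first regime, and the same exhaustive-$[N]$-plus-Grover-over-$[M]$ strategy in the second regime. Your added observations (that $M\le KN^2$ is precisely what keeps $r\le N$, and that unknown $K$ can be handled by geometrically decreasing guesses) are correct and are implicit in the paper's account.
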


\begin{proof}
  If $N \leq M \leq KN^2$ the algorithm consists of running a quantum walk over the product Johnson graph $J(N,r) \times J(M,r)$ with $r = (NM/K)^{1/3}$. This walk has spectral gap $\delta = \om{1/r}$ and the fraction $\eps$ of vertices containing both elements from at least one marked pair satisfies $\eps \geq \om*{\frac{r^2K}{NM}}$ by Lemma~\ref{Lem:BPclaw}. Using the MNRS framework~\cite{MNRS11j}, the query complexity of finding one marked pair is then $\bo{S + \frac{1}{\sqrt{\eps}} (\frac{U}{\sqrt{\delta}} + C)}$, where the setup cost is $S = r$, the update cost is $U = \bo{1}$, and the checking cost is $C = 0$. This leads to a query complexity of $\bo[\big]{r + 1/\sqrt{\eps \delta}} = \bo{(NM/K)^{1/3}}$. By a simple adaptation of the data structures described in~\cite[Section 6.2]{Amb07j} or \cite[Section 3.3.4]{Jef14d}, this can be converted to a similar upper bound on the time complexity with a multiplicative overhead of $\tau$.

  If $M \geq KN^2$, the algorithm instead stores all pairs $\set{(x,f(x))}_{x \in [N]}$ in a table -- sorted according to the value of the first coordinate -- and then runs the quantum search algorithm on the function $F : [M] \ra \rn$ where $F(x') = 1$ if there exists $x \in [N]$ such that $\rel(x,y) = 1$. There are at least~$K$ marked items and~$F$ can be evaluated in time $\bo{\tau + \log N}$ using the sorted table. Thus, the running time is $\wbo{N \cdot \tau + (M/K)^{1/2} \cdot (\tau + \log N)} = \wbo{(M/K)^{1/2}}$ by Fact~\ref{fact:search}.
\end{proof}

\section{Dynamic programming data structure}
\label{Sec:DynProg}

Here we introduce our dynamic programming data structure and show how it can be used to implement low-cost queries to the elements of the set $T_{p,k}$ defined as follows.

\begin{definition}
  \label{def:ds}
  Let $A=\{a_1, \ldots, a_n\}$ be a multiset of $n$ integers. For integers $p \ge 2$ and $k\in\{0,1, \ldots, p-1\}$, define the set $T_{p,k}$ by
    $T_{p,k} = \{ S \subseteq[n] : \Sigma_A(S) \equiv k \pmod p\},$
  and denote the cardinality of $T_{p,k}$ by $t_{p,k}:= \abs{T_{p,k}}$.
\end{definition}

Our main tool is the table $t_p$, defined below, constructed by dynamic programming. As $t_{p,k} = t_p[n,k]$, once the table is constructed, the size $t_{p,k}$ of $T_{p,k}$ can be read off the last row.

\begin{lemma}
  \label{lem:dp-table-time}
  Let $n,p$ be non-negative integers. Define the $(n+1) \times p$ table $t_p[i,j] = \abs{\set{S \subseteq \{1,\dots,i\} : \Sigma(S) \equiv j \pmod p}}$ where $i \in [0 \tdots n]$ and $j \in [0 \tdots p-1]$. Then, $t_p$ can be constructed in time $\bo{n^2 p}$ by dynamic programming.
\end{lemma}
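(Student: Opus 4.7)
The plan is to prove the lemma by exhibiting the standard bottom-up dynamic programming recurrence and accounting carefully for the cost of arithmetic on large integer entries, since the entries of $t_p$ can grow as large as $2^n$.

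First, I would establish the base case and the recurrence. For the base case, note that the only subset of $\emptyset$ is $\emptyset$ itself, which has sum $0$, so $t_p[0,0]=1$ and $t_p[0,j]=0$ for $j \in [1 \tdots p-1]$. For the recurrence, fix $i \geq 1$ and $j \in [0 \tdots p-1]$ and partition the subsets $S \subseteq \{1,\dots,i\}$ with $\Sigma(S) \equiv j \pmod p$ according to whether $i \in S$. Subsets not containing $i$ are exactly subsets $S \subseteq \{1,\dots,i-1\}$ with $\Sigma(S) \equiv j \pmod p$, contributing $t_p[i-1,j]$. Subsets containing $i$ are of the form $S' \cup \{i\}$ where $S' \subseteq \{1,\dots,i-1\}$ and $\Sigma(S') \equiv (j - a_i) \pmod p$, contributing $t_p[i-1, (j-a_i) \bmod p]$. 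This yields
\[
t_p[i,j] \;=\; t_p[i-1,j] \;+\; t_p\bigl[i-1,\,(j-a_i) \bmod p\bigr].
\]

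Next I would describe the algorithm: initialize row $0$ in time $\bo{p}$, then fill the remaining $n$ rows in order of increasing $i$, computing each of the $p$ entries in row $i$ from two entries of row $i-1$ via the recurrence above. Correctness follows by a straightforward induction on $i$.

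For the running time, the key point is bit complexity. Each entry $t_p[i,j]$ counts subsets of an $i$-element set and is therefore bounded by $2^i \leq 2^n$, so it fits in $\bo{n}$ bits. Each step of the recurrence requires one modular subtraction $(j - a_i) \bmod p$ and one addition of two $\bo{n}$-bit integers, which together take $\bo{n}$ time (the reduction of $a_i$ modulo $p$ can be precomputed once for each $i$). Since there are $(n+1)p$ entries and each is computed in $\bo{n}$ time, the total running time is $\bo{n^2 p}$, as claimed. No step is a real obstacle here; the only subtlety worth flagging is that one must not forget the $\bo{n}$-factor blow-up from working with counts that can be exponentially large in $n$.
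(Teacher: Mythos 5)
Your proof is correct and follows the same approach as the paper: the identical base case and recurrence, filling the table row by row, with the extra factor of $n$ in the running time coming from the $\bo{n}$-bit size of the entries. You just spell out the partition argument for the recurrence and the bit-complexity accounting in a bit more detail than the paper does.
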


\begin{proof}
  To compute the elements of the table, observe that $t_p[0,0] = 1$ and $t_p[0,j] = 0$ for $j > 0$. The remaining elements can be deduced from the relation
    $t_p[i,j] = t_p[i-1,j] + t_p[i-1, (j-a_{i}) \bmod p]$.
  The $i^{th}$ row of $t_p$ can thus be deduced from the $(i-1)^{th}$ row and $a_{i}$ in time $\bo{np}$ (the $n$ factor is due to the entries being exponentially large) and the computation of all rows can be completed in time $\bo{n^2 p}$.
\end{proof}


\subsection{Construction of the Subset-Sum oracle}
\label{subsec:fast-sub}

We now show how to use the table $t_p$ to quickly query any element of $T_{p,k}$. To do so, we first define an ordering of the elements of $T_{p,k}$.

\begin{definition}
  Let $\prec$ be the relation over ${\cal P}([n])$ defined as follows: for all $S_1, S_2 \subseteq [n]$, $S_1 \prec S_2$ if and only if $\max\{i : i \in S_1 \Delta S_2\} \in S_2$.
\end{definition}

\begin{lemma}
  The relation $\prec$ is a strict total order.
\end{lemma}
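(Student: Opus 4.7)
The plan is to verify the three defining properties of a strict total order: irreflexivity, transitivity, and connectedness (trichotomy), from which asymmetry follows automatically.

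First, for irreflexivity, note that $S \Delta S = \emptyset$ so the set $\{i : i \in S \Delta S\}$ has no maximum; hence $S \prec S$ is vacuously false, and as a byproduct we also get asymmetry, since if $S_1 \prec S_2$ then $\max(S_1 \Delta S_2) \in S_2 \setminus S_1$, which precludes $\max(S_2 \Delta S_1) = \max(S_1 \Delta S_2) \in S_1$. Second, for connectedness, whenever $S_1 \neq S_2$ the symmetric difference $S_1 \Delta S_2$ is nonempty, so $m := \max(S_1 \Delta S_2)$ exists and lies in exactly one of $S_1, S_2$, giving either $S_2 \prec S_1$ or $S_1 \prec S_2$.

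The real content is transitivity. Assume $S_1 \prec S_2$ and $S_2 \prec S_3$, and set $m_1 := \max(S_1 \Delta S_2) \in S_2$ and $m_2 := \max(S_2 \Delta S_3) \in S_3$. I will split into three cases according to $m_1 < m_2$, $m_1 > m_2$, or $m_1 = m_2$. The key observation used in every case is the following: for any index $i$ strictly greater than $\max(m_1, m_2)$, membership in $S_1$, $S_2$, $S_3$ all coincide (because $i$ lies in neither symmetric difference), so such $i \notin S_1 \Delta S_3$. This immediately bounds $\max(S_1 \Delta S_3) \leq \max(m_1, m_2)$. It then remains to identify $\max(S_1 \Delta S_3)$ exactly and check that it lies in $S_3$: in the case $m_1 < m_2$ the candidate is $m_2$, in the case $m_1 > m_2$ it is $m_1$, and in the case $m_1 = m_2 =: m$ it is $m$ itself. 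In each case one chases the membership of this candidate through the two known relations $S_1 \prec S_2$ and $S_2 \prec S_3$ to verify that it is in $S_3 \setminus S_1$, hence it belongs to $S_1 \Delta S_3$ and to $S_3$.

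The main obstacle — though it is more bookkeeping than genuinely difficult — is keeping the case analysis for transitivity straight, and in particular confirming that $S_1 \Delta S_3 \neq \emptyset$ so that the maximum is well defined before concluding $S_1 \prec S_3$. This is handled uniformly because the candidate index in each case is explicitly exhibited in $S_1 \Delta S_3$. Once the three cases are dispatched, combining irreflexivity, transitivity and connectedness yields the claim.
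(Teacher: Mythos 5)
Your proof is correct but takes a genuinely different, more pedestrian route than the paper. The paper proves the lemma in one line by encoding each subset as the integer $\chi(S) = \sum_{i\in S} 2^i$ and observing that $S_1 \prec S_2$ holds exactly when $\chi(S_1) < \chi(S_2)$, so the strict-total-order axioms are inherited directly from $<$ on the integers. You instead verify irreflexivity, connectedness, and transitivity from first principles, with transitivity requiring a three-way case split on the comparison between $m_1 = \max(S_1 \Delta S_2)$ and $m_2 = \max(S_2 \Delta S_3)$. Both approaches are sound and yours is self-contained, but the paper's encoding is far shorter and is also the natural viewpoint underlying the enumeration machinery of Theorem~\ref{Thm:effEnum}. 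One small inaccuracy worth flagging in your sketch: the case $m_1 = m_2$ cannot actually occur, because $S_1 \prec S_2$ forces $m_1 \in S_2$ while $S_2 \prec S_3$ forces $m_2 \in S_3$ and hence $m_2 \notin S_2$, so $m_1 \neq m_2$ always. Your text says one ``verifies'' the candidate lies in $S_3 \setminus S_1$ in every case, but in this third case the membership chase instead produces a contradiction, showing the case is vacuous; this does not affect correctness, since the case never arises, but the description of what happens there is slightly off.
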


\begin{proof}
  For every subset $S \subseteq [n]$, we define $\chi(S) = \sum_{i \in S} 2^i$. Then, $S_1 \prec S_2$ if and only if $\chi(S_1) < \chi(S_2)$. Since $<$ is a total order over the integers, so is $\prec$ over~${\cal P}([n])$.
\end{proof}

Using the above relation, we now show that the query function $f : [1 \tdots t_{p,k}] \rightarrow T_{p,k}$, defined by $f(I) = S_I$, can be computed in time~$\bo{n^2}$.

\begin{theorem}
  \label{Thm:effEnum}
  Let $T_{p,k}$ be enumerated as $T_{p,k} = \{S_1, \ldots, S_{t_{p,k}}\}$ where $S_1 \prec \dots \prec S_{t_{p,k}}$.
  Given any integer $I \in [1 \tdots t_{p,k}]$ and random access to the elements of the table $t_p$, the set $S_I$ can be computed in time~$\bo{n^2}$.
\end{theorem}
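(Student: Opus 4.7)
The plan is to recover $S_I$ by deciding, for each $i$ going from $n$ down to $1$, whether $i \in S_I$, using the table $t_p$ to count how many $\prec$-smaller subsets lie on each side of the decision. The key structural observation is that within $T_{p,k}$, every subset not containing $n$ comes $\prec$-before every subset containing $n$, and within each group the restriction of $\prec$ coincides with the analogous ordering on subsets of $[n-1]$.

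The verification of the observation is short. If $S_1 \not\ni n$ and $S_2 \ni n$, then $n \in S_1 \Delta S_2$ and so $\max(S_1 \Delta S_2) = n \in S_2$, whence $S_1 \prec S_2$. If $S_1, S_2$ lie on the same side of the split, then $S_1 \Delta S_2 \subseteq [n-1]$ and $\prec$ agrees with its restriction to subsets of $[n-1]$. Combined with the natural bijections
\[
  \set{S \in T_{p,k} : n \notin S} \longleftrightarrow \set{S \subseteq [n-1] : \Sigma(S) \equiv k \pmod p}
\]
and
\[
  \set{S \in T_{p,k} : n \in S} \longleftrightarrow \set{S \subseteq [n-1] : \Sigma(S) \equiv (k - a_n) \pmod p},
\]
of sizes $t_p[n-1,k]$ and $t_p[n-1, (k - a_n) \bmod p]$ respectively (whose sum equals $t_p[n,k]$ by the recurrence of Lemma~\ref{lem:dp-table-time}), this yields the reconstruction rule: if $I \leq t_p[n-1,k]$, then $n \notin S_I$ and we recurse on the triple $(n-1, k, I)$; otherwise $n \in S_I$ and we recurse on $(n-1,\, (k - a_n) \bmod p,\, I - t_p[n-1,k])$.

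Iterating this rule $n$ times fully determines the membership of every element in $S_I$, with correctness established by induction on~$n$. Each iteration performs a constant number of table lookups, one modular subtraction to update the residue, and one comparison plus one subtraction on the index $I$. Since the counts $t_p[i,j]$ and the index $I$ are bounded by $t_{p,k} \leq 2^n$, each arithmetic operation on them costs $\bo{n}$ bit operations, giving a total running time of $\bo{n^2}$. I do not anticipate a serious obstacle beyond carefully accounting for the bit-complexity of operations on exponentially large counts and the modular arithmetic on $p$; the combinatorial content reduces cleanly to the compatibility of $\prec$ with the recursive split on the largest remaining index.
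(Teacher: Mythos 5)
Your proposal is correct and follows essentially the same approach as the paper: you use the same key observation that $\prec$ splits $T_{p,k}$ at the largest index $n$ into a $\prec$-prefix of subsets avoiding $n$ and a $\prec$-suffix of subsets containing $n$, apply the same bijections to residue classes over $[n-1]$, derive the same recursive reconstruction rule driven by comparing $I$ against $t_p[i-1,j]$, and account for the $\bo{n}$ bit-cost of each of the $n$ iterations in the same way. The paper just packages the induction a bit more formally by carrying two explicit invariants through the loop.
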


\begin{proof}
  Algorithm~\ref{Algo:PHS} gives a process that starts from $t_p[n,k]$ (i.e. the total number of subsets $S \subseteq [n]$ that sum to $k$ modulo $p$) and an empty set $Z$, and constructs $S_I$ by going backwards ($i=n,\dots,1$) through the rows of $t_p$. At the $i$-th step we examine~$t_p[i-1,j]$ and decide whether to include $i$ in $Z$ or not. If we do include $i$ then we examine another element in that row to decide a new value of $I$, and we also reset $j$.

  \begin{figure}[htbp]
    \begin{algorithm}[H]
      \label{Algo:PHS}
      \caption{Fast Subset-Sum oracle}
      \KwIn{Table $t_p$, integers $k \in [0 \tdots p-1]$ and $I \in [1 \tdots t_{p,k}]$.}
      \KwOut{The $I$th subset $Z \subseteq [n]$ (according to $\prec$) such that $\Sigma(Z) \equiv k \pmod p$.} \vspace{0.3\baselineskip}
      Set $j = k$ and $Z = \varnothing$. \\
      \For{$i = n, \dots, 1$}{
            \uIf{$I \leq t_p[i-1,j]$}{
              Do nothing.
            }
            \Else{
              Update $Z = Z \cup \{i\}$, $I = I - t_p[i-1,j]$ and $j = j - a_i \mod p$. \\
            }
        }
        Return $Z$.
    \end{algorithm}
  \end{figure}

  The algorithm consists of $n$ iterations, each of which can be performed in time $\bo{n}$ assuming random access to the (exponentially large) elements of $t_p$, and therefore the running time is~$\bo{n^2}$. What is left to prove is that the output of the algorithm is indeed~$S_I$.

  We first provide a high-level explanation of why the algorithm works. The total ordering defined by~$\prec$ implies that $T_{p,k}$ can be written as the disjoint union of two sets,
    $T_{p,k} = \{S_1, \ldots, S_{t_p[n-1,k]}\}\cup \{S_{t_p[n-1,k]+1}, \ldots, S_{t_p[n,k]}\},$
  where $n$ is not contained in any $S_i$ in the first (left) set, and is contained in every $S_i$ of the second (right) set. Thus, we add $n$ to the working set $Z$ only if $I > t_p[n-1,k]$.  If this is the case, $S_I$ is the $I-t_p[n-1,k]$-th element of the right set.  We note that removing $n$ from each $S_i$ in the right set gives the next bin defined over a smaller universe of size $n-1$,
    $\set*{S \subseteq [n-1] : \Sigma(S) \equiv k - a_n \pmod p}$
  that has $t_p[n-1, (k-a_n) \bmod p]$ elements. Therefore, by updating $n\leftarrow n-1$, $I \leftarrow I-t_p[n-1,k]$ and $k\leftarrow (k - a_n) \bmod p$ we can repeat the process to determine whether to add $n-1$ to the working set, and so on, until we reach the value $1$.

  More formally, denote by $I_i, j_i$ and $Z_i$ the values of the respective variables at the beginning of the $i$th iteration. With this notation we initially have $I_n = I, j_n =k$ and $Z_n = \varnothing$, and the final output corresponds to $Z_0$. We prove by backwards induction for $i = n, \dots, 1$ the following two statements, which clearly hold for $i=n$:
  \begin{enumerate}
     \item $Z_i = S_I \cap [i+1 \tdots n]$,
     \item $I_i \leq t_p[i, j_i]$ and in the enumeration of $T_p[i, j_i] := \{S_1, \ldots S_{t_p[i,j_i]}\} $ according to $\prec$ we have $S_{I_i} = S_I \cap [1 \tdots i]$.
  \end{enumerate}
  Note that the first statement implies that the final output is $Z_0 = S_I$.
  To prove the inductive step for the first statement, we must show that $Z_{i-1} = S_I \cap [i \tdots n]$. The inductive hypothesis implies that this holds exactly when $i \in Z_{i-1}  \Leftrightarrow i \in S_I$. Observe that $T_p[i-1,j_i] = \{S \in T_p[i,j_i] : i \not \in S \}$. Therefore the set $T_p[i,j_i]$ is the distinct union of $T_p[i-1,j_i]$ and~${\cal F}$ where ${\cal F} = \{S \in T_p[i,j_i] : i  \in S \}$. By the definition of $\prec$ over $T_p[i,j_i]$, for every $X \in T_p[i-1,j_i] $ and  for every $Y \in {\cal F},$ we have $X \prec Y$. This implies that $i \in S_{\ell} \Leftrightarrow t_p[i-1,j_i] < \ell$, for every $\ell \in [1 \tdots t_p[i,j_i]]$. Therefore we have the following equivalences:
  $
     i \in Z_{i-1}
      \Leftrightarrow  t_p[i-1,j_i] < I_i
      \Leftrightarrow  i \in S_{I_i}
     \Leftrightarrow  i \in S_I \cap [1 \tdots i]
     \Leftrightarrow  i \in S_I,$
  where the first equivalence follows from the definition of $Z_{i-1}$ and the third equivalence from the second statement of the inductive hypothesis.

  We now prove the inductive step for the second statement. Let the enumeration of $T_p[i-1, j_{i-1}]$ according to $\prec$ be $T_p[i-1, j_{i-1}] = \{S'_1, \ldots S'_{t_p[i-1, j_{i-1}]}\}$. We analyse separately the case $I_i \leq t_p[i-1,j_i]$ and the case $t_p[i-1,j_i] < I_i$.

  When $I_i \leq t_p[i-1,j_i]$ then $I_{i-1} \leq t_p[i-1, j_{i-1}]$ because $I_{i-1} =I_i$ and $j_{i-1} = j_i$. Also, $S'_{\ell} =S_{\ell}$, for every $\ell \in [1 \tdots t_p[i-1,j_{i-1}]]$. Therefore we have the following equalities
    $S'_{I_{i-1}} = S_{I_i} = S_I \cap [1 \tdots i] = S_I \cap [1 \tdots i-1],$
  where the second equality is the inductive hypothesis, and the third equality holds because $i \not \in S_{I_i}$ when $I_i \leq t_p[i-1,j_i]$.

  When $t_p[i-1,j_i] < I_i \leq t_p[i,j_i]$ then $I_{i-1} \leq t_p[i-1, j_{i-1}]$ because $I_{i-1} =I_i  - t_p[i-1,j_i]$ and $t_p[i-1, j_{i-1}] = t_p[i, j_i] - t_p[i-1,j_i] $ (here we used the definition of $j_{i-1}$). Also, $S'_{\ell} =S_{\ell + t_p[i-1, j_i]} \setminus \{i\} $, for every $\ell \in [1 \tdots t_p[i-1,j_{i-1}]]$. Therefore we have the following equalities:
    $S'_{I_{i-1}} = S_{I_i} \setminus \{i\} = (S_I \cap [1 \tdots i] ) \setminus \{i\} = S_I \cap [1 \tdots i-1],$
  where again the second equality is the inductive hypothesis.
\end{proof}

As a direct corollary, we obtain a new method for enumerating solutions to \Sub.

\begin{corollary}[Enumerating solutions to \Sub\ via dynamic programming]
   \label{corr:better-enumeration}
   Let $A = \{a_1, \ldots , a_n\}$ and $T_{p,k} = \{S \subseteq [n]: \Sigma_A(S) \equiv k \pmod p\}$. For any $c \leq |T_{p,k}|$, it is possible to find~$c$ elements of $T_{p,k}$ in time $\wbo{p + c}$.
\end{corollary}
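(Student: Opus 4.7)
The plan is to combine the two preceding results directly: use Lemma~\ref{lem:dp-table-time} to build the dynamic programming table, and then apply Theorem~\ref{Thm:effEnum} repeatedly to stream out the first $c$ elements of $T_{p,k}$.

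First I would invoke Lemma~\ref{lem:dp-table-time} once to construct the table $t_p$ in time $\bo{n^2 p}$. In the regime of interest $n$ is polylogarithmic in $p$, so this setup cost can be written as $\wbo{p}$. As a by-product, the entry $t_p[n,k]$ equals $|T_{p,k}|$, so we can cheaply verify the hypothesis $c \leq |T_{p,k}|$ before proceeding.

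Second, with $t_p$ in hand I would loop over $I = 1, 2, \ldots, c$ and, for each $I$, call the oracle of Theorem~\ref{Thm:effEnum} to obtain the $I$-th element $S_I$ of $T_{p,k}$ under the strict total order $\prec$. Each such call runs in time $\bo{n^2}$, so producing the first $c$ elements costs $\bo{c n^2} = \wbo{c}$ in total.

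Adding the two phases gives the claimed bound $\wbo{p + c}$. There is essentially no obstacle here: the argument is just that the $\wbo{p}$ cost of constructing $t_p$ is paid once and then amortized, after which each of the $c$ subsequent queries is polynomial in $n$. The only small point worth remarking is that the elements are produced in a fixed (namely $\prec$-increasing) order, so if a caller needs a particular prefix of $T_{p,k}$ it is obtained directly, while an arbitrary $c$-element subset is obtained by choosing any $c$ indices in $[1 \tdots t_{p,k}]$ and feeding them to the oracle without altering the total running time.
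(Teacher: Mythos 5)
Your proof is correct and takes essentially the same route as the paper: build the table $t_p$ once via Lemma~\ref{lem:dp-table-time} in time $\bo{n^2p} = \wbo{p}$, then invoke the oracle of Theorem~\ref{Thm:effEnum} for $I = 1, \dots, c$ at $\bo{n^2}$ per call, for a total of $\wbo{p + c}$.
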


\begin{proof}
  By Lemma~\ref{lem:dp-table-time} the table $t_p[i,j]$ can be constructed in time $\bo{n^2p}$. Thereafter, by Theorem~\ref{Thm:effEnum} each set $S_I$ for $I \in [1 \tdots t_{p,k}]$ can be computed in time $\bo{n^2}$.
\end{proof}

An alternative method for enumerating solutions was previously known:

\begin{fact}[Enumerating solutions to \Sub\ \cite{BCJ11c}]
  \label{fact:enumeration}
  Let $A = \{a_1, \ldots , a_n\}$ where $a_i = 2^{\bo{n}}$ for all $i$. Let $p = 2^{\bo{n}}$,
  $0 \leq k \leq p-1$, and $T_{p,k} = \{S \subseteq [n]: \Sigma_A(S) \equiv k \pmod p\}$. Then, for any $c \leq |T_{p,k}|$, it is possible to find~$c$ elements of $T_{p,k}$ in time $\wbo{2^{n/2} + c}$.
\end{fact}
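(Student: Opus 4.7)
The plan is to prove Fact~\ref{fact:enumeration} by adapting the classical meet-in-the-middle algorithm of Horowitz and Sahni~\cite{HS74j} to enumeration rather than decision. Concretely, split the index set as $[n] = L \cup R$ with $\abs{L} = \floor{n/2}$ and $\abs{R} = \ceil{n/2}$. Form the two lists
\[
\mathcal{L} = \set[\big]{(S_L,\, \Sigma(S_L) \bmod p) : S_L \subseteq L}, \qquad
\mathcal{R} = \set[\big]{(S_R,\, \Sigma(S_R) \bmod p) : S_R \subseteq R},
\]
each of size at most $2^{\ceil{n/2}}$, by straightforward enumeration. Since $a_i = 2^{\bo{n}}$ and $p = 2^{\bo{n}}$, each modular subset sum can be computed in $\poly(n)$ time, so building both lists costs $\wbo{2^{n/2}}$. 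Next, sort $\mathcal{L}$ and $\mathcal{R}$ by their residue coordinate; this again costs $\wbo{2^{n/2}}$.

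For the enumeration phase, observe that $S_L \cup S_R \in T_{p,k}$ exactly when $\Sigma(S_L) + \Sigma(S_R) \equiv k \pmod p$. Walk through $\mathcal{R}$ in sorted order of the residues. For each distinct residue $r$ encountered in $\mathcal{R}$, perform a single binary search in $\mathcal{L}$ to locate the contiguous block of entries whose residue equals $(k - r) \bmod p$; then emit all pairs $(S_L, S_R)$ obtained by combining every $S_R$ of residue $r$ with every $S_L$ in that block. Since $L \cap R = \varnothing$, each pair yields a distinct element of $T_{p,k}$, and as $r$ ranges over all residues appearing in $\mathcal{R}$ every element of $T_{p,k}$ is produced exactly once. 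Stop once $c$ elements have been output.

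To bound the running time, each binary search costs $\wbo{1}$ and, amortised, is charged either to one of the at most $2^{n/2}$ entries of $\mathcal{R}$ that we read or to an actual emitted pair. Producing each emitted pair (forming and writing down the set $S_L \cup S_R$) costs $\bo{n}$. Thus the total enumeration cost is $\wbo{2^{n/2} + c}$, which together with the setup dominates. The only slightly delicate point, and the one I expect to require the most care, is guaranteeing that emitting the $c$ pairs really costs $\wbo{c}$ rather than $\wbo{c + 2^{n/2}}$ per output: this is handled by iterating through the sorted lists with a two-pointer/merge-style sweep so that each binary search in $\mathcal{L}$ is performed only once per distinct residue actually witnessed in $\mathcal{R}$, and the bookkeeping (current position in $\mathcal{L}$ and in the current block) is maintained in $\bo{1}$ per emitted pair.
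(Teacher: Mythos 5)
The paper does not prove this fact; it cites \cite{BCJ11c} and uses the statement as given. Your reconstruction via meet-in-the-middle with sorted left/right half-lists and block-merging by residue is correct and is precisely the standard list-merging argument underlying the cited reference, so nothing more needs to be said on fidelity to the source. One small remark on your own ``delicate point'': the two-pointer optimisation you add at the end is harmless but not actually needed for the stated bound. Even the naive variant that performs one binary search in~$\mathcal{L}$ for \emph{every} entry of~$\mathcal{R}$ (rather than once per distinct residue) already runs in time $\wbo{2^{n/2} + c}$, since there are at most $2^{\ceil{n/2}}$ entries of~$\mathcal{R}$ and each search costs $\wbo{1}$; the $c$ emitted sets then contribute an additive $\wbo{c}$. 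So the amortisation you were worried about is automatic, and the argument is sound as written.
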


In comparison with Fact~\ref{fact:enumeration}, enumerating solutions via dynamic programming is advantageous when~$p < 2^{n/2}$.


\subsection{Statistics about random bins}
\label{Sec:bin}

We describe some statistics about the distribution of the sets $T_{p,k}$ (Definition~\ref{def:ds}) when $b \in (0,1)$ is a constant, $p$ is a random integer in $\bc{2^{bn} \tdots 2^{bn+1}}$, and $k$ is a random integer in $[0 \tdots p-1]$. Therefore, in this section, we stress out that $T_{p,k}$ is a random bin and its cardinality $t_{p,k}$ is a random integer. We first provide an upper bound on the expectation of~$t_{p,k}$.

\begin{lemma}
  \label{lem:size}
  The expected bin size can be upper bounded as $\Ex_{p,k}\bc{t_{p,k}} \leq 2^{(1-b)n}$.
\end{lemma}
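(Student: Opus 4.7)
The plan is to use a straightforward counting argument combined with the tower property of expectation. The key observation is that the sets $\{T_{p,k}\}_{k=0}^{p-1}$ partition the power set $\mathcal{P}([n])$ for every fixed modulus $p$, since every subset $S \subseteq [n]$ has a sum $\Sigma(S)$ that lies in exactly one residue class modulo $p$. This immediately gives the identity $\sum_{k=0}^{p-1} t_{p,k} = 2^n$.

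First, I would condition on $p$ and compute the expectation over $k$, which is uniform on $[0 \tdots p-1]$. By the partition identity above,
\[
\Ex_k\bc{t_{p,k}} = \frac{1}{p}\sum_{k=0}^{p-1} t_{p,k} = \frac{2^n}{p}.
\]
Since $p$ is drawn from the interval $[2^{bn} \tdots 2^{bn+1}]$, we have $p \geq 2^{bn}$, so this conditional expectation is at most $2^n / 2^{bn} = 2^{(1-b)n}$.

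Finally, I would apply the tower property to conclude
\[
\Ex_{p,k}\bc{t_{p,k}} = \Ex_p\bc[\big]{\Ex_k\bc{t_{p,k}}} \leq \Ex_p\bc[\Big]{2^{(1-b)n}} = 2^{(1-b)n}.
\]
There is no real obstacle here: the only thing to check carefully is the distribution of $k$ (uniform over $[0 \tdots p-1]$, which may itself depend on $p$), but this is exactly how the conditional expectation over $k$ was handled, so nothing extra is needed. Note that the argument does not even require $p$ to be prime; it only uses the partition property.
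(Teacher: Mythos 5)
Your proof is correct and matches the paper's argument exactly: both use the fact that $\{T_{p,k}: 0\le k<p\}$ partitions $\mathcal{P}([n])$ together with $p\ge 2^{bn}$. You simply spell out the conditioning on $p$ and the tower property more explicitly than the paper's one-line justification.
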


 \begin{proof}
The expected size of $T_{p,k}$ is at most $\Ex_{p,k}\bc{t_{p,k}} \leq 2^{(1-b)n}$ since $\set{T_{p,k} : 0 \leq k < p}$ is a partition of~${\cal P}([n])$ with $p \geq 2^{bn}$.
\end{proof}

This result is extended to an upper bound on the second moment of~$t_{p,k}$, under the assumption that the input does not contain too many solution pairs. This bound is needed to analyse the complexity of the variable-time amplitude amplification algorithm.

\begin{lemma}
  \label{lem:binSizeProd}
  Fix any integer $s \geq 0$ and any real $b \in [0,1]$. If there are at most $2^{(2-b)n}$ pairs $(S_1,S_2) \in {\cal P}([n])^2$ such that $\Sigma(S_1) = \Sigma(S_2) + s$, then the expected product of the sizes of two bins at distance $s \bmod p$ from each other is at most $\Ex_{p,k}\bc{t_{p,k}t_{p,(k-s) \bmod p}} \leq \wbo{2^{2(1-b)n}}$.
\end{lemma}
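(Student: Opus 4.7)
The plan is to reduce the expectation to a counting problem over pairs $(S_1,S_2) \in {\cal P}([n])^2$, and then use Fact~\ref{fact:primedivide} to control the contribution of ``accidental'' modular collisions.

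First, I would fix $p$ and expand the expectation over $k$. For fixed $p$, summing the product over all $k \in [0 \tdots p-1]$ gives
\[
  \sum_{k=0}^{p-1} t_{p,k}\, t_{p,(k-s) \bmod p}
   = \abs[\big]{\set{(S_1,S_2) \in {\cal P}([n])^2 : \Sigma(S_1) - \Sigma(S_2) \equiv s \pmod p}},
\]
since the product on the left counts exactly the pairs $(S_1,S_2)$ with $\Sigma(S_1) \equiv k$ and $\Sigma(S_2) \equiv k-s \pmod p$. Dividing by $p$ and taking the expectation over $p$, I get
\[
  \Ex_{p,k}\bc{t_{p,k}\, t_{p,(k-s) \bmod p}} = \Ex_p\bc*{\tfrac{1}{p}\, N_p},
\]
where $N_p$ denotes the right-hand side count above.

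Second, I would split $N_p = N_{\mathrm{true}} + N_{\mathrm{false}}(p)$, where $N_{\mathrm{true}}$ counts pairs with $\Sigma(S_1) - \Sigma(S_2) = s$ (independent of $p$) and $N_{\mathrm{false}}(p)$ counts the rest, i.e.\ pairs where $\Sigma(S_1) - \Sigma(S_2) \neq s$ but $p$ divides $\Sigma(S_1) - \Sigma(S_2) - s$. The hypothesis of the lemma bounds $N_{\mathrm{true}} \leq 2^{(2-b)n}$ directly. For the false collisions, by Proposition~\ref{Prop:selfsize} I may assume $\sum_i a_i < 2^{4n}$, so that $\abs{\Sigma(S_1) - \Sigma(S_2) - s} \leq 2^{\bo{n}}$ for every pair. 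Applying Fact~\ref{fact:primedivide} to each of the at most $2^{2n}$ pairs with $\Sigma(S_1) - \Sigma(S_2) \neq s$,
\[
  \Ex_p\bc{N_{\mathrm{false}}(p)}
  \leq 2^{2n} \cdot \frac{\bo{n}}{2^{bn}}
  = \wbo{2^{(2-b)n}}.
\]

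Combining the two bounds yields $\Ex_p\bc{N_p} \leq \wbo{2^{(2-b)n}}$. Since $p \geq 2^{bn}$ deterministically, dividing through gives
\[
  \Ex_{p,k}\bc{t_{p,k}\, t_{p,(k-s) \bmod p}}
  \leq \frac{\Ex_p\bc{N_p}}{2^{bn}}
  \leq \wbo{2^{2(1-b)n}},
\]
as claimed. The only non-routine step is the treatment of false collisions, but Fact~\ref{fact:primedivide} together with the polynomial input-size bound from Proposition~\ref{Prop:selfsize} provides exactly the estimate needed; every other step is a direct rearrangement of the expectation.
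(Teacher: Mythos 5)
Your proof is correct and follows essentially the same route as the paper's: rewrite the expectation as a count of pairs $(S_1,S_2)$ with $\Sigma(S_1)-\Sigma(S_2)\equiv_p s$, bound $1/p\le 2^{-bn}$, split into true collisions (bounded by the hypothesis) and false modular collisions (bounded via Fact~\ref{fact:primedivide}), and combine. The only cosmetic difference is that you sum over $k$ first to introduce $N_p$, whereas the paper expands into an indicator sum and swaps the order of summation; you also helpfully make explicit the $\sum_i a_i < 2^{4n}$ bound needed to apply Fact~\ref{fact:primedivide}, which the paper leaves implicit from context.
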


\begin{proof}
  The expectation of $t_{p,k}t_{p,(k-s) \bmod p}$ is equal to the expected number of pairs $(S_1,S_2)$ such that $\Sigma(S_1)$ and $\Sigma(S_2) + s$ are congruent to $k$ modulo $p$, that is $\Ex_{p,k}\bc{t_{p,k} t_{p,(k-s) \bmod p}} = \Ex_{p,k}\bc[\big]{\sum\nolimits_{S_1, S_2} \ind{\Sigma(S_1) \equiv_p \Sigma(S_2) + s \equiv_p k}}$. Since $k$ is uniformly distributed in $[0 \tdots p-1]$, this is equal to $\sum_{S_1, S_2} \Ex_p\bc{\frac{1}{p} \ind{\Sigma(S_1) \equiv_p \Sigma(S_2) + s }} \leq 2^{-bn} \sum_{S_1, S_2} \Pr_p\bc{\Sigma(S_1) \equiv_p \Sigma(S_2) + s}$, using that $p \geq 2^{bn}$. It decomposes as $2^{-bn} \pt[\big]{\sum_{\Sigma(S_1) = s + \Sigma(S_2)} 1 + \sum_{\Sigma(S_1) \neq s + \Sigma(S_2)} \Pr_p\bc{\Sigma(S_1) \equiv_p \Sigma(S_2) + s}}$, where the first inner term is at most $2^{(2-b)n}$ by assumption, and the second term is at most $2^{2n} n 2^{-bn}$ by Fact~\ref{fact:primedivide}. Thus, $\ex{t_{p,k}t_{p,(k-s) \bmod p}} \leq \bo*{2^{-bn} \pt[\big]{2^{(2-b)n} + 2^{2n} n 2^{-bn}}} \leq \wbo{2^{2(1-b) n}}$.
\end{proof}

Finally, we provide a lower bound on the number of \emph{distinct} subset sum values that get hashed to the random bin $T_{p,k}$.

\begin{lemma}
  \label{lem:pz}
  Let $V$ be any subset of the image set $\set{v \in \N : \exists S \subseteq [n], \Sigma(S) = v}$. Let $v_{p,k}$ denote the number of values $v \in V$ such that $v \equiv k \pmod p$. Suppose that $\abs{V} \geq 2^{(1-\ell)n}$ for some $\ell \in [0,1]$. Then,
   \[\left\{\begin{array}{ll}
     \Pr_{p,k}\bc{v_{p,k} \geq 2^{(1-\ell-b)n-2}} = \om{1/n}, & \text{when $\ell \leq 1 - b$}, \\[1mm]
     \Pr_{p,k}\bc{v_{p,k} \geq 1} = \om*{\min\pt*{1/n,2^{(1-\ell-b)n}}},     & \text{when $\ell > 1 - b$.}
   \end{array}\right.\]
\end{lemma}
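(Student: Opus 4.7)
The natural approach is the second moment method (Paley--Zygmund) applied to the random variable $v_{p,k} = \sum_{v \in V} \ind{v \equiv_p k}$, using Fact~\ref{fact:primedivide} to control collisions modulo a random prime. Note that throughout we may assume the values $v \in V$ are bounded by $2^{\bo{n}}$, by Proposition~\ref{Prop:selfsize}, so that $\log\abs{v_1-v_2} = \bo{n}$.

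\textbf{First moment.} For any fixed prime $p$, conditioning on $p$ and averaging over $k$ uniform in $[0 \tdots p-1]$ gives $\ex{\ind{v \equiv_p k}\mid p} = 1/p$, hence $\ex{v_{p,k}} = \ex{\abs{V}/p} \geq \abs{V}/2^{bn+1} \geq 2^{(1-\ell-b)n-1}$.

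\textbf{Second moment.} Expanding the square and separating diagonal from off-diagonal terms,
\[
  \ex{v_{p,k}^2} = \sum_{v_1,v_2 \in V} \Ex_p\bc*{\tfrac{1}{p}\ind{v_1 \equiv_p v_2}} \leq 2^{-bn}\pt*{\abs{V} + \sum_{v_1 \neq v_2 \in V} \Pr_p\bc{v_1 \equiv_p v_2}}.
\]
Fact~\ref{fact:primedivide} bounds each off-diagonal probability by $\bo{n/2^{bn}}$, so $\ex{v_{p,k}^2} \leq \abs{V}/2^{bn} + \bo{n\abs{V}^2/2^{2bn}}$. Combining with the first moment bound,
\[
  \frac{\ex{v_{p,k}}^2}{\ex{v_{p,k}^2}} \geq \om*{\frac{1}{2^{bn}/\abs{V} + n}}.
\]

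\textbf{Case split and conclusion.} When $\ell \leq 1-b$, the hypothesis $\abs{V} \geq 2^{(1-\ell)n} \geq 2^{bn}$ makes $2^{bn}/\abs{V} \leq 1$, so the ratio above is $\om{1/n}$; Paley--Zygmund with $\theta = 1/2$ then yields $\Pr_{p,k}\bc{v_{p,k} \geq \ex{v_{p,k}}/2} \geq \om{1/n}$, and since $\ex{v_{p,k}}/2 \geq 2^{(1-\ell-b)n-2}$, the first claim follows. When $\ell > 1-b$, instead we apply Paley--Zygmund with $\theta = 0$, giving $\Pr_{p,k}\bc{v_{p,k} \geq 1} \geq \om*{1/(2^{bn}/\abs{V} + n)} = \om*{\min(1/n, \abs{V}/2^{bn})} = \om*{\min(1/n, 2^{(1-\ell-b)n})}$.

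\textbf{Expected main difficulty.} No serious obstacle arises here: the computation is standard, and the only delicate point is tracking which of the two terms in $\ex{v_{p,k}^2}$ dominates so as to produce the correct case split. Correctly invoking Fact~\ref{fact:primedivide} requires us to know that the sums in $V$ have polynomial bit length in $n$, which is guaranteed by the reductions stated earlier.
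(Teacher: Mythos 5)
Your proof is correct and follows the same route as the paper: lower-bound the first moment using $p\le 2^{bn+1}$, upper-bound the second moment via Fact~\ref{fact:primedivide}, and apply Paley--Zygmund (with threshold $1/2$ in the first case and $0$ in the second). If anything, your handling of the second moment is a touch cleaner: you retain both terms $\abs{V}/2^{bn}$ and $n\abs{V}^2/2^{2bn}$ and observe which dominates, whereas the paper's stated simplification to $\bo{2^{-bn}\abs{V}}$ when $\ell\le 1-b$ appears to drop the dominant term (it should be $\bo{n\abs{V}^2 2^{-2bn}}$), though this does not affect the lemma's conclusion since the paper only claims the weaker $\om{1/n}$ bound.
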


\begin{proof}
  The expected size of $V_{p,k}$ is at least $\Ex_{p,k}\bc{v_{p,k}} \geq \abs{V}/p \geq \abs{V} 2^{-bn-1}$ since $\{V_{p,k} : 0 \leq k < p\}$ is a partition of~$V$. Similarly to Lemma~\ref{lem:binSizeProd}, the second moment satisfies that $\Ex_{p,k}\bc{v_{p,k}^2} \leq \bo*{2^{-bn} \pt{|V| + \abs{V}^2 n 2^{-bn}}}$ by using Fact~\ref{fact:primedivide}. If $\ell \leq 1-b$ we can further simplify this bound into $\Ex_{p,k}\bc{v_{p,k}^2} \leq \bo*{2^{-bn} |V|}$ since $\abs{V} \geq 2^{(1-\ell)n}$ by assumption. Finally, the result is obtained by applying the Paley–Zygmund inequality $\pr{v_{p,k} \geq \ex{v_{p,k}}/2} \geq \frac{\ex{v_{p,k}}^2} { 4\ex{v_{p,k}^2}}$ and the fact that $\pr{v_{p,k} \geq 1} = \pr{v_{p,k} > 0}$ since $v_{p,k}$ is an integer.
\end{proof}

\section{\Sub}
\label{Sec:Subset}

As an illustration of the utility of the data structure introduced in Section~\ref{Sec:DynProg}, here we show how it can be used to give quantum and classical algorithms for \emph{worst-case} instances of \Sub\ (Problem~\ref{Pbm:Sub}) based on the representation technique, with running times $\wbo{2^{n/3}}$ and $\wbo{2^{n/2}}$ respectively. These algorithms therefore achieve the same complexity as the best-known algorithms for worst-case complexity based on the meet-in-the-middle principle. Both algorithms use, as a first step, a simple search procedure to handle the case where many solutions exist. Note that the tables constructed by the algorithms do not depend on the target value $m$.

\begin{figure}[htbp]
  \begin{algorithm}[H]
    \caption{Quantum representation technique for \Sub}
    \label{Algo:Quantum-DP-SS}
    \KwIn{Instance of \Sub\ of size $n$ and target $m \leq 2^{4n}$.}
    \KwOut{A subset $S \subseteq [n]$ satisfying $\Sigma(S) = m$ if one exists, otherwise output None.} \vspace{0.3\baselineskip}
    Run the quantum search algorithm (Fact~\ref{fact:search}) over the sets $S \in {\cal P}([n])$, where a set is marked if $\Sigma(S) = m$. Stop it and proceed to step~2 if the running time exceeds~$\wbo{2^{n/3}}$, otherwise output the set it found within the allotted time. \\[1.5mm]
    Choose a random prime $p\in [2^{n/3} \tdots 2^{n/3+1}]$. \\[1mm]
    Construct the table $t_p[i,j]$ for $i=0,\ldots, n$ and $j=0,\ldots, p-1$ (see Section~\ref{Sec:DynProg}). \\[1mm]
    Run quantum search on $T_{p,m \bmod p}$ marking the sets $S \in T_{p,m \bmod p}$ satisfying~$\Sigma(S) = m$.
  \end{algorithm}
\end{figure}

\begin{theorem}[\Sub, quantum]
  \label{thm:Quantum-DP-SS}
  Algorithm~\ref{Algo:Quantum-DP-SS} solves \Sub\ in time $\wbo{2^{n/3}}$ with high probability.
\end{theorem}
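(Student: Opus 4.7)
I would analyse Algorithm~\ref{Algo:Quantum-DP-SS} by a case split on the number $K \geq 0$ of subsets $S \subseteq [n]$ satisfying $\Sigma(S) = m$. The goal is to show that Step~1 handles the ``many-solutions'' regime $K \geq 2^{n/3}$, while Step~2 handles the ``few-solutions'' regime $K < 2^{n/3}$. Throughout, Proposition~\ref{Prop:selfsize} together with the input bound $m \leq 2^{4n}$ lets us assume $|\Sigma(S) - m| = 2^{\bo{n}}$ for every $S \subseteq [n]$, so that Fact~\ref{fact:primedivide} applies with a $\bo{n}$ logarithmic factor.

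For Step~1, Fact~\ref{fact:search} states that quantum search over ${\cal P}([n])$ (of size $2^n$) finds a marked subset in expected time $\wbo{\sqrt{2^n/K}}$. When $K \geq 2^{n/3}$ this is $\wbo{2^{n/3}}$, so by Markov's inequality the truncation at the allotted budget succeeds with constant probability, which a constant number of repetitions boosts to high probability. When $K < 2^{n/3}$ (including $K = 0$), Step~1 likely exceeds its budget and control passes to Step~2.

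For Step~2, the key observation is that every solution trivially satisfies $\Sigma(S) \equiv m \pmod p$ for \emph{every} $p$, so all $K$ solutions lie in the bin $T_{p, m \bmod p}$. The crucial step is to control the bin size. Writing
\begin{equation*}
|T_{p, m \bmod p}| \;=\; K + \left|\set[\big]{S \subseteq [n] : \Sigma(S) \neq m, \; \Sigma(S) \equiv m \pmod p}\right|,
\end{equation*}
Fact~\ref{fact:primedivide} with $|\Sigma(S) - m| = 2^{\bo{n}}$ and $p \geq 2^{n/3}$ yields $\Ex_p\bc{|T_{p, m \bmod p}|} \leq K + \wbo{2^{2n/3}}$, after which Markov's inequality gives $|T_{p, m \bmod p}| \leq \wbo{2^{2n/3}}$ with constant probability over~$p$. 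The table $t_p$ is constructed in time $\bo{n^2 p} = \wbo{2^{n/3}}$ by Lemma~\ref{lem:dp-table-time}, and each query to $T_{p, m \bmod p}$ costs $\bo{n^2}$ by Theorem~\ref{Thm:effEnum}. Plugging this oracle into Fact~\ref{fact:search}, for $K \geq 1$ the quantum search returns a solution in time $\wbo{\sqrt{|T_{p, m \bmod p}|/K}} = \wbo{2^{n/3}}$, while for $K = 0$ it correctly reports that no solution exists within the same budget. Repeating with $\bo{1}$ fresh primes amplifies the success to high probability.

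The delicate point, and the main obstacle, is the ``few-solutions'' regime: three quantities must line up simultaneously at $\wbo{2^{n/3}}$, namely the dynamic programming table construction (which forces $p \asymp 2^{n/3}$), the typical bin size (which requires the hashing argument from Fact~\ref{fact:primedivide}), and the quantum search cost over the bin (Fact~\ref{fact:search}, with the $1/\sqrt{K}$ speedup coming for free since all $K$ solutions are automatically inside the chosen bin). Once this three-way balance is verified, the overall running time and correctness follow by combining the two regimes.
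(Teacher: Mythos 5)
Your proposal is correct and takes essentially the same approach as the paper: case split on the number of solutions $K$ at threshold $2^{n/3}$, Fact~\ref{fact:search} for Step~1, then Fact~\ref{fact:primedivide} plus Markov for the bin size $\wbo{2^{2n/3}}$, with $t_p$ built by Lemma~\ref{lem:dp-table-time} and queried via Theorem~\ref{Thm:effEnum}. One small stylistic note: in Step~2 the paper just bounds the search cost by $\sqrt{|T_{p,m\bmod p}|} = \wbo{2^{n/3}}$ without appealing to a $1/\sqrt{K}$ gain; since you are already in the $K < 2^{n/3}$ regime the extra factor $1/\sqrt{K}$ is not what makes the budget work (what matters is the reduced search space), so your emphasis on it is a slight red herring but not an error.
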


\begin{proof}
  If the number of solutions is at least $\abs{\set{S : \Sigma(S) = m}} \geq 2^{n/3}$ then step~1 suffices to solve the problem with high probability since quantum search needs time $\wbo{\sqrt{2^n/2^{n/3}}}=\wbo{2^{n/3}}$. Hence, let us assume that the number of solutions is at most $2^{n/3}$. The table $t_p$ can be constructed in time $\wbo{2^{n/3}}$ according to Lemma~\ref{lem:dp-table-time}. The expected size of $T_{p,m \bmod p}$ can be bounded by $\Ex_p\bc{t_{p,m \bmod p}} = \abs{\set{S : \Sigma(S) = m}} + \sum_{S: \Sigma(S)\neq m}\pr{\Sigma(S) \equiv_p m } = \wbo{2^{2n/3}}$ since $\pr{\Sigma(S) \equiv_p m } \in \bo{n/2^{n/3}}$ by Fact~\ref{fact:primedivide}. Finally, by Markov's inequality, with high probability $t_{p,m \bmod p}$ is no more than a small multiple of this expectation, and quantum search over a bin of size $\wbo{2^{2n/3}}$ takes time $\wbo{2^{n/3}}$ (using the fast oracle of Theorem~\ref{Thm:effEnum}).
\end{proof}

The classical algorithm is similar to the quantum one, except that it constructs a bigger table~$t_p$ to balance the cost of this construction and the cost of the classical search.

\begin{figure}[htbp]
  \begin{algorithm}[H]
    \caption{Classical representation technique for \Sub}
    \label{Algo:Classical-DP-SS}
    \KwIn{Instance of \Sub\ of size $n$ and target $m \leq  2^{{4n}}$.}
    \KwOut{A subset $S \subseteq [n]$ satisfying $\Sigma(S) = m$ if one exists, otherwise output None.} \vspace{0.3\baselineskip}
    Sample $2^{n/2}$ subsets $S \subseteq [n]$ uniformly at random and check if $\Sigma(S) = m$. If no solution is found, proceed to step~2. \\[1mm]
    Choose a random prime $p\in [2^{n/2} \tdots 2^{n/2+1}]$. \\[1mm]
    Construct the table $t_p[i,j]$ for $i=0,\ldots, n$ and $j=0,\ldots, p-1$ (see Section~\ref{Sec:DynProg}). \\[1mm]
    Enumerate the elements of $T_{p,m \bmod p}$ until finding $S \in T_{p,m \bmod p}$ that satisfies~$\Sigma(S) = m$.
  \end{algorithm}
\end{figure}

\begin{theorem}[\Sub, classical]
  \label{thm:Classical-DP-SS}
  Algorithm~\ref{Algo:Classical-DP-SS} solves \Sub\ in time $\wbo{2^{n/2}}$ with high probability.
\end{theorem}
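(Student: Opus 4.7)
The plan mirrors the proof of Theorem~\ref{thm:Quantum-DP-SS}, but with the bin size $p$ chosen to equate the cost of constructing $t_p$ with the cost of classically enumerating $T_{p,m\bmod p}$, rather than with the cost of quantum search over it. The cut-off in step~1 is correspondingly set to $2^{n/2}$ instead of $2^{n/3}$.

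First I would handle the ``many solutions'' case. If $|\set{S\subseteq[n] : \Sigma(S)=m}|\geq 2^{n/2}$, then a uniformly random $S\subseteq[n]$ is a solution with probability at least $2^{n/2}/2^n = 2^{-n/2}$, so drawing $2^{n/2}$ independent samples in step~1 yields a solution with probability at least $1-(1-2^{-n/2})^{2^{n/2}} = 1-o(1)$. Each sample is checked in polynomial time, so step~1 runs in time $\wbo{2^{n/2}}$.

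Next, assuming $|\set{S : \Sigma(S)=m}|<2^{n/2}$, I analyze step~2. By Lemma~\ref{lem:dp-table-time}, constructing $t_p$ costs $\bo{n^2 p} = \wbo{2^{n/2}}$. The expected bin size is
\[
\Ex_p\bc{t_{p,m\bmod p}} \;\leq\; |\set{S:\Sigma(S)=m}| + \sum_{S:\Sigma(S)\neq m}\Pr_p\bc{\Sigma(S)\equiv_p m},
\]
where the first term is at most $2^{n/2}$ by assumption, and the second is at most $2^n\cdot\wbo{2^{-n/2}} = \wbo{2^{n/2}}$ by Fact~\ref{fact:primedivide} (invoking Proposition~\ref{Prop:selfsize} to guarantee that $|\Sigma(S)-m|\leq 2^{\bo{n}}$, so that the $\log|\Sigma(S)-m|$ factor is absorbed into the $\wbo{\cdot}$). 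By Markov's inequality, $t_{p,m\bmod p}=\wbo{2^{n/2}}$ with high probability. By Theorem~\ref{Thm:effEnum}, each element of $T_{p,m\bmod p}$ can be produced in time $\bo{n^2}$ using random access to $t_p$, so the enumeration in step~4 finishes in time $\wbo{2^{n/2}}$. A union bound over the two possible failure events (too few samples hit a solution in step~1 when the solution set is large; or the bin is much larger than its expectation in step~2) completes the high-probability guarantee.

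There is no real obstacle: the only point that requires a little care is verifying that both the table-construction cost $\bo{n^2 p}$ and the expected bin size $\bo{p + 2^n/p}$ are simultaneously $\wbo{2^{n/2}}$, which pins the choice $p\in[2^{n/2}\tdots 2^{n/2+1}]$; this is precisely the classical analogue of the $p\approx 2^{n/3}$ balance used in Theorem~\ref{thm:Quantum-DP-SS}.
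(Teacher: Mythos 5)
Your proof is correct and follows essentially the same route as the paper's: bound the expected size of $T_{p,m\bmod p}$ by $\wbo{2^{n/2}}$ via Fact~\ref{fact:primedivide} (after reducing to the few-solutions case), apply Markov, and enumerate with Theorem~\ref{Thm:effEnum}. The only difference is that you spell out the coupon-collector-style probability bound for step~1, which the paper leaves implicit.
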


\begin{proof}
  Step~1 does not suffice to solve the problem when the number of solutions is smaller than~$\bo{2^{n/2}}$. Let us suppose that we are in this case. From Lemma~\ref{lem:dp-table-time}, the $(n+1)\times p$ table $t_{p}$ can be constructed in time $\wbo{2^{n/2}}$, after which each query to $T_{p,m \bmod p} = \{S\subseteq[n] : \Sigma(S)\equiv_p m \}$ can be made in time $\bo{n^2}$.  By linearity of expectation, the expected size of $T_{p,m \bmod p}$ can be bounded by $\Ex_p\bc{t_{p,m \bmod p}} = \abs{\set{S : \Sigma(S) = m}} + \sum_{S: \Sigma(S)\neq m}\pr{\Sigma(S) \equiv_p m } = \wbo{2^{n/2}}$ since, by Fact~\ref{fact:primedivide}, $\pr{\Sigma(S) \equiv_p m} \in \bo{n/2^{n/2}}$ for each of the (at most $2^n$) sets $S$ for which $\Sigma(S) \neq m$. By Markov's inequality, with high probability $t_{p,m \bmod p}$ is no more than a small multiple of this expectation and thus can be enumerated in time $\wbo{2^{n/2}}$ using Theorem~\ref{Thm:effEnum}.
\end{proof}

\section{\SSub}
\label{Sec:DSum}

In this section we present the two quantum algorithms for solving \SSub\ (Problem~\ref{Pbm:SSub}). The running time of both algorithms -- expressed in Theorems~\ref{Thm:Rep-Dsum} and~\ref{Thm:Mim-Dsum}~-- are functions of the size of a \emph{maximum solution} of the input. This notion plays a central role in our algorithms and is defined next.

\begin{definition}[Maximum solution]
  We say that two disjoint subsets $S_1,S_2 \subseteq [n]$ that form a solution to an instance of \SSub\ are a {\em maximum solution} if the size $\abs{S_1} + \abs{S_2} = \ell n$ is the largest among all such solutions. We call $\ell \in (0,1)$ the maximum solution {\em ratio}.
\end{definition}

By choosing the faster of these two algorithms for each $\ell \in \{1/n , 2/n, \ldots , (n-1)/n\}$ until a solution has been found (or it can be concluded that no solution exists), we obtain an overall quantum algorithm for \SSub\ with the following performance:

\begin{theorem}[\SSub, quantum]
  \label{Thm:Dsum}
  There is a quantum algorithm that, given an instance of \SSub\ with maximum solution ratio $\ell \in (0,1)$, outputs a solution with at least inverse polynomial probability in time $\wbo{2^{\gamma(\ell) n}}$ where
    \begin{numcases}{\gamma(\ell) =}
      (1 + \ell)/4       & if $\ell_1 \leq \ell \leq 3/5$,   \tag*{(Theorem~\ref{Thm:Rep-Dsum})} \\[1mm]
      \ell/2 + 1/10      & if $3/5 < \ell < \ell_2$,   \tag*{(Theorem~\ref{Thm:Rep-Dsum})} \\[1mm]
      (h(\ell) + \ell)/3 & otherwise    \tag*{(Theorem~\ref{Thm:Mim-Dsum})}
    \end{numcases}
  and $\ell_1 \approx 0.190$ and $\ell_2\approx 0.809$ are solutions to the equations $(h(\ell) + \ell)/3 = (1+\ell)/4$ and $(h(\ell) + \ell)/3 = \ell/2 + 1/10$ respectively. In particular, the worst-case complexity is~$\bo{2^{0.504 n}}$.
\end{theorem}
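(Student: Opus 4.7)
The plan is to assemble the desired algorithm from the two specialized quantum subroutines referenced in the statement: the representation-technique-based algorithm of Theorem~\ref{Thm:Rep-Dsum} and the meet-in-the-middle-based algorithm of Theorem~\ref{Thm:Mim-Dsum}. Each subroutine takes a target value of $\ell$ as input and, when that guess coincides with the true maximum solution ratio, returns a valid pair $(S_1,S_2)$ with inverse polynomial success probability in the claimed running time. Since the true ratio is unknown, the outer algorithm enumerates all $n-1$ candidate values $\ell \in \{1/n,2/n,\ldots,(n-1)/n\}$ and, for each, runs whichever of the two subroutines has the smaller exponent at that value, aborting it once its advertised time budget expires. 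The enumeration contributes only a $\poly(n)$ multiplicative overhead, so the total running time is still $\wbo{2^{\gamma(\ell^*) n}}$, where $\ell^*$ denotes the true maximum solution ratio.

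The second step is to verify that the piecewise function $\gamma(\ell)$ in the statement is indeed the pointwise minimum of the two available exponents. Theorem~\ref{Thm:Rep-Dsum} provides the exponent $(1+\ell)/4$ for $\ell \leq 3/5$ and the exponent $\ell/2 + 1/10$ for $\ell > 3/5$ (the split corresponding, as described in the introduction, to the two different bin-size balancings that use quantum pair finding and variable-time amplitude amplification, respectively). Theorem~\ref{Thm:Mim-Dsum} provides the exponent $(h(\ell)+\ell)/3$ for every $\ell$. The thresholds $\ell_1$ and $\ell_2$ are defined as the cross-over points where the representation-based exponent meets the meet-in-the-middle one, and a direct comparison on each of the four subintervals $(0,\ell_1]$, $[\ell_1,3/5]$, $[3/5,\ell_2]$, $[\ell_2,1)$ yields the piecewise formula for $\gamma$. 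At $\ell = 3/5$ the two representation-based pieces also agree, both giving $2/5$, so $\gamma$ is continuous on $(0,1)$.

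Finally, to obtain the worst-case bound $\bo{2^{0.504n}}$, one maximizes $\gamma(\ell)$ on $(0,1)$. On $[\ell_1,3/5]$ the function $(1+\ell)/4$ is linearly increasing and reaches $2/5$ at $\ell = 3/5$; on $(3/5,\ell_2]$ the function $\ell/2 + 1/10$ is linearly increasing and reaches $\ell_2/2 + 1/10 \approx 0.504$ at $\ell_2$; and on $[\ell_2,1)$ the derivative of $(h(\ell)+\ell)/3$ equals $\log_2(2(1-\ell)/\ell)/3$, which is negative for all $\ell > 2/3$ and hence on the whole interval $[\ell_2,1)$, so this piece is decreasing and maximized at $\ell_2$. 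The remaining piece on $(0,\ell_1]$ is bounded above by $(1+\ell_1)/4 \approx 0.297$ and thus never attains the maximum. The unique maximum therefore occurs at $\ell = \ell_2$, giving $\gamma(\ell_2) \approx 0.504$, as claimed. The only conceptual difficulty in this proof is the unknown value of $\ell$, which is resolved by exhaustive enumeration; the substantive technical work is carried out by Theorems~\ref{Thm:Rep-Dsum} and~\ref{Thm:Mim-Dsum}, which this theorem merely combines.
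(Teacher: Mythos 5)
Your proposal follows the same structure as the paper's own argument: enumerate candidate maximum solution ratios $\ell \in \{1/n, \ldots, (n-1)/n\}$, and for each run whichever of Theorem~\ref{Thm:Rep-Dsum} or Theorem~\ref{Thm:Mim-Dsum} gives the smaller exponent; your verification of the piecewise formula for $\gamma$, its continuity at $\ell = 3/5$, and the location of the maximum at $\ell_2$ is correct. The one imprecision is in how you account for the running time as a function of the \emph{true} ratio $\ell^*$. You say the algorithm enumerates all $n-1$ candidate values, each run up to its advertised budget, and conclude the total is $\wbo{2^{\gamma(\ell^*) n}}$. But summing $\wbo{2^{\gamma(\ell)n}}$ over all candidates only gives $\wbo{2^{\max_\ell \gamma(\ell)\, n}} = \wbo{2^{0.504n}}$; that establishes the worst-case clause but not the claimed per-instance bound. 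To obtain the per-$\ell$ statement one must try candidate values in \emph{increasing order of $\gamma(\ell)$}, verify any returned pair in polynomial time, and halt as soon as a verified solution appears (this is what the paper's sketch means by ``until a solution has been found''). Since $\gamma$ is unimodal --- increasing up to $\ell_2 \approx 0.809$ and decreasing afterward --- iterating in increasing order of $\ell$ would overshoot the budget for instances with $\ell^* > \ell_2$; you genuinely need the ordering by $\gamma$, or equivalently a dovetailing schedule. This is a small fix, but as written your argument only proves the weaker worst-case bound.
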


Since a potential solution can be verified in polynomial time in $n$, in what follows we describe our algorithms on yes instances with maximum solution ratio $\ell$. As presented, the algorithms find a solution with inverse polynomial probability in $n$, which can be amplified to constant probability in polynomial time.
Recall the classical algorithm of ~\cite{MNPW19c} for \Equ\ is based on the concept of a \emph{minimum} solution (Definition~\ref{def:min-sol}), rather than a \emph{maximum} solution. In Appendix~\ref{app:quantum-equal}, we present an analogous quantum algorithm for \Equ\ whose complexity is expressed in terms of the minimum solution ratio $\lmin$ (we do not know how to extend this result to \SSub). While this does not change the algorithmic complexity in the worst case, for a given instance of \Equ\ the quantity~$\lmin$ may be smaller than $\ell$.


\subsection{Representation technique algorithm}
\label{Sec:repEsum}

Our representation-technique-based algorithm is given in Algorithm~\ref{Algo:Rep-Dsum}, and uses the dynamic programming table of Section~\ref{Sec:DynProg}. Before constructing that table, we first check whether the input contains many solution pairs (in which case a simple quantum search is sufficient). Depending on the value of the maximum solution ratio $\ell$, we may also need to apply variable-time amplitude amplification (Fact~\ref{fact:vtaa}) on top of quantum pair finding (Theorem~\ref{Thm:PairFinding}).

\begin{figure}[htbp]
  \begin{algorithm}[H]
    \caption{Quantum representation technique for \SSub}
    \label{Algo:Rep-Dsum}
    \KwIn{Instance $(a,s)$ of \SSub\ with $\sum_{i=1}^n a_i < 2^{4n}$ and maximum solution ratio $\ell$.}
    \KwOut{Two subsets $S_1,S_2 \subseteq [n]$.} \vspace{0.3\baselineskip}
    Set $b = (1 + \ell)/4$ if $\ell \leq 3/5$ and $b = 2/5$ if $\ell > 3/5$. \\
    Run the quantum search algorithm (Fact~\ref{fact:search}) over the set of pairs $(S_1,S_2) \in {\cal P}([n])^2$, where a pair is marked if $\Sigma(S_1) = \Sigma(S_2) + s$ and $S_1 \neq S_2$. Stop it and proceed to step~3 if the running time exceeds~$\wbo{2^{bn/2}}$, otherwise output the pair it found within the allotted time. \\[1.5mm]
    If $\ell > 3/5$ then run variable-time amplitude amplification (Fact~\ref{fact:vtaa}) on steps~4--6, otherwise run them once: \\
    \Indp Choose a random prime $p\in [2^{bn} \tdots 2^{bn + 1}]$ and a random integer $k \in [0 \tdots p-1]$. \\
    Construct the table $t_p[i,j]$ for $i=0,\ldots, n$ and $j=0,\ldots, p-1$ (see Section~\ref{Sec:DynProg}). \\
    Run the quantum pair finding algorithm (Theorem~\ref{Thm:PairFinding}) to find if there exist two sets $S_1 \in T_{p,k}$ and $S_2 \in T_{p,(k-s) \bmod p}$ such that $\Sigma(S_1) = \Sigma(S_2)+s$ and $S_1 \neq S_2$. If so, output the pair $(S_1, S_2)$ it found.
  \end{algorithm}
\end{figure}

Note that `run variable-time amplitude amplification on steps~4--6' means that one should apply the procedure implicit in Fact~\ref{fact:vtaa} to the algorithm $\mathcal{A}$ defined by the following process (i) Create a uniform superposition over all primes $p\in [2^{bn} \tdots 2^{bn + 1}]$ and, for each $p$, all $k \in [0 \tdots p-1]$. (ii) For each $p$, coherently construct the table $t_p$. (iii) Run quantum pair finding coherently on each pair of sets $T_{p,k}, T_{p,(k-s)\bmod p}$, marking the $(p,k)$ tuple if a pair is found.

The analysis of the above algorithm relies on the random bin statistics presented in Section~\ref{Sec:bin}. We first define the \emph{collision values set} which contains the values of all the possible solution pairs.

\begin{definition}[\sc Collision Values set]
 Given an instance $(a,s)$ to the $\SSub$ problem, the \emph{collision values set} is the set $V = \set{v \in \N : \exists S_1 \neq S_2, v = \Sigma(S_1) = \Sigma(S_2) + s}$.
\end{definition}

We show that the collision values set $V$ is of size at least $2^{(1-\ell)n}$ when the maximum solution ratio is~$\ell$. Thus, by Lemma~\ref{lem:pz}, we can lower bound the number of values in $V$ that get hashed to a random bin $T_{p,k}$.

\begin{lemma}
  \label{lem:maxRatioEsum}
  If the maximum solution ratio is $\ell$ then $|V| \geq 2^{(1-\ell)n}$.
\end{lemma}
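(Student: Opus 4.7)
The plan is to exhibit $2^{(1-\ell)n}$ distinct values in $V$ by using a maximum solution as a ``seed'' that can be extended in exponentially many inequivalent ways over its complement.

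Fix a maximum solution $(S_1,S_2)$, so $S_1,S_2$ are disjoint, $\Sigma(S_1)=\Sigma(S_2)+s$, and $|S_1|+|S_2|=\ell n$. Let $C=\overline{S_1\cup S_2}$, which has size $(1-\ell)n$. For each $X\subseteq C$, the pair $(S_1\cup X, S_2\cup X)$ is again a valid pair of disjoint subsets (since $X$ is disjoint from both $S_1$ and $S_2$), and
\[
\Sigma(S_1\cup X)=\Sigma(S_1)+\Sigma(X)=\Sigma(S_2)+s+\Sigma(X)=\Sigma(S_2\cup X)+s,
\]
so this pair is a solution with value $\Sigma(S_1)+\Sigma(X)\in V$. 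Thus the map $X\mapsto \Sigma(S_1)+\Sigma(X)$ sends the $2^{(1-\ell)n}$ subsets of $C$ into $V$, and it suffices to show that this map is injective, i.e.\ that distinct $X,X'\subseteq C$ have distinct sums.

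The main (and only nontrivial) step is this injectivity claim, which I would prove by contradiction using the maximality of $(S_1,S_2)$. Suppose $X\neq X'$ are subsets of $C$ with $\Sigma(X)=\Sigma(X')$. Set $Y_1=X\setminus X'$ and $Y_2=X'\setminus X$; these are disjoint subsets of $C$, disjoint from $S_1\cup S_2$, and they satisfy $\Sigma(Y_1)=\Sigma(Y_2)$. Because the $a_i$ are \emph{positive}, the equality $\Sigma(Y_1)=\Sigma(Y_2)$ together with $Y_1\cap Y_2=\varnothing$ forces either both $Y_i$ empty (impossible since $X\neq X'$) or both nonempty. Then the pair
\[
(S_1\cup Y_1,\; S_2\cup Y_2)
\]
consists of two disjoint subsets of $[n]$ (all four sets $S_1,S_2,Y_1,Y_2$ are mutually disjoint), they are distinct (being disjoint and both nonempty), and
\[
\Sigma(S_1\cup Y_1)=\Sigma(S_1)+\Sigma(Y_1)=\Sigma(S_2)+s+\Sigma(Y_2)=\Sigma(S_2\cup Y_2)+s,
\]
so this is a valid solution of total size $\ell n+|Y_1|+|Y_2|>\ell n$, contradicting the maximality of $\ell$.

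The hard part is really just recognizing that positivity of the $a_i$ forces both $Y_1$ and $Y_2$ to be nonempty and thereby strictly increases the solution size; everything else is bookkeeping. With injectivity established, $|V|\ge|\{\Sigma(X):X\subseteq C\}|=2^{(1-\ell)n}$, which is the claim.
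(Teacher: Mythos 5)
Your proof is correct and follows the same approach as the paper: fix a maximum solution $(S_1,S_2)$, extend it by subsets $X$ of the complement $\overline{S_1\cup S_2}$, and argue injectivity of $X\mapsto\Sigma(S_1\cup X)$ by showing that a collision would produce a strictly larger disjoint solution $(S_1\cup(X\setminus X'),\,S_2\cup(X'\setminus X))$. The only difference is a minor, slightly superfluous digression on positivity of the $a_i$: the size strictly increases already because $X\neq X'$ forces $|X\Delta X'|>0$, and the two new sets are automatically distinct because they are disjoint with nonempty union.
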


\begin{proof}
  Let $S_1, S_2\subseteq \{1, \ldots, n\}$ be a maximum solution of size $\abs{S_1}+\abs{S_2} = \ell n$. Then for any $S \subseteq  [n] \setminus \pt{S_1 \cup S_2}$ the sets $S_1 \cup S$ and $S_2 \cup S$ form a solution, and for $S \neq S'$, the values $\Sigma(S_1 \cup S) $ and $\Sigma(S_1 \cup S') $ must be distinct. If this were not the case then $S_1 \cup \pt{S \setminus S'}$ and $S_2 \cup \pt{S' \setminus S}$ would form a disjoint solution of size larger than $\ell$.
\end{proof}

We finally analyse Algorithm~\ref{Algo:Rep-Dsum} in the next theorem.

\begin{theorem}[\SSub, representation]
  \label{Thm:Rep-Dsum}
  Given an instance of \SSub\ with $\sum_{i=1}^n a_i < 2^{4n}$  and maximum solution ratio $\ell \in (0,1)$, \emph{Algorithm~\ref{Algo:Rep-Dsum}} finds a solution with inverse polynomial probability in time $\wbo{2^{(1+\ell) n/4}}$ if $\ell \leq 3/5$, and $\wbo{2^{(\ell/2+1/10) n}}$ if $\ell > 3/5$.
\end{theorem}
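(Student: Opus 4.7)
The plan is to separate the analysis by whether step~2's quantum search succeeds within its $\wbo{2^{bn/2}}$ budget. Quantum search over the $2^{2n}$ candidate pairs takes time $\wbo{\sqrt{2^{2n}/K}}$ with $K$ marked pairs, so a failure of step~2 implies $K \leq 2^{(2-b)n}$, which is precisely the hypothesis of Lemma~\ref{lem:binSizeProd}. This bound on the total number of solution pairs is the engine driving the expected bin-product bound needed for pair finding. Under this conditional, I would use Lemma~\ref{lem:dp-table-time} for the $\wbo{2^{bn}}$ table construction and Theorem~\ref{Thm:effEnum} for $\bo{n^2}$-time queries into $T_{p,k}$ and $T_{p,(k-s)\bmod p}$; these are exactly the ingredients required by the pair-finding subroutine of Theorem~\ref{Thm:PairFinding}.

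For the first regime $\ell \leq 3/5$ with $b = (1+\ell)/4$, I note that $\ell \leq 1-b$, placing us in the first case of Lemma~\ref{lem:pz}. Combined with $|V| \geq 2^{(1-\ell)n}$ from Lemma~\ref{lem:maxRatioEsum}, a random bin $T_{p,k}$ meets at least $K \geq 2^{(1-\ell-b)n-2}$ \emph{distinct} collision values with probability $\om{1/n}$. Distinct collision values automatically produce disjoint pairs in $T_{p,k} \times T_{p,(k-s)\bmod p}$ (since $\Sigma(S_1)$ determines $S_1$'s row uniquely in its bin), so the disjointness hypothesis of Theorem~\ref{Thm:PairFinding} is satisfied. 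A Markov bound applied to Lemma~\ref{lem:size} forces both bin sizes to be of order $\wbo{2^{(1-b)n}}$ with constant probability; one then verifies $M \leq KN^2$ for this parameterization and invokes the $\wbo{(NM/K)^{1/3}}$ branch of Theorem~\ref{Thm:PairFinding}, giving pair-finding time $\wbo{2^{(1-b+\ell)n/3}}$, which evaluates to $\wbo{2^{(1+\ell)n/4}}$. Since this matches the table-construction cost and dominates $\wbo{2^{bn/2}}$, the overall running time in this regime is $\wbo{2^{(1+\ell)n/4}}$.

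For the regime $\ell > 3/5$ with $b = 2/5$, now $\ell > 1-b$ and Lemma~\ref{lem:pz} only guarantees a random bin contains at least one solution with probability $\rho = \om{2^{(3/5-\ell)n}}$, which decays exponentially. I therefore wrap steps~4--6 in variable-time amplitude amplification (Fact~\ref{fact:vtaa}), treating the choice of $(p,k)$ as the randomness of the outer algorithm $\mathcal{A}$. The stopping time decomposes as $\tau \leq \wbo{2^{bn} + (t_{p,k} \cdot t_{p,(k-s)\bmod p})^{1/3}}$, and by Jensen's inequality applied to the concave map $x \mapsto x^{2/3}$, together with Lemma~\ref{lem:binSizeProd}, I obtain
\[
\ex{\tau^2} \leq \wbo{\max\pt*{2^{2bn},\, 2^{4(1-b)n/3}}} = \wbo{2^{4n/5}}
\]
for $b = 2/5$. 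Fact~\ref{fact:vtaa} then delivers an overall complexity of $\wbo{\sqrt{\ex{\tau^2}/\rho}} = \wbo{2^{(\ell/2 + 1/10)n}}$, matching the claimed bound.

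The main obstacles I expect are threefold. First, confirming that the $N \leq M \leq KN^2$ branch of Theorem~\ref{Thm:PairFinding} is in force throughout the $\ell \leq 3/5$ regime (rather than the large-$M$ Grover-style branch) requires checking an inequality in $(\ell,b)$ that is routine but easy to mis-apply at boundary values. Second, correctly casting the VTAA application so that the randomness of $(p,k)$ and the internal randomness of pair finding combine into a single quantum procedure with a well-defined stopping time and success probability is a subtle setup step, especially since the table $t_p$ must be constructed coherently in superposition over $p$. Third, and most delicate, the $\ex{\tau^2}$ bound relies on the hypothesis of Lemma~\ref{lem:binSizeProd} that there are at most $2^{(2-b)n}$ marked pairs; this must be legitimately imported from the failure of step~2, so I will need to argue the conditioning carefully rather than treating steps~2 and 3--6 as unrelated.
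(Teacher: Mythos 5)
Your proposal is correct and follows the same path as the paper's proof: the same step-2 preprocessing, the same use of Lemmas~\ref{lem:dp-table-time},~\ref{lem:size},~\ref{lem:binSizeProd},~\ref{lem:pz},~\ref{lem:maxRatioEsum} and Theorems~\ref{Thm:effEnum},~\ref{Thm:PairFinding}, and the same balancing of $b$ in each regime. The only omission is the $\sqrt{\max(t_{p,k}, t_{p,(k-s)\bmod p})}$ term in the stopping-time bound for the $\ell>3/5$ case (arising from the $M \geq KN^2$ branch of Theorem~\ref{Thm:PairFinding}, which the algorithm may hit when a bin holds no or few solution pairs), but since that term contributes only $\wbo{2^{(1-b)n}}=\wbo{2^{3n/5}}$ to $\ex{\tau^2}$, it does not affect the conclusion.
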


\begin{proof}
  Step~2 of Algorithm~\ref{Algo:Rep-Dsum} handles the case where the total number of solution pairs exceeds~$2^{(2-b)n}$. In this situation, the quantum search algorithm can find a solution pair in time $\wbo{\sqrt{2^{2n}/2^{(2-b)n}}} = \wbo{2^{bn/2}}$, which is smaller than the complexity stated in Theorem~\ref{Thm:Rep-Dsum}.

  {\it Analysis when $\ell \leq 3/5$.}
  In this case the algorithm executes steps~4--6 only once. From Lemma~\ref{lem:dp-table-time}, the table $t_p$ can be constructed in time $\wbo{2^{bn}}$, after which each query to the elements of $T_{p,k}$ can be performed in time $\bo{n^2}$ (Theorem~\ref{Thm:effEnum}). By Lemma~\ref{lem:pz}, the number of disjoint solution pairs contained in~$T_{p,k} \times T_{p,(k-s) \bmod p}$ is at least $v_{p,k} \geq 2^{(1-\ell-b)n-2}$ with probability $\om{1/n}$. By Lemma~\ref{lem:size} and Markov's inequality, the sizes of $T_{p,k}$ and $T_{p,(k-s) \bmod p}$ are at most $t_{p,k}, t_{p,(k-s) \bmod p} \leq n^2 2^{(1-b)n}$ with probability at least $1-1/n^2$. Thus, with probability $\om{1/n}$ we can assume that both of these events occur. If this is the case, then the time to execute step~6 of the algorithm is $\wbo[\Big]{\pt[\big]{t_{p,k}t_{p,(k-s) \bmod p}/v_{p,k}}^{1/3}} = \wbo{2^{(1+\ell-b)n/3}}$ since the first complexity given in Theorem~\ref{Thm:PairFinding} is the largest one for our choice of parameters. This is at most~$\wbo{2^{(1+\ell)n/4}}$ when~$b = (1+\ell)/4$.

  {\it Analysis when $\ell > 3/5$.}
  We assume that the total number of solution pairs is at most $2^{(2-b)n}$ (otherwise we would have found a collision at step~2 with high probability). Given $p$ and $k$, the base algorithm (steps~4--6) succeeds if there is a solution in $T_{p,k} \times T_{p,(k-s) \bmod p}$, i.e.~$v_{p,k} \geq 1$. Therefore by Lemmas~\ref{lem:maxRatioEsum} and \ref{lem:pz}, we have for its success probability $\rho = \om{\min\pt{1/n,2^{(1-\ell-b)n}}}$. We claim that $\ex{\tau^2} = \wbo{2^{2bn}}$ where $\tau$ is the stopping time of the base algorithm. Constructing the table $t_p$ takes time $\wbo{p}$, and by summing the two complexities given in Theorem~\ref{Thm:PairFinding} the quantum pair finding algorithm takes time at most $\wbo[\big]{\pt{t_{p,k}t_{p,(k-s)\bmod p}}^{1/3} + \sqrt{\max\pt{t_{p,k}, t_{p,(k-s)\bmod p}}}}$. Therefore we have
    \begin{align*}
      \ex{\tau^2}
       & = \wbo[\bigg]{\Ex_{p,k}\bc[\bigg]{\pt[\Big]{p + \pt[\big]{t_{p,k}t_{p,(k-s)\bmod p}}^{1/3} + \sqrt{\max\pt{t_{p,k}, t_{p,(k-s)\bmod p}}}}^2} } \\
       & \leq \wbo[\Big]{\Ex_{p,k}\bc[\big]{p^2}
          + \Ex_{p,k}\bc[\Big]{t^{2/3}_{p,k}t^{2/3}_{p,(k-s)\bmod p}}
          + \Ex_{p,k}\bc[\big]{t_{p,k}}} \\
       & \leq \wbo[\Big]{\Ex_{p,k}\bc[\big]{p^2}
          + \Ex_{p,k}\bc[\Big]{t_{p,k}t_{p,(k-s)\bmod p}}^{2/3}
          + \Ex_{p,k}\bc[\big]{t_{p,k}}} \\
       & \leq \wbo{2^{2bn}} + \wbo{2^{4(1-b)n/3}} + 2^{(1-b)n},
    \end{align*}
  where the second inequality uses that the moment function is non-decreasing and the last inequality uses Lemmas~\ref{lem:size} and~\ref{lem:binSizeProd}. Since $b = 2/5$ we obtain that $\ex{\tau^2} \leq \wbo{2^{2bn}}$. Finally, by Fact~\ref{fact:vtaa}, the overall time of steps~3--6 is $\wbo{\sqrt{\ex{\tau^2}/\rho}} = \wbo{2^{bn}/2^{(1-\ell-b)n/2}} = \wbo{ 2^{  (\frac{\ell}{2}  + \frac{1}{10} ) n } }$.
\end{proof}


\subsection{Meet-in-the-middle algorithm}
\label{Sec:Mim-Dsum}

Our second algorithm uses the standard meet-in-the-middle technique combined with quantum search to solve the \SSub\ problem. We first state a lemma that if we randomly partition the input into two sets of relative sizes 1:2, then with at least inverse polynomial probability  a maximum solution will be distributed in the same proportion between the two sets.

\begin{lemma}
  \label{lem:poly_prob}
  Let $S_1, S_2$ be a maximum solution of ratio $\ell$. Then with at least inverse polynomial probability the random partition $ X_1\cup X_2$ satisfies $\abs{(S_1\cup S_2)\cap X_1} = \ell n/3$, $\abs{(S_1\cup S_2)\cap X_2} = 2\ell n/3$.
\end{lemma}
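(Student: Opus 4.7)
The plan is to recognize the random partition as a uniform choice of a subset $X_1 \subseteq [n]$ of size $n/3$ (with $X_2 := [n]\setminus X_1$), and to compute the exact probability of the desired configuration via the hypergeometric distribution. Since $S_1$ and $S_2$ are disjoint, $|S_1 \cup S_2| = \ell n$. Let $Y = |(S_1 \cup S_2) \cap X_1|$; then $Y$ counts how many of $\ell n$ ``marked'' elements land in a uniformly chosen $n/3$-subset of $[n]$, so
\[
  \Pr[Y = \ell n/3] \;=\; \frac{\binom{\ell n}{\ell n/3}\binom{(1-\ell)n}{(1-\ell)n/3}}{\binom{n}{n/3}},
\]
using $n/3 - \ell n/3 = (1-\ell)n/3$ to rewrite the second factor. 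Note that in this event we automatically get $|(S_1 \cup S_2)\cap X_2| = \ell n - \ell n/3 = 2\ell n/3$, as required.

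Next I would invoke Fact~\ref{fact:entropy} to estimate each binomial coefficient. For any $m$ of the form $\alpha n$ with $\alpha \in (0,1)$ a constant, $\binom{m}{m/3} = \Theta\bigl(2^{m\,h(1/3)}/\sqrt{m}\bigr)$, with the constants hidden in~$\Theta$ depending only on $h(1/3)$. Applying this to each of the three binomials yields
\[
  \Pr[Y = \ell n/3]
  \;=\; \Theta\!\left(\frac{2^{\ell n\,h(1/3)} \cdot 2^{(1-\ell) n\,h(1/3)}}{2^{n\,h(1/3)}} \cdot \frac{\sqrt{n}}{\sqrt{\ell n}\sqrt{(1-\ell) n}}\right).
\]
The exponential entropy factors cancel exactly, leaving
\[
  \Pr[Y = \ell n/3] \;=\; \Theta\!\left(\frac{1}{\sqrt{\ell(1-\ell)\,n}}\right) \;=\; \Omega(1/\sqrt{n}),
\]
since $\ell \in (0,1)$ is a constant bounded away from $0$ and $1$ in any regime where the meet-in-the-middle algorithm is invoked (the extreme regimes are handled by the representation-technique algorithm).

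The only subtlety is a rounding issue: the quantities $\ell n$, $\ell n/3$, $(1-\ell)n/3$, and $n/3$ need not be integers. The fix is routine: replace each occurrence by its nearest integer (using $\lfloor\cdot\rfloor$ or $\lceil\cdot\rceil$), and observe that Fact~\ref{fact:entropy} still applies because shifting the bottom index of a binomial coefficient by $O(1)$ changes its value by at most a polynomial factor in~$n$. Thus the final bound remains inverse polynomial, and the lemma follows. I do not expect any serious obstacle here; the entire argument is a single hypergeometric probability calculation, and the cleanest expected complication is simply keeping the rounding bookkeeping tidy.
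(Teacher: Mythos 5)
Your proof is essentially identical to the paper's: both express the probability as the hypergeometric ratio $\binom{\ell n}{\ell n/3}\binom{(1-\ell)n}{(1-\ell)n/3}/\binom{n}{n/3}$ and then apply Fact~\ref{fact:entropy} to conclude it is $\Omega(n^{-1/2})$. Your write-up just makes explicit the cancellation of the $2^{h(1/3)\cdot(\cdot)}$ factors and flags the rounding issue, both of which the paper leaves implicit.
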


\begin{proof}
  There are $\binom{n}{n/3}$ ways to partition $[n]$ into two subsets $X_1$ and $X_2$ of respective sizes $n/3$ and $2n/3$. Of these, there are $\binom{\ell n}{\ell n/3}\cdot\binom{n-\ell n}{\frac{n-\ell n}{3}}$ partitions such that $\abs{(S_1\cup S_2)\cap X_1} =\ell n/3, \abs{(S_1\cup S_2)\cap X_2} = 2\ell n/3$. The probability that $\abs{(S_1\cup S_2)\cap X_1} =\ell n/3, \abs{(S_1\cup S_2)\cap X_2} = 2\ell n/3$ is thus $\frac{\binom{\ell n}{\ell n/3}\cdot\binom{n-\ell n}{\frac{n-\ell n}{3}} }{\binom{n}{n/3}}$. Fact~\ref{fact:entropy} gives that this quantity is at least $\om[\big]{n^{-1/2}}$.
\end{proof}

We use the above result in the design of Algorithm~\ref{Algo:Mim-Dsum}, which is analysed in the next theorem. We observe that the obtained time complexity is at most $\wbo{3^{n/3}}$ and is maximized at $\ell = 2/3$.

\begin{figure}[htbp]
  \begin{algorithm}[H] \label{Algo:Mim-Dsum}
    \caption{Quantum meet-in-the-middle technique for \SSub}
    \KwIn{Instance $(a,s)$ of \SSub\ with maximum solution ratio $\ell$.}
    \KwOut{Two subsets $S_1,S_2 \subseteq [n]$.} \vspace{0.3\baselineskip}
    Randomly split $[n]$ into disjoint subsets $X_1\cup X_2$ such that $\abs{X_1}=n/3, \abs{X_2}=2n/3$. \\[1.5mm]
    Classically compute and sort~$V_1 = \{ \Sigma(S_{11}) - \Sigma(S_{21}) : S_{11},S_{21}\subseteq X_1 {\rm ~and~} \allowbreak S_{11}\cap S_{21}=\emptyset \allowbreak {\rm ~and~} \allowbreak \abs{S_{11}}+\abs{S_{21}} = \ell n/3\}$. \\[1.5mm]
    Apply quantum search (Fact~\ref{fact:search}) over the set~$V_2 = \set[\big]{ \Sigma(S_{12}) - \Sigma(S_{22}) :  S_{12},S_{22}\subseteq X_2  \allowbreak  {\rm and~} S_{12}\cap S_{22}=\emptyset  \allowbreak {\rm ~and~} \allowbreak \abs{S_{12}}+\abs{S_{22}} = 2\ell n/3}$, where an element $v_2\in V_2$ is marked if there exists $v_1\in V_1$ such that $v_1 + v_2 = s$. For a marked $v_2$, output $S_1 = S_{11} \cup S_{12}$ and
    $S_2 = S_{21} \cup S_{22}$.
  \end{algorithm}
\end{figure}

\begin{theorem}[\SSub, meet-in-the-middle]
  \label{Thm:Mim-Dsum}
  Given an instance of \SSub\ with maximum solution ratio $\ell \in (0,1)$, \emph{Algorithm~\ref{Algo:Mim-Dsum}} finds a solution with at least inverse polynomial probability in time $\wbo{2^{n (h(\ell) + \ell)/3}}$.
\end{theorem}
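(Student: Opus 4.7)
The plan is to prove Theorem~\ref{Thm:Mim-Dsum} in two stages: establish correctness of Algorithm~\ref{Algo:Mim-Dsum} via Lemma~\ref{lem:poly_prob}, then bound the running time by counting the meet-in-the-middle candidate families with Fact~\ref{fact:entropy} and applying quantum search (Fact~\ref{fact:search}).

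For correctness, fix a maximum solution $(S_1, S_2)$ of ratio $\ell$. By Lemma~\ref{lem:poly_prob}, the random partition produced in step~1 satisfies $\abs{(S_1 \cup S_2) \cap X_1} = \ell n/3$ and $\abs{(S_1 \cup S_2) \cap X_2} = 2\ell n/3$ with probability at least $\wom{n^{-1/2}}$. Conditioning on this good event and defining $S_{i,j} := S_i \cap X_j$ for $i,j \in \set{1,2}$, the sets $S_{1,1}, S_{2,1} \subseteq X_1$ are disjoint with $\abs{S_{1,1}} + \abs{S_{2,1}} = \ell n/3$ and analogously for $X_2$. Hence $v_1 := \Sigma(S_{1,1}) - \Sigma(S_{2,1}) \in V_1$, $v_2 := \Sigma(S_{1,2}) - \Sigma(S_{2,2}) \in V_2$, and $v_1 + v_2 = \Sigma(S_1) - \Sigma(S_2) = s$. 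So at least one element of $V_2$ is marked, and step~3 returns a valid solution. Since $\wom{n^{-1/2}}$ is already inverse polynomial, no repetition is needed to meet the success probability in the statement.

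For the running time, bound the candidate families by counting. The number of disjoint pairs $(S_{1,1}, S_{2,1}) \subseteq X_1 \times X_1$ with $\abs{S_{1,1}} + \abs{S_{2,1}} = \ell n/3$ is at most $\binom{n/3}{\ell n/3} 2^{\ell n/3}$: choose the $\ell n/3$ active indices, then for each decide whether it joins $S_{1,1}$ or $S_{2,1}$. By Fact~\ref{fact:entropy}, this is $\wbo{2^{n(h(\ell) + \ell)/3}}$. The analogous count over $X_2$ bounds the number of candidate pairs (and thus $\abs{V_2}$) by $\wbo{2^{2n(h(\ell) + \ell)/3}}$. Classically computing and sorting $V_1$ in step~2 therefore takes time $\wbo{2^{n(h(\ell) + \ell)/3}}$. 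In step~3, with $V_1$ stored in sorted form in quantum-accessible memory, deciding whether a candidate $v_2 \in V_2$ is marked reduces to a binary search for $s - v_2$ in $V_1$, which costs $\wbo{1}$ per evaluation. Because at least one marked element exists under the good partition event, Fact~\ref{fact:search} yields a marked element in expected time $\wbo{\sqrt{\abs{V_2}}} = \wbo{2^{n(h(\ell) + \ell)/3}}$. Summing with the setup cost gives the claimed overall complexity.

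The argument is largely routine counting once Lemma~\ref{lem:poly_prob} is in hand; the main subtlety is simply the coherent oracle for $V_2$, but each element is naturally indexed by a disjoint pair $(S_{1,2}, S_{2,2}) \subseteq X_2 \times X_2$ with prescribed total size, which can be generated from an index and evaluated to its subset-sum difference in $\poly(n)$ time, so the per-query cost is absorbed into the $\wbo{\cdot}$ notation and does not affect the leading exponent.
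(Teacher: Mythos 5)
Your proof is correct and follows essentially the same route as the paper: use Lemma~\ref{lem:poly_prob} for the inverse-polynomial success probability, count the disjoint pairs over $X_1$ and $X_2$ as $\binom{n/3}{\ell n/3}2^{\ell n/3}$ and $\binom{2n/3}{2\ell n/3}2^{2\ell n/3}$ respectively, simplify via Fact~\ref{fact:entropy}, and balance the classical sort of $V_1$ against quantum search over $V_2$ with binary-search marking. The remark about realizing the search oracle by a canonical indexing of disjoint pairs is a small explicit addition that the paper leaves implicit but is correct.
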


\begin{proof}
  There are $\binom{n/3}{\ell n/3}2^{\ell n/3}$ different ways to select two sets $S_{11},S_{21}\subseteq X_1$ such that $S_{11}\cap S_{21}=\emptyset$, $\abs{S_{11}}+\abs{S_{21}} = \ell n/3$.  Computing and sorting $V_1$ thus take time $\wbo{\binom{n/3}{\ell n/3}2^{\ell n/3}}$.
  In the next step of the algorithm, quantum search is performed over all $\binom{2n/3}{2\ell n/3}2^{2\ell n/3}$ sets $S_{12}, S_{22}\subseteq X_2$ such that $S_{12}\cap S_{22} = \emptyset$, $\abs{S_{12}}+\abs{S_{22}} = 2\ell n/3$.  We mark an element $v_2\in V_2$ if there exists $v_1 \in V_1$ such that $v_1 + v_2 = s$. Since $V_1$ is sorted this check can be done in time $\polylog\pt{\abs{V_1}}$. The total time required is therefore
    $\wbo*{\binom{n/3}{\ell n/3}2^{\ell n / 3} + \sqrt{\binom{2n/3}{2\ell n/3}2^{2\ell n / 3}}}
      = \wbo*{2^{\ell n/3 }\pt*{\binom{n/3}{\ell n/3} + \sqrt{\binom{2n/3}{2\ell n/3}}}}
      = \wbo{2^{\frac{n}{3}\pt{h(\ell) + \ell}}}$.
  By Lemma~\ref{lem:poly_prob}, when the instance has a maximum solution of size $\ell n$, the set
  $V_2$ has a marked element with at least inverse polynomial probability, and in that case a solution is found.
\end{proof}

\section{Pigeonhole variants of \Equ}
\label{sec:pigeonhole}

\subsection{\PEqu}
\label{sec:non-mod}

We give classical and quantum algorithms for \PEqu\ (Problem~\ref{Pbm:PEqu}), based on dynamic programming and which run in time $\wbo{2^{n/2}}$ and $\wbo{2^{2n/5}}$, respectively. In contrast with our quantum algorithm for \SSub\ that made use of a random prime modulus, in the case of \PEqu\ we can deterministically choose a modulus~$p$, and the pigeonhole principle guarantees a collision in at least one bin.

\begin{lemma}
  \label{lem:pigeonhole-collision}
  There is a classical deterministic algorithm such that, given an instance of \PEqu\ and a modulus $p$ that divides $2^n$, it finds in time $\wbo{p}$ an integer~$k$ such that there exist two distinct subsets $S_1, S_2$ with $\Sigma(S_1) \equiv \Sigma(S_2) \equiv k \pmod p$.
\end{lemma}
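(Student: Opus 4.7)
The plan is to directly instantiate the dynamic programming table $t_p$ from Lemma~\ref{lem:dp-table-time} and then exploit the pigeonhole principle on its last row. Concretely, I would first build the $(n+1) \times p$ table $t_p[i,j]$ in time $\bo{n^2 p} = \wbo{p}$, and then scan the entries $t_p[n,0], t_p[n,1], \dots, t_p[n,p-1]$ to locate an index $k$ with $t_p[n,k] \geq 2$. Since each of the $\bo{p}$ entries has bit-length $\bo{n}$, this scan takes an additional $\bo{np}$ time, which is absorbed in $\wbo{p}$. Output any such $k$; by definition of $t_p[n,k]$, it certifies the existence of at least two distinct subsets $S_1,S_2 \subseteq [n]$ with $\Sigma(S_1) \equiv \Sigma(S_2) \equiv k \pmod p$.

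The only thing to verify is that such a $k$ always exists. Since the sets $T_{p,0}, \dots, T_{p,p-1}$ form a partition of $\mathcal{P}([n])$, we have
\[
  \sum_{k=0}^{p-1} t_p[n,k] \;=\; \abs{\mathcal{P}([n])} \;=\; 2^n.
\]
If $p \mid 2^n$ and $p < 2^n$, then $p \leq 2^{n-1}$, and by averaging some $k$ satisfies $t_p[n,k] \geq 2^n/p \geq 2$, as desired. The boundary case $p = 2^n$ is handled by the \PEqu\ promise $\sum_{i=1}^n a_i < 2^n - 1$: every subset sum lies in $\{0, 1, \dots, 2^n - 2\}$, a set of cardinality at most $2^n - 1$. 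Reduction modulo $p = 2^n$ does not change these values, so the $2^n$ subsets are mapped into fewer than $2^n$ residues, forcing two distinct subsets into a common bin, and thus $t_p[n,k] \geq 2$ for some $k$.

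\textbf{Main obstacle.} There is no real obstacle: the lemma is a direct combination of Lemma~\ref{lem:dp-table-time} with a pigeonhole count. The only subtlety worth highlighting is the extremal case $p = 2^n$, where the raw averaging bound yields only $2^n/p = 1$ and one must appeal explicitly to the \PEqu\ promise $\sum_i a_i < 2^n - 1$ to squeeze out a collision. For every other $p$ dividing $2^n$ the argument is immediate from averaging, and no property of the input beyond $a_i \in \N$ is used.
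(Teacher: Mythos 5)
Your proof establishes only that some bin $T_{p,k}$ has $t_p[n,k] \geq 2$, i.e., that two distinct subsets share the same \emph{residue} modulo $p$. That is indeed what the lemma literally says, but it is not what the lemma needs to say — and not what the paper's proof establishes. Look at how the lemma is consumed in Theorems~\ref{thm:phsAlgo} and~\ref{thm:phsq}: once $k$ is in hand, the algorithm enumerates $T_{p,k}$ (classically or via element distinctness) to find two subsets with the \emph{same exact sum}. For that to terminate successfully, the bin must actually contain a pair $S_1 \neq S_2$ with $\Sigma(S_1) = \Sigma(S_2)$, not merely $\Sigma(S_1) \equiv \Sigma(S_2) \pmod p$. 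Your threshold $t_p[n,k] \geq 2$ does not force this: for $p < 2^n$, a bin of size $2$ might consist of one subset summing to $k$ and another summing to $k+p$. These sit in the same bin but do not collide, and the downstream search would come up empty.

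The paper's proof closes this gap with a finer count. Because of the \PEqu\ promise, every subset sum lies in $\{0,\dots,2^n-2\}$, so the residue class $\br{k}$ contains at most $2^n/p$ attainable sum values (and the last class $\br{p-1}$ only $2^n/p - 1$). Summing these capacities gives $2^n - 1 < 2^n = \sum_k t_p[n,k]$, so there must be a class $\br{k}$ where the number of subsets \emph{strictly exceeds} the number of attainable values — forcing an exact collision inside that bin by pigeonhole. The correct test to run on the table's last row is therefore $t_p[n,k] > 2^n/p$ for $k < p-1$, or $t_p[n,p-1] \geq 2^n/p$, not $t_p[n,k] \geq 2$. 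Your averaging gives $t_p[n,k] \geq 2^n/p$ for some $k$, which is only half the story: you still need the slack of $1$ coming from the shortfall of the last class to get strict inequality, and you need to compare against $2^n/p$ rather than against $2$. Your treatment of $p = 2^n$ is incidentally correct (there residue equals value, so a modular collision is an exact collision), but for $p < 2^n$ the argument must be strengthened along the lines above.
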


\begin{proof}
  Denote by $\br{0}, \br{1}, \dots \br{p-1}$ the congruence classes modulo $p$. Each of these classes contains exactly $2^n/p$ numbers between $0$ and $2^{n}-2$, except the last class $\br{p-1}$ that has only $2^n/p-1$ numbers. Since all $2^n$ subsets $S \subseteq [n]$ have a sum $\Sigma(S)$ between $0$ and $2^n-2$ there are two possible cases:
  \begin{itemize}
  	\item either there is some class $\br{k}$ such that $\Sigma(S) \in \br{k}$ for strictly more than $2^{n}/p$ subsets $S$,
    \item or there are $2^{n}/p$ subsets $S$ such that $\Sigma(S) \in \br{p-1}$.
  \end{itemize}

  Denote by $\br{k}$ a class that verifies one of these two points. By definition, there are \textit{strictly more} subsets $S$ such that $\Sigma(S) \in \br{k}$ than the number of elements between $0$ and $2^n-2$ that belong to~$\br{k}$. However, for all $S \subseteq [n]$, we have $\Sigma(S) \leq 2^n-2$. Thus, there must be two subsets $S_1 \neq S_2$ such that $\Sigma(S_1), \Sigma(S_2) \in \br{k}$ and $\Sigma(S_1) = \Sigma(S_2)$.

  From Lemma~\ref{lem:dp-table-time}, the table $t_p[i,j] = \abs*{\set*{S \subseteq \{1,\dots,i\} : \Sigma(S) \equiv j \pmod p}}$ can be constructed in time $\wbo{p}$. From the table, we can read off a value $k$ that satisfies the above condition.
\end{proof}

\begin{theorem}[\PEqu, classical]
  \label{thm:phsAlgo}
  There is a classical deterministic algorithm for the \PEqu\ problem that runs in time $\wbo{2^{n/2}}$.
\end{theorem}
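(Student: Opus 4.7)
The plan is to apply Lemma~\ref{lem:pigeonhole-collision} with modulus $p = 2^{\lceil n/2 \rceil}$, which divides $2^n$ and satisfies both $p \leq 2 \cdot 2^{n/2}$ and $2^n/p \leq 2^{n/2}$. In time $\wbo{2^{n/2}}$, this constructs the dynamic programming table $t_p$ (Lemma~\ref{lem:dp-table-time}) and returns an index $k$ such that the bin $T_{p,k}$ is guaranteed to contain two distinct subsets with equal sums; concretely, the proof of that lemma yields $k$ with $b_k := t_p[n,k] > c_k$, where $c_k$ is the number of integers $\equiv k \pmod p$ lying in $[0, 2^n-2]$ (so $c_k \leq 2^n/p \leq 2^{n/2}$).

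The main obstacle is that the bin $T_{p,k}$ itself may be exponentially large: for instance, if every $a_i$ is a multiple of $p$, then all $2^n$ subsets fall into the single bin $T_{p,0}$, and naively enumerating the bin would blow the time budget. The key observation that bypasses this is that enumeration of the \emph{whole} bin is unnecessary. Every subset $S \in T_{p,k}$ has $\Sigma(S)$ lying in the arithmetic progression $\{k, k+p, k+2p, \ldots\} \cap [0, 2^n-2]$, which contains only $c_k \leq 2^{n/2}$ integers. Therefore, by the pigeonhole principle, any $c_k + 1$ subsets drawn from $T_{p,k}$ must contain two whose \emph{actual} sums coincide.

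To finish, I will enumerate the first $c_k + 1 \leq 2^{n/2} + 1$ subsets of $T_{p,k}$ in the order $\prec$ provided by Theorem~\ref{Thm:effEnum}, paying $\bo{n^2}$ per subset. As each subset $S$ is produced, I compute $\Sigma(S)$ and insert it into a deterministic dictionary keyed by the sum value (e.g., a balanced binary search tree over $n$-bit integers). The pigeonhole argument above ensures that one of these insertions encounters a duplicate key, whose two associated subsets form the desired solution. Summing the costs of building $t_p$, reading off $k$ from the table, enumerating $\bo{2^{n/2}}$ subsets, and detecting the repeated key, the total running time is $\wbo{p + 2^n/p} = \wbo{2^{n/2}}$, matching the claimed bound.
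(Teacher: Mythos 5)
Your proof is correct and takes the same route as the paper: apply Lemma~\ref{lem:pigeonhole-collision} with $p \approx 2^{n/2}$ to deterministically locate an oversized bin, then enumerate about $2^{n/2}$ of its elements via the $\prec$-ordering of Theorem~\ref{Thm:effEnum} (as packaged in Corollary~\ref{corr:better-enumeration}) and detect a repeated sum. You spell out explicitly the pigeonhole count $c_k+1$ that the paper leaves implicit behind the phrase ``a sufficient number of subsets,'' and you handle the parity of $n$ by taking $p = 2^{\lceil n/2\rceil}$, but these are refinements of the same argument, not a different one.
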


\begin{proof}
  Choose $p=2^{n/2}$. By Lemma~\ref{lem:pigeonhole-collision}, in time $\wbo{2^{n/2}}$ we can find $k$ such that there exist $S_1 \neq S_2$ satisfying $\Sigma(S_1) \equiv \Sigma(S_2) \equiv k \pmod {2^{n/2}}$. Once we know a bin that contains a collision, by Corollary~\ref{corr:better-enumeration}, we can enumerate in time $\wbo{2^{n/2}}$ a sufficient number of subsets in that bin to locate a collision.
\end{proof}

\begin{theorem}[\PEqu, quantum]
  \label{thm:phsq}
  There is a quantum algorithm for the \PEqu\ problem that runs in time $\wbo{2^{2n/5}}$.
\end{theorem}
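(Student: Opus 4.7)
The plan is to balance the cost of locating a bin containing a collision against the cost of quantum collision finding within it, in the spirit of the classical bound $\wbo{p+(2^n/p)^{2/3}}$ quoted in the introduction, which is minimised at $p=2^{2n/5}$.

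First I would deterministically set $p = 2^{\lceil 2n/5 \rceil} = \Theta(2^{2n/5})$, so that $p$ divides $2^n$ as required by Lemma~\ref{lem:pigeonhole-collision}. Invoking that lemma builds the dynamic programming table $t_p$ (via Lemma~\ref{lem:dp-table-time}) and returns a residue $k^\ast$ with two distinct subsets in $T_{p,k^\ast}$ summing to the same value, all in time $\wbo{p} = \wbo{2^{2n/5}}$.

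Next I would restrict attention to the first $N := \min(\abs{T_{p,k^\ast}}, \lceil 2^n/p \rceil + 1) = \bo{2^{3n/5}}$ elements of $T_{p,k^\ast}$ under the ordering $\prec$ of Section~\ref{Sec:DynProg}. The key observation is that every $S \in T_{p,k^\ast}$ has $\Sigma(S)$ lying in the arithmetic progression $\{k^\ast, k^\ast + p, k^\ast + 2p, \dots\} \cap [0 \tdots 2^n-2]$, which contains at most $\lceil 2^n/p \rceil$ values. Hence, by the pigeonhole principle, any $\lceil 2^n/p \rceil + 1$ elements of $T_{p,k^\ast}$ already contain a colliding pair; if instead $\abs{T_{p,k^\ast}} < \lceil 2^n/p \rceil + 1$, the collision granted by Lemma~\ref{lem:pigeonhole-collision} still sits in the full bin. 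Either way a marked pair exists among the first $N$ indices.

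Finally I would run the quantum pair finding algorithm of Theorem~\ref{Thm:PairFinding} over $[1 \tdots N]^2$ with $f = g = \Sigma$ and the second relation type (enforcing $x \neq y$), accessing the $I$th element of $T_{p,k^\ast}$ in time $\tau = \bo{n^2}$ via Theorem~\ref{Thm:effEnum}. With $N = M$, at least $K \geq 1$ disjoint marked pair, and sums in a range $R = 2^{\bo{n}}$, the regime $M \le K N^2$ applies, giving running time $\wbo{(N^2/K)^{1/3} \cdot \tau} \leq \wbo{N^{2/3}} = \wbo{2^{2n/5}}$. Summing the two stages yields the claimed $\wbo{2^{2n/5}}$ bound.

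The main obstacle I anticipate is that the bin $T_{p,k^\ast}$ returned by Lemma~\ref{lem:pigeonhole-collision} is \emph{not} a priori of size $\bo{2^n/p}$; it could be much larger than the average, which would ruin a naive element distinctness call on the whole bin. The truncation above circumvents this by exploiting the fact that the collision guarantee comes, not from the specific $k^\ast$, but from the restricted range of achievable sums inside any single residue class modulo $p$. Provided Theorem~\ref{Thm:effEnum} really does deliver the $I$th element of $T_{p,k^\ast}$ in polynomial time, the quantum collision search can be executed on a deterministically chosen prefix of size $\bo{2^n/p}$, which is exactly what is needed for the $\wbo{2^{2n/5}}$ bound.
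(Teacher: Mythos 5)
Your proof is correct and follows essentially the same route as the paper's: deterministically set $p\approx 2^{2n/5}$, use Lemma~\ref{lem:pigeonhole-collision} to build $t_p$ and identify a residue $k$ whose bin contains a collision (time $\wbo{p}$), then truncate the bin to a prefix of size $\bo{2^n/p}=\bo{2^{3n/5}}$ and run quantum collision search with $\bo{n^2}$-time oracle access via Theorem~\ref{Thm:effEnum}, giving $\wbo{(2^{3n/5})^{2/3}}=\wbo{2^{2n/5}}$. Your proposal is in fact slightly more explicit than the paper's proof on one point it leaves implicit: you explain \emph{why} the truncated prefix still contains a collision (all sums in $T_{p,k}$ lie in an arithmetic progression of at most $\lceil 2^n/p\rceil$ values, so by pigeonhole any $\lceil 2^n/p\rceil+1$ elements collide, and if the bin is smaller than that the guaranteed collision is already inside it). The paper simply truncates to the first $2^{3n/5+1}$ elements without stating this; also, the paper invokes Ambainis' element distinctness directly, while you invoke Theorem~\ref{Thm:PairFinding} with $N=M$, $K=1$, which specializes to the same thing. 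These are presentational differences only.
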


\begin{proof}
  We set $p = 2^{2n/5}$ and, by Lemma~\ref{lem:pigeonhole-collision}, in time $\wbo{2^{2n/5}}$ we can identify $k$ such that there exist $S_1\neq S_2$ satisfying $\Sigma(S_1) \equiv \Sigma(S_2) \equiv k \pmod {2^{2n/5}}$. By Theorem~\ref{Thm:effEnum}, each query to $T_{p,k} = \{S\subseteq[n] : \Sigma(S)\equiv k \pmod p\}$ can be made in time $\bo{n^2}$. We use Ambainis' element distinctness algorithm~\cite{Amb07j} on these elements to find a collision. We do not want to run it on an unnecessarily large set. Therefore, if $t_{p,k} > 2^{3n/5 +1}$ then we run it only on the first $2^{3n/5 +1}$ elements of $T_{p,k}$, according to the ordering defined by $\prec$. A collision is then found in time $\wbo[\big]{\pt{2^{3n/5}}^{2/3}} = \wbo{2^{2n/5}}$. The overall running time of the algorithm is thus $\wbo{2^{2n/5}}$.
\end{proof}


\subsection{\PMEqu}
\label{Sec:PMAlgo}

 Here we give a classical representation-technique-based algorithm for \PMEqu\ (Problem~\ref{Pbm:PMEqu}). Our approach consists of defining the bins based on the \emph{quotient} in the division by $p$ instead of the remainder, that is $T_{p,k} = \set*{S\subseteq \{1, \ldots , n\} : \floor{\pt{\sum_{i \in S} a_i \bmod q} / p} = k}$. We show that computing the cardinality of~$T_{p,k}$ and enumerating its elements can be done with the help of yet another table for the bins $T'_{p,k} = \set*{S\subseteq \{1, \ldots , n\} : \pt{\sum_{i \in S} \floor{a_i / p}} \equiv k \pmod {\floor{q/p}}}$. This last table can be constructed with a similar dynamic programming technique as before. We do not know how to extend this approach to the more general \Equ\ or \SSub\ problems due to the lack of good statistics on how a random bin $T_{p,k}$ behaves in this case. We also have no quantum speed-up for \PMEqu\ due to a bottleneck when going from $T'_{p,k}$ to $T_{p,k}$ that we cannot seemingly reduce with quantum techniques.

\begin{theorem}[\PMEqu, classical]
  \label{thm:phPMSub}
  There is a classical deterministic algorithm for the \PMEqu\ problem that runs in time~$\wbo{2^{n/2}}$.
\end{theorem}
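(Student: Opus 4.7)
The plan is to partition ${\cal P}([n])$ by the quotient-bin index $k = \lfloor (\Sigma(S) \bmod q)/p \rfloor$ with $p = 2^{n/2}$, and then exploit pigeonhole to locate and enumerate a bin containing a collision. Without loss of generality assume $a_i < q$ for every $i$, so $\Sigma(S) < nq$. Because $q \leq 2^n - 1$, the number of bins is at most $\lceil q/p \rceil \leq 2^{n/2}+1$, and the total count of distinct possible values of $\Sigma(S) \bmod q$ is at most $q < 2^n = \sum_k |T_{p,k}|$. Hence some bin $T_{p,k^*}$ necessarily contains more subsets than it has distinct residues, and so contains a pair $S_1 \neq S_2$ with $\Sigma(S_1) \equiv \Sigma(S_2) \pmod q$.

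The difficulty is that $S \mapsto \lfloor (\Sigma(S) \bmod q)/p \rfloor$ does not satisfy a clean recurrence in $|S|$: both the reduction modulo $q$ and the integer part $\lfloor \cdot/p \rfloor$ destroy additivity. To sidestep this I would write $a_i = p \lfloor a_i/p \rfloor + r_i$ and set $q' = \lfloor q/p \rfloor$, and then build, by the DP of Lemma~\ref{lem:dp-table-time}, the auxiliary table
\[
  t'_p[i,j] = \bigl|\bigl\{S \subseteq \{1,\ldots,i\} : \textstyle\sum_{s \in S} \lfloor a_s/p \rfloor \equiv j \pmod{q'}\bigr\}\bigr|,
\]
in time $\wbo{n^2 q'} = \wbo{2^{n/2}}$. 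The analog of Theorem~\ref{Thm:effEnum} then supplies an $\bo{n^2}$-time ordered query into each auxiliary bin $T'_{p,j}$.

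The main obstacle is bridging the auxiliary partition $\{T'_{p,j}\}$ and the true quotient partition $\{T_{p,k}\}$. For any $S$, the true bin index $k$ is determined by $j = \sum_{s \in S} \lfloor a_s/p \rfloor \bmod q'$ together with two bounded side terms: the low-order carry $c(S) = \lfloor (\sum_{s \in S} r_s)/p \rfloor \in [0,n]$ and the wrap count $w(S) = \lfloor \Sigma(S)/q \rfloor \in [0,n]$. I would therefore refine the auxiliary DP by augmenting its state with the pair $(c,w)$, yielding a table with $\poly(n)\cdot q'$ cells whose construction still costs $\wbo{2^{n/2}}$, whose cells admit ordered enumeration in time $\bo{n^2}$ per element, and each of whose cells is labelled by the exact quotient bin index $k$ to which it belongs.

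With the refined table in hand I would: (i) read off $|T_{p,k}|$ for every $k$ by summing cell sizes, locating an overfull bin $T_{p,k^*}$ in time $\wbo{2^{n/2}}$; (ii) enumerate at most $p+1$ elements of $T_{p,k^*}$ via the fast oracle, computing each $\Sigma(S) \bmod q$ in $\poly(n)$ time; (iii) detect the first repeated value with a sorted data structure to exhibit the pair $S_1, S_2$. Since only $\wbo{p} = \wbo{2^{n/2}}$ elements need to be examined, the total running time is $\wbo{2^{n/2}}$. The hardest part will be rigorously controlling the interaction between the carry and wrap corrections in the augmented DP so that the refined cells indeed match the true quotient bins; I expect this to require a careful inductive argument, but no essentially new idea beyond those of Lemma~\ref{lem:dp-table-time} and Theorem~\ref{Thm:effEnum}.
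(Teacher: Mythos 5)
Your high-level plan — partition by the quotient bin $k = \lfloor (\Sigma(S) \bmod q)/p\rfloor$ with $p=2^{n/2}$, note by pigeonhole that some bin is overfull, and try to locate and enumerate it via a DP on the quantities $\sum_{s\in S}\lfloor a_s/p\rfloor \bmod q'$ — starts exactly where the paper does, including the use of the auxiliary table $t'_p$. But the way you propose to bridge the auxiliary and true partitions has a genuine gap, and it is precisely the point where the paper's argument has to do real work.

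You claim that the true bin index $k$ is determined by $j = \sum_{s\in S}\lfloor a_s/p\rfloor \bmod q'$ together with the two bounded side terms $c(S)$ (low carry) and $w(S)$ (wrap count), and that the DP state can therefore be augmented by $(c,w)$ at only a $\poly(n)$ cost. Neither part of this is right. First, writing $q = pq' + q''$ with $0 \le q'' < p$ and $r = (\sum_{s\in S} r_s)\bmod p$, the exact bin index is
\[
k \;=\; J + c - wq' + \left\lfloor \frac{r - wq''}{p} \right\rfloor,
\]
where $J = \sum_{s\in S}\lfloor a_s/p\rfloor$. The last ``borrow'' term lies in a range of width $\Theta(n)$ and depends on $r$ (compared against $wq'' \bmod p$), not merely on $c$; so $(j,c,w)$ do \emph{not} determine $k$. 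Second, and more fundamentally, neither $c$ nor $w$ can be maintained in a DP with only $\poly(n)$ overhead: to update $c$ when adding item $i$ you must know the current residue $r$ modulo $p$ (a state of size $p = 2^{n/2}$), and to update $w$ you must know the current value of $\Sigma(S)\bmod q$ (a state of size $q$). Either of these destroys the $\wbo{2^{n/2}}$ budget; there is no ``$\poly(n)\cdot q'$'' refinement that exactly labels cells by the true quotient bin.

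The paper's proof works around exactly this obstacle by \emph{not} computing the true bin sizes $b[j]$ cellwise. Instead, it only builds the uncorrected $C$-table and proves an approximate sandwich (Lemma~\ref{lem:BsubC}, Corollary~\ref{cor:incl}): $B[j]$ is contained in a window of $O(n)$ consecutive $C$-bins, and conversely $B[i\tdots j]$ is sandwiched between $C$-intervals shrunk/grown by $O(n)$. This makes it possible to compare $b[i\tdots j]$ to $\beta[i\tdots j]$ for \emph{intervals} $[i\tdots j]$ by reading off the bulk from the $c$-array and correcting at the $O(n)$ boundary bins; and when a boundary $C$-bin is too large to enumerate, Corollary~\ref{cor:CsubB} immediately yields an overfull $B$-bin. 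A dichotomic search on intervals, preserving the invariant $\beta[i\tdots j] < b[i\tdots j]$, then zooms in on a marked index in $\bo{n}$ halving steps. So the missing idea in your write-up is this interval-based dichotomic search with approximate inclusion; the exact-tracking route you propose cannot be made to work within the stated time bound.
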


\begin{proof}
  Without loss of generality we suppose that for every input $a_i$, the inequality $a_i < q$ holds, where $q \leq 2^n - 1$ is the modulus in the input. For such a modulus there exists a unique couple $(q_1,q_2)$ with $0 \leq q_1, q_2 < 2^{n/2}$ such that $q = q_1 2^{n/2} + q_2$. We define the one-dimensional array $B[j]$ for $0\leq j \leq q_1 -1$ by
    \[B[j] =  \set*{S \subseteq \{1,\dots,n\} :  \floor*{\frac{ \Sigma(S) \bmod q}{2^{n/2}}} =j }.\]
  We denote the cardinality of $B[j]$ by $b[j]$, and we define $\beta[j] = \abs*{\set*{ 0 \leq k \leq q-1 : \floor*{\frac{k}{2^{n/2}}} =j}}.$ Observe that $\beta[j] \leq 2^{n/2}$, for all $j$, and that
    \[\sum_{j=0}^{q_1 - 1} \beta_j = q < 2^n =  \sum_{j=0}^{q_1 - 1} b[j].\]
  Therefore there exists $j$ such that $\beta[j] < b[j]$, and we will call such an index {\em marked}. The algorithm will identify a marked $j$, and then will find $\beta[j] +1$ different sets in $B[j]$. We will show that this can be done in time $\wbo{2^{n/2}}$, and by the pigeonhole principle there are two sets $S_1, S_2  \subseteq \{1,\dots,n\}$ among them such that $ \Sigma(S_1) \equiv \Sigma(S_2) \pmod q$.

  Computing directly the values in the array $b$ is not easy, we will do that with the help of another one-dimensional array $C$. For $1 \leq k \leq n,$ we set $a'_k = \floor*{\frac{ a_k }{2^{n/2}}}$ and $A' = \{a'_1, \ldots, a'_n\}$.
  Then we define
     \[C[j] =  \left\{S \subseteq \{1,\dots,n\} :  \Sigma_{A'}(S) \equiv j \pmod {q_1} \right\},\]
  for $0\leq j \leq q_1 - 1$, and we set $c[j] = |C[j]|$. By Lemma~\ref{lem:dp-table-time} the full array $c$ can be computed in time $\wbo{2^{n/2}}$, and by Theorem~\ref{Thm:effEnum}, for every $0\leq j \leq q_1 -1$, the entry $C[j]$ can be enumerated in $\bo{n^2}$ time per set.

  The arrays $B$ and $C$ are of course not identical, but the following lemma shows that any set in $B[j]$ must be contained in $C$ at an index close to $j$.  For any $0 \leq i,j \leq q_1 - 1$, we define
    \[B[i \tdots j] =
     \begin{cases}
       \cup_{k=i}^j B[k] & \text{if $i \leq j$} \\[1mm]
       \cup_{k=i}^{q_1-1} B[k] \bigcup \cup_{k=0}^j B[k] & \text{otherwise},
     \end{cases}\]
  and $C[i \tdots j]$ is defined analogously. We denote their respective cardinalities as $b[i \tdots j] = |B[i \tdots j]|$ and $c[i \tdots j] = |C[i \tdots j]|$. Finally, for $0 \leq i \leq j \leq q_1$, we set $\beta[i \tdots j] = \sum_{k=i}^j \beta[k]$.

  \begin{lemma}
  \label{lem:BsubC}
    For every $0 \leq j \leq q_1-1$, the inclusion
      $B[j] \subseteq C[(j -n+1) \bmod q_1 \tdots (j+n-1) \bmod q_1]$
    holds.
  \end{lemma}

  \begin{proof}
    Let $S \in B[j]$. Then by definition $\Sigma(S) \bmod q = j \cdot 2^{n/2} + j'$, for some $0 \leq j' \leq 2^{n/2} -1$. Consequently $\Sigma(S) = k_1 q + j \cdot 2^{n/2} + j'$, where $0 \leq k_1 \leq n-1$ because $a_i < q$ for every $i$, and therefore $\Sigma(S) < nq$. This implies that
      \begin{eqnarray*}
         \floor*{\frac{ \Sigma(S)}{2^{n/2}}}  & =  & j +  \floor*{\frac{ k_1 q + j'}{2^{n/2}}} \\
        & = & j +   \floor*{\frac{ k_1(q_1 2^{n/2} + q_2 ) + j'}{2^{n/2}}} \\
        & = & j + k_1 q_1 + \floor*{\frac{k_1 q_2 + j'}{2^{n/2}}} \\
        & = &  j + k_1 q_1  + k_2,
      \end{eqnarray*}
    where $0 \leq k_2 \leq n-1.$ Therefore
      \[\Sigma_{A'}(S) =  j + k_1 q_1  + k_2 - k_3,\]
    where $0 \leq k_3 \leq n-1$ since the set $S$ contains at most $n$ elements. We can thus conclude that
      \[\Sigma_{A'}(S) \equiv j + k_4 \pmod {q_1},\]
    where for $ k_4 = k_2 - k_3 $ we have $-n+1 \leq k_4 \leq n-1$.
  \end{proof}

  \begin{corollary}
    \label{cor:incl}
    Let $0 \leq i,j \leq q_1 -1$ with $2n \leq j - i \leq (q_1 - 1)/2$. Then
      $C[i+n-1 \tdots j-n+1] \subseteq B[i \tdots j] \subseteq  C[(i-n+1) \bmod q_1 \tdots (j+n-1) \bmod q_1]$.
  \end{corollary}

  \begin{proof}
    The second inclusion follows immediately from Lemma~\ref{lem:BsubC}. The lemma also implies
      $B[j+1 \bmod q_1 \tdots i-1 \bmod q_1] \subseteq  C[(j-n+2)  \tdots (i+n-2) ]$.
    Taking the complement of the set on each side gives the first inclusion.
  \end{proof}

  \begin{corollary}
    \label{cor:CsubB}
    Let $0 \leq i \leq q_1 -1$. Then there exists $k \in \{i - n, i + n\}$ such that
     $C[i] \subseteq  B[(k- 2n +1)  \tdots (k+2n-1) ]$.
  \end{corollary}

  \begin{proof}
    We have either $i \geq 3n-1$ or $i \leq q_1 -3n$. If only the first case is true choose $k=i-n$, if only the second case is true choose $k = i+n$, if both cases are true choose arbitrarily. Obviously $C[i] \subseteq C[(k-n) \tdots (k+n)]$, therefore Corollary~\ref{cor:incl} implies the statement.
  \end{proof}

   \begin{corollary}
    \label{cor:enumerate}
    Let $0 \leq i \leq q_1 -1$.  In time $\wbo{2^{n/2}}$ we can either enumerate $C[i]$ or we can find a marked index.
  \end{corollary}

  \begin{proof}
    If $c[i] \leq (4n-1) 2^{n/2}$ then by Theorem~\ref{Thm:effEnum} we can enumerate $C[i]$. Otherwise, by Corollary~\ref{cor:CsubB}, we test each $S\in C[i]$ to see which set $B[j]$, for $k-2n+1 \leq j \leq k+2n -1,$ it belongs to until an index~$j$ satisfying $b[j]>2^{n/2}$ is identified.
  \end{proof}

  We now describe the procedure to find a marked index. It is essentially a dichotomic search over shorter and shorter intervals $[i \tdots j] = \{i, \ldots, j\},$ with the invariant property $\beta[i \tdots j] < b[i \tdots j],$ and where in every step we halve the size of $j-i$. Initially, we set $i = 0$ and $j = q_1 -1$, and we stop the process when $2n \leq j-i < 4n.$ Clearly the number of iterations is less than $n/2$. We now describe one iteration. Let us suppose that our current interval is $[i \tdots j]$, and let $m=(j-i)/2$, rounding it arbitrarily, if necessary. We will compute $b[i \tdots m]$ and $b[m+1 \tdots j]$ and keep one of the two intervals for which the invariant property holds.

  We claim that for some fixed indices $i,j$, in time $\wbo{2^{n/2}}$ we can either compute~$b[i \tdots j]$  or we find a marked index. From Corollary~\ref{cor:incl} it follows that
    \begin{align*}
      b[i \tdots j] & = c[i+n-1 \tdots j-n+1] \\
                    & + \abs[\Big]{B[i \tdots j] \cap \pt[\big]{C[(i-n+1) \bmod q_1 \tdots i+n-2] \cup C[(j-n+2) \tdots (j+n-1) \bmod q_1]}}.
    \end{align*}
  The first term $c[i+n-1 \tdots j-n+1]$ is computed in time $\wbo{q_1}$ by adding the corresponding entries in the array $c$. For the second term, since there at most $4n$ entries of~$C$ involved in it, we can either enumerate all the elements they contain in time $\wbo{2^{n/2}}$ by Corollary~\ref{cor:enumerate} and check for each element if it belongs to $B[i \tdots j]$ (hence we can compute $b[i \tdots j]$), or we can find a marked index.

  Unless we already found a marked index  during the process, the dichotomic search stops with less than $4n$ candidate indices out of which at least one is marked. Therefore the last thing to show is that given a marked index $j$, how do we find a solution in time $\wbo{2^{n/2}}$. From Lemma~\ref{lem:BsubC} and Corollary~\ref{cor:incl} we know that
    \[B[j] \subseteq C[(j -n+1) \bmod q_1 \tdots (j+n-1) \bmod q_1] \subseteq  B[j - 2n + 2 \bmod q_1 \tdots j + 2n + 2 \bmod q_1].\]
  We start enumerating ${\cal C} = C[(j -n+1) \bmod q_1 \tdots (j+n-1) \bmod q_1]$ until one of the following two things happens. If $|{\cal C}| \leq (4n-3) 2^{n/2}$ then we fully enumerate ${\cal C}$ and therefore can also fully enumerate $B[j]$, and find a solution there. Otherwise, we stop after having enumerated $(4n-3) 2^{n/2} + 1$ elements of ${\cal C}$, and for each of them we determine the index where they belong in $B[j - 2n + 2 \bmod q_1 \tdots j + 2n + 2 \bmod q_1].$ There will be an index where we have found more than $2^{n/2}$ subsets and therefore also a solution.
\end{proof}

\begin{remark}
  As an anonymous referee has pointed out, \PMEqu\ can also be solved in the same running time as above based on a meet-in-the-middle approach: (i) Create a  list $L_1$ of all subsets $S\subseteq \{1, \ldots, n/2 \} $ sorted in non-decreasing order of $\Sigma(S) \bmod q$, and another similar list $L_2$ for subsets $S\subseteq \{n/2 + 1, \ldots, n\}$. (ii) By performing $\bo{2^{n/2}}$ binary searches through $L_1$ (one for each element of $L_2$), one can compute $\abs{\{S\subseteq [n] : \Sigma(S) \bmod q \le x\}}$ (for any $x$), and hence compute $\abs{\{S\subseteq [n] : \Sigma(S)\bmod q \in [l,r]\}}$. (iii) Starting from the interval $[0, q-1]$, for each interval $[l,r]$ and $m= (l+r)/2$ recursively identify whether $[l,m]$ or $[m,r]$ contains more subsets than pigeonholes. The problem is solved in $\poly (n)$ recursions.
\end{remark}

\section{Acknowledgements}
This work has been supported by the European Union’s H2020 Programme under grant agreement number ERC-669891.
Research at CQT is funded by the National Research Foundation, the Prime Minister’s Office, and the Ministry of Education, Singapore under the Research Centres of Excellence programme’s research grant R-710-000-012-135.
JA thanks Shengyu Zhang for helpful discussions during the course of this work.

\printbibliography[heading=bibintoc]

\appendix

\section{Classical algorithms for \SSub}
\label{app:classical-shifted-sums}

Here we adapt the classical $\bo{2^{0.773 n}}$ algorithm for \Equ\ of Mucha et al.~\cite{MNPW19c} to apply to the more general \SSub\ problem (Problem~\ref{Pbm:SSub}), with the same worst-case running time.

\begin{figure}[htbp]
  \begin{algorithm}[H] \label{Algo:classical-Mim-Dsum}
    \caption{Classical meet-in-the-middle technique for \SSub}
    \KwIn{Instance $(a,s)$ of \SSub\ with maximum solution ratio $\ell$.}
    \KwOut{Two subsets $S_1,S_2 \subseteq [n]$.} \vspace{0.3\baselineskip}
    Randomly split $[n]$ into disjoint subsets $X_1\cup X_2$ such that $\abs{X_1}=\abs{X_2}=n/2$. \\[1.5mm]
    Classically compute and sort~$V_1 = \{ \Sigma(S_{11}) - \Sigma(S_{21}) : S_{11},S_{21}\subseteq X_1 {\rm ~and~} S_{11}\cap S_{21}=\emptyset \allowbreak {\rm ~and~} \allowbreak \abs{S_{11}}+\abs{S_{21}} = \ell n/2\}$. \\[1.5mm]
    Classically compute~$V_2 = \{ \Sigma(S_{12}) - \Sigma(S_{22}) : S_{12},S_{22}\subseteq X_2 {\rm ~and~} \allowbreak  S_{12}\cap S_{22}=\emptyset  \allowbreak {\rm ~and~} \allowbreak \abs{S_{12}}+\abs{S_{22}} = \ell n/2\}$ \\[1.5mm]
    For each $v_2\in V_2$, binary search for $v_1\in V_1$ such that $v_1 + v_2 = s$. If such a $v_2$ is found, output $S_1 = S_{11} \cup S_{12}$ and
    $S_2 = S_{21} \cup S_{22}$, where $S_{11}, S_{21}\subseteq X_1$ are such that $v_1 = \Sigma(S_{11}) - \Sigma(S_{21})$ and $S_{21},S_{22}$ are such that $v_2 = \Sigma(S_{21})-\Sigma(S_{22})$.
  \end{algorithm}
\end{figure}

\begin{theorem}[\SSub, classical meet-in-the-middle]
  \label{Thm:classical-Mim-Dsum}
  Given an instance of \SSub\ with maximum solution ratio $\ell \in (0,1)$, \emph{Algorithm~\ref{Algo:classical-Mim-Dsum}} finds a solution with at least inverse polynomial probability in time $\wbo{2^{n (h(\ell) + \ell)/2}}$.
\end{theorem}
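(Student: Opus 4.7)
The plan is to mirror the analysis of Theorem~\ref{Thm:Mim-Dsum}, but adapted to a balanced $1{:}1$ split rather than $1{:}2$. The proof has two independent parts: bounding the running time of Algorithm~\ref{Algo:classical-Mim-Dsum}, and showing that a maximum solution is found with at least inverse polynomial probability whenever the random partition splits it evenly between $X_1$ and $X_2$.

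For the running time, I would count the pairs $(S_{11},S_{21})$ with $S_{11}, S_{21}\subseteq X_1$ disjoint and $\abs{S_{11}}+\abs{S_{21}}=\ell n/2$: first choose the $\ell n/2$ elements of $X_1$ that appear in $S_{11}\cup S_{21}$ (giving $\binom{n/2}{\ell n/2}$ possibilities), then assign each to $S_{11}$ or $S_{21}$ (giving $2^{\ell n/2}$). Hence $\abs{V_1}\le \binom{n/2}{\ell n/2}\cdot 2^{\ell n/2}$, which by Fact~\ref{fact:entropy} is $\wbo{2^{n(h(\ell)+\ell)/2}}$, and the same bound holds for $\abs{V_2}$. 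Computing and sorting $V_1$ costs $\wbo{\abs{V_1}}$, and the binary searches over $V_1$ for the elements of $V_2$ cost $\wbo{\abs{V_2}\cdot \log\abs{V_1}}$, so the total running time is $\wbo{2^{n(h(\ell)+\ell)/2}}$.

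For correctness, fix a maximum solution $(S_1,S_2)$ of size $\ell n$. The key combinatorial estimate, analogous to Lemma~\ref{lem:poly_prob}, is that the random equal partition $X_1\cup X_2$ satisfies $\abs{(S_1\cup S_2)\cap X_1}=\abs{(S_1\cup S_2)\cap X_2}=\ell n/2$ with probability
\[
\frac{\binom{\ell n}{\ell n/2}\binom{n-\ell n}{(n-\ell n)/2}}{\binom{n}{n/2}}=\Omega(n^{-1/2}),
\]
by direct application of Fact~\ref{fact:entropy}. Whenever this event occurs, setting $S_{1j}=S_1\cap X_j$ and $S_{2j}=S_2\cap X_j$ for $j\in\{1,2\}$ yields $\abs{S_{11}}+\abs{S_{21}}=\abs{S_{12}}+\abs{S_{22}}=\ell n/2$; disjointness of $S_{1j}$ and $S_{2j}$ follows from disjointness of $S_1$ and $S_2$. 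Thus the pair $v_1=\Sigma(S_{11})-\Sigma(S_{21})\in V_1$ and $v_2=\Sigma(S_{12})-\Sigma(S_{22})\in V_2$ satisfies $v_1+v_2=\Sigma(S_1)-\Sigma(S_2)=s$, so the binary search in Step~4 finds a matching pair and outputs a valid solution.

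The only mild subtlety is that Step~4's reconstruction step requires us to remember, for each $v_1\in V_1$ (respectively $v_2\in V_2$), witnesses $(S_{11},S_{21})$ (respectively $(S_{12},S_{22})$) realising that difference; this is handled by storing a representative pair alongside each value when $V_1$ and $V_2$ are computed, which adds only a polynomial factor to the size and running time. I do not anticipate any genuine obstacle — the proof is essentially the classical meet-in-the-middle analysis of \cite{MNPW19c} for \Equ, instantiated with the shift $s$ in place of $0$ and parametrised by the maximum solution ratio as in Theorem~\ref{Thm:Mim-Dsum}.
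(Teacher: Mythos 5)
Your proof is correct and follows essentially the same approach as the paper's: the paper invokes ``the same proof technique as Lemma~\ref{lem:poly_prob}'' for the balanced-split probability and directly counts $\abs{V_1}=\abs{V_2}=\binom{n/2}{\ell n/2}2^{\ell n/2}$, which by Fact~\ref{fact:entropy} gives the stated running time. You fill in the same steps, merely spelling out the $\Omega(n^{-1/2})$ probability bound explicitly and noting the (correct but routine) bookkeeping of witnesses alongside each value in $V_1$ and $V_2$.
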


\begin{proof}
  By the the same proof technique as Lemma~\ref{lem:poly_prob}, it follows that with at least inverse polynomial probability the random partition $ X_1\cup X_2$ satisfies $\abs{(S_1\cup S_2)\cap X_1} = \abs{(S_1\cup S_2)\cap X_2} = \ell n/2$. If this is the case, the algorithm will succeed.

  The sets $V_1, V_2$ have cardinality $\abs{V_1} = \abs{V_2} = \binom{n/2}{\ell n/2}2^{\ell n/2}$.  Computing and sorting $V_1, V_2$ thus takes time $\wbo{\binom{n/2}{\ell n/2}2^{\ell n/2}}$.  For each element $v_2\in V_2$, binary search over $V_1$ takes times logarithmic in $\abs{V_1}$.  The result follows from Fact~\ref{fact:entropy}.
\end{proof}

\begin{figure}[htbp]
  \begin{algorithm}[H]
    \caption{Classical representation technique for \SSub}
    \label{Algo:classical-Rep-Dsum}
    \KwIn{Instance $(a,s)$ of \SSub\ with $\sum_{i=1}^n a_i < 2^{4n}$ and maximum solution ratio $\ell$.}
    \KwOut{Two subsets $S_1,S_2 \subseteq [n]$.} \vspace{0.3\baselineskip}
    Set $b = 1 -\ell$ if $\ell > 1/2$ and $b=1/2$ otherwise. \\
    \Indp Choose a random prime $p\in [2^{bn} \tdots 2^{bn + 1}]$ and a random integer $k \in [0 \tdots p-1]$. \\
    Construct the table $t_p[i,j]$ for $i=0,\ldots, n$ and $j=0,\ldots, p-1$ (see Section~\ref{Sec:DynProg}). \\
    Enumerate $T_{p,k}$ and $T_{p,(k-s)\bmod p}$, and sort $T_{p,(k-s)\bmod p}$.
    For each $S_1 \in T_{p,k}$, binary search for $S_2 \in T_{p,(k-s) \bmod p}$ such that $\Sigma(S_1) = \Sigma(S_2)+s$ and $S_1 \neq S_2$. If found, output the pair $(S_1, S_2)$.
  \end{algorithm}
\end{figure}

\begin{theorem}[\SSub, classical representation]
  \label{Thm:classical-Rep-Dsum}
  Given an instance of \SSub\ with $\sum_{i=1}^n a_i < 2^{4n}$  and maximum solution ratio $\ell \in (0,1)$, \emph{Algorithm~\ref{Algo:classical-Rep-Dsum}} finds a solution with inverse polynomial probability in time $\wbo{2^{bn} + 2^{(1-b) n}}$, where $b= 1-\ell$ if $\ell>1/2$ and $b=1/2$ otherwise.
\end{theorem}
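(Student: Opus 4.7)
The plan is to analyse separately (i) the running time of Algorithm~\ref{Algo:classical-Rep-Dsum} and (ii) its success probability, appealing to the three structural results already established for the quantum algorithm of Section~\ref{Sec:repEsum}. The claim that the algorithm finds a solution with at least inverse polynomial probability will follow at no cost in the stated complexity.

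For the running time I would first invoke Lemma~\ref{lem:dp-table-time} to build the table $t_p$ in time $\wbo{n^2 p} = \wbo{2^{bn}}$, after which Theorem~\ref{Thm:effEnum} lets us produce any element of $T_{p,k}$ or $T_{p,(k-s)\bmod p}$ in time $\bo{n^2}$. By Lemma~\ref{lem:size} each bin has expected cardinality at most $2^{(1-b)n}$, so Markov's inequality yields $t_{p,k},\, t_{p,(k-s)\bmod p} \leq n^2 \cdot 2^{(1-b)n}$ simultaneously with probability $1 - \bo{1/n^2}$. Conditioned on this event, enumerating both bins, sorting $T_{p,(k-s)\bmod p}$ by subset-sum value, and running one binary search per element of $T_{p,k}$ all fit in $\wbo{2^{(1-b)n}}$, giving the claimed total $\wbo{2^{bn} + 2^{(1-b)n}}$.

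For correctness I would use the collision values set $V$, which by Lemma~\ref{lem:maxRatioEsum} satisfies $\abs{V} \geq 2^{(1-\ell)n}$. The key observation is that the prescribed choice $b = \max(1/2,\, 1-\ell)$ forces $\ell \leq 1-b$ in both regimes ($\ell = 1-b$ when $\ell > 1/2$, and $\ell \leq 1/2 = 1-b$ otherwise), which places us in the first case of Lemma~\ref{lem:pz}. That lemma then yields
\[
\Pr_{p,k}\bc{v_{p,k} \geq 2^{(1-\ell-b)n-2}} = \om{1/n},
\]
so with probability $\om{1/n}$ at least one collision value $v \in V$ is congruent to $k$ modulo $p$. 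By the definition of $V$ there exist distinct sets $S_1, S_2 \subseteq [n]$ with $\Sigma(S_1) = v$ and $\Sigma(S_2) = v-s$, which therefore lie respectively in $T_{p,k}$ and $T_{p,(k-s) \bmod p}$. A union bound with the Markov event above keeps the overall success probability at $\om{1/n}$, and step~4 then recovers $(S_1,S_2)$ via binary search.

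I do not anticipate a major obstacle: the argument is a direct classical transcription of the representation-technique portion of Theorem~\ref{Thm:Rep-Dsum}, with quantum pair finding replaced by sort-and-search and with no need for variable-time amplitude amplification, precisely because the choice of $b$ stays in the regime $\ell \leq 1-b$ where Lemma~\ref{lem:pz} already supplies an $\om{1/n}$ success probability. The only mild subtlety is the case $s=0$, where a naive binary search on $T_{p,k}$ could return $S_2 = S_1$; this is handled by observing that $v \in V$ forces two distinct preimages of $v$ in $T_{p,k}$, so a short linear scan around the located position of the sorted list produces a pair with $S_1 \neq S_2$.
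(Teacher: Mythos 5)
Your proof matches the paper's argument step for step: build the table via Lemma~\ref{lem:dp-table-time}, bound both bin sizes via Lemma~\ref{lem:size} and Markov, lower-bound the number of colliding values landing in the bin via Lemmas~\ref{lem:maxRatioEsum} and~\ref{lem:pz} (exploiting $b \le 1-\ell$), and enumerate with Theorem~\ref{Thm:effEnum}. One slip worth fixing: the prescribed choice is $b = \min(1/2,\,1-\ell)$, not $\max(1/2,\,1-\ell)$ — your parenthetical cases already use the correct values, so this is a typo rather than a gap, and the $s=0$ remark is a harmless implementation detail that the paper handles by explicitly requiring $S_1 \neq S_2$ in step 4 of Algorithm~\ref{Algo:classical-Rep-Dsum}.
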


\begin{proof}
  The choice of $b$ satisfies $b\le 1-\ell$. By Lemmas~\ref{lem:pz} and~\ref{lem:maxRatioEsum}, with probability $\om{1/n}$ there is at least one solution pair contained in~$T_{p,k} \times T_{p,(k-s) \bmod p}$.  By Lemma~\ref{lem:size} and Markov's inequality, the sizes of $T_{p,k}$ and $T_{p,(k-s) \bmod p}$ are at most $t_{p,k}, t_{p,(k-s) \bmod p} \leq n^2 2^{(1-b) n}$ with probability at least $1-1/n^2$. Thus, with probability $\om{1/n}$ we can assume that both of these events occur. If this is the case, then enumeration and sorting of $T_{p,k}$, $T_{p,(k-s)\bmod p}$ can be completed in time $\wbo{2^{(1-b)n}}$ (Theorem~\ref{Thm:effEnum}) after constructing the table $t_p$ in time $\wbo{2^{bn}}$ (Lemma~\ref{lem:dp-table-time}).
\end{proof}

For each value of $\ell$, choosing the better of Algorithms~\ref{Algo:classical-Mim-Dsum} and~\ref{Algo:classical-Rep-Dsum} gives the following result.

\begin{theorem}[\SSub, classical]
  \label{Thm:classical-Dsum}
  There is a classical algorithm that, given an instance of \SSub\ with maximum solution ratio $\ell \in (0,1)$, outputs a solution with at least inverse polynomial probability in time $\wbo{2^{\gamma(\ell) n}}$ where
    \[\gamma(\ell) =
    \begin{cases}
      1/2   & \text{if $\ell_1 \leq \ell < 1/2$,}  \\[1mm]
      \ell  & \text{if $1/2 \leq \ell < \ell_2$,}   \\[1mm]
      (h(\ell) + \ell)/2 & \text{otherwise}
    \end{cases}\]
  and $\ell_1 \approx 0.227$ and $\ell_2\approx 0.773$ are solutions to the equations $(h(\ell) + \ell)/2 =1/2$ and $(h(\ell) + \ell)/2 = \ell$ respectively. In particular, the worst-case complexity is~$\bo{2^{0.773 n}}$.
\end{theorem}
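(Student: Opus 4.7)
The plan is to combine Theorems~\ref{Thm:classical-Mim-Dsum} and~\ref{Thm:classical-Rep-Dsum} into a single algorithm by choosing, for each candidate value of the maximum solution ratio, whichever of the two subroutines has the smaller exponent at that value. Since $\abs{S_1} + \abs{S_2}$ is an integer in $\{2, \ldots, n\}$ the ratio $\ell$ takes one of only $n-1$ possible values $\{2/n, 3/n, \ldots, (n-1)/n\}$; we iterate over all of them. Each subroutine succeeds with inverse-polynomial probability when the guessed $\ell$ matches the actual maximum solution ratio, so a polynomial number of repetitions boosts the success probability to a constant, adding only a polylogarithmic factor to the exponent (absorbed in the $\wbo{\cdot}$).

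For a fixed $\ell$, Theorem~\ref{Thm:classical-Mim-Dsum} gives exponent $\alpha_{\mathrm{MIM}}(\ell) = (h(\ell) + \ell)/2$, while Theorem~\ref{Thm:classical-Rep-Dsum} gives $\wbo{2^{bn} + 2^{(1-b)n}}$ with the prescribed choice $b = 1/2$ when $\ell \leq 1/2$ and $b = 1 - \ell$ when $\ell > 1/2$. Computing $\max(b, 1-b)$ yields the representation-technique exponent $\alpha_{\mathrm{REP}}(\ell) = 1/2$ for $\ell \leq 1/2$ and $\alpha_{\mathrm{REP}}(\ell) = \ell$ for $\ell > 1/2$. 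Setting $\gamma(\ell) = \min(\alpha_{\mathrm{MIM}}(\ell), \alpha_{\mathrm{REP}}(\ell))$ gives the overall running time exponent for the chosen algorithm at ratio $\ell$.

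Next I would locate the crossover points. On $(0, 1/2]$, comparing $(h(\ell) + \ell)/2$ with $1/2$ reduces to $h(\ell) + \ell = 1$; since $h(\ell) + \ell$ is continuous with values $0$ at $\ell = 0$ and $3/2$ at $\ell = 1/2$, and is strictly increasing on $(0,1/2]$, there is a unique root $\ell_1 \approx 0.227$, with MIM faster below $\ell_1$ and REP faster above. On $[1/2, 1)$, comparing $(h(\ell)+\ell)/2$ with $\ell$ reduces to $h(\ell) = \ell$; the function $h(\ell) - \ell$ is $1/2$ at $\ell = 1/2$ and $-1$ at $\ell = 1$, strictly decreasing on this interval (since $h'(\ell) < 1$ for $\ell > 1/2$), so it has a unique root $\ell_2 \approx 0.773$, with REP faster for $1/2 \leq \ell < \ell_2$ and MIM faster for $\ell \geq \ell_2$. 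Assembling these three regions yields exactly the piecewise expression for $\gamma(\ell)$ in the statement.

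Finally, to obtain the worst-case $\bo{2^{0.773 n}}$ bound I would maximize $\gamma$ on $(0,1)$. On $[\ell_1, 1/2]$ we have $\gamma \equiv 1/2$; on $[1/2, \ell_2]$ we have $\gamma(\ell) = \ell$, strictly increasing to $\ell_2$; on $[\ell_2, 1)$ we have $\gamma(\ell) = (h(\ell) + \ell)/2$, which agrees with $\ell_2$ at $\ell = \ell_2$ by construction and is strictly decreasing thereafter because its derivative $(h'(\ell) + 1)/2$ is less than $1$ on $(\ell_2, 1)$. On the remaining tail $(0, \ell_1]$, $\gamma = (h(\ell)+\ell)/2 \leq 1/2$. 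The global maximum is therefore attained at $\ell = \ell_2$, yielding $\gamma(\ell_2) = \ell_2 \approx 0.773$. The only mildly delicate point is monotonicity of the MIM exponent above $\ell_2$ (a one-line derivative check), all other steps being routine.
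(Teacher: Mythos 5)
Your proposal matches the paper's argument exactly — the paper proves Theorem~\ref{Thm:classical-Dsum} with a single sentence (``For each value of $\ell$, choosing the better of Algorithms~\ref{Algo:classical-Mim-Dsum} and~\ref{Algo:classical-Rep-Dsum} gives the following result''), and your write-up just fills in the same comparison in detail: computing $\alpha_{\mathrm{REP}}(\ell) = \max(b, 1-b)$ from Theorem~\ref{Thm:classical-Rep-Dsum}, setting $\gamma = \min(\alpha_{\mathrm{MIM}}, \alpha_{\mathrm{REP}})$, locating the crossovers $\ell_1, \ell_2$, and maximizing.

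One slip in the final paragraph: you justify that $(h(\ell)+\ell)/2$ is strictly decreasing on $(\ell_2, 1)$ by claiming its derivative $(h'(\ell)+1)/2$ ``is less than $1$'', but being less than $1$ does not imply decrease. What you need is that the derivative is \emph{negative}, i.e.\ $h'(\ell) < -1$, equivalently $\log_2\bigl((1-\ell)/\ell\bigr) < -1$, equivalently $\ell > 2/3$. Since $\ell_2 \approx 0.773 > 2/3$, the MIM exponent is indeed strictly decreasing on $(\ell_2, 1)$, so the conclusion stands — only the stated justification needs this correction.
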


In comparison with the above result, the algorithm of~\cite{MNPW19c} for \Equ\ has running time $\wbo{2^{\gamma(\lmin) n}}$ where $\lmin$ is the minimum solution ratio (rather than maximum), and $\gamma(\lmin) = \lmin$ for $1/2\le \lmin < \ell_2$ and $\gamma(\lmin) = (h(\lmin) + \lmin)/2$ otherwise. We do not know if a similar algorithm exists for \SSub\ based on the minimum solution ratio.

\section{Quantum \Equ\ in terms of minimum solution ratio}
\label{app:quantum-equal}

We first recall the concept of a \emph{minimum solution}, introduced in~\cite{MNPW19c}.

\begin{definition}[Minimum solution]
  \label{def:min-sol}
  Two disjoint subsets $S_1,S_2 \subseteq \{1,\ldots,n\}$ that form a solution to an instance of \Equ\ are a {\em minimum solution} if their size $\abs{S_1} + \abs{S_2} = \lmin$ is the smallest among all such solutions. We call $\lmin \in (0,1)$ the minimum solution ratio.
\end{definition}

We prove that, for the special case of \Equ\ (Problem~\ref{Pbm:Equ}), we can reformulate the results of Section~\ref{Sec:DSum} to make use of the minimum solution ratio $\lmin$ instead of the maximum one.

\begin{theorem}[\Equ, quantum]
  \label{Thm:Esum_quantum}
  There is a quantum algorithm that, given an instance of \Equ\ with minimum solution ratio $\lmin \in (0,1)$, outputs a solution with at least inverse polynomial probability in time $\wbo{2^{\gamma'(\lmin) n}}$ where
    \[\gamma'(\lmin) =
    \begin{cases}
     \frac{1}{2} - \frac{1-\lmin}{4} h\pt[\big]{\frac{\lmin}{2(1-\lmin)}} & \text{if $\ell_1' \le \lmin < 1/2$} \\[1mm]
      (1 + \lmin)/4       & \text{if $1/2 \leq \lmin \leq 3/5$,}    \\[1mm]
      \lmin/2 + 1/10      & \text{if $3/5 < \lmin < \ell_2'$,}  \\[1mm]
      (h(\lmin) + \lmin)/3 & \text{otherwise}
    \end{cases}\]
  and $\ell_1' \approx 0.273$ and $\ell_2' \approx 0.809$ are solutions to the equations $(h(\lmin) + \lmin)/3 = 1/2 - (1-\lmin)h\pt[\big]{\frac{\lmin}{2(1-\lmin)}}/4$ and $(h(\lmin) + \lmin)/3 = \lmin/2 + 1/10$ respectively. In particular, the worst-case complexity is~$\bo{2^{0.504 n}}$.
\end{theorem}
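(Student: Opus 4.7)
The plan is to prove each of the four regimes of $\gamma'(\lmin)$ separately and combine them via a meta-algorithm that iterates over guesses of $\lmin \in \set{1/n, \dots, 1}$ with polynomial overhead. Three of the regimes---those with $\lmin \geq 1/2$ together with the small-$\lmin$ meet-in-the-middle case---can be handled by reusing the algorithms of Section~\ref{Sec:DSum} with $\ell$ replaced by $\lmin$. Concretely, Algorithm~\ref{Algo:Rep-Dsum} (with $s=0$) yields the second and third regimes, while Algorithm~\ref{Algo:Mim-Dsum}, applied to search for a solution of exactly size $\lmin n$, yields the fourth regime (whose running time $(h(\lmin)+\lmin)/3$ covers both $\lmin > \ell_2'$ and $\lmin < \ell_1'$). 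The only place in the analyses of Theorems~\ref{Thm:Rep-Dsum} and~\ref{Thm:Mim-Dsum} that uses the maximum solution ratio in an essential way is Lemma~\ref{lem:maxRatioEsum}, so for the $\lmin \geq 1/2$ regimes the remaining task is to establish an analogue: if the minimum solution ratio is $\lmin \geq 1/2$, then $\abs{V} \geq 2^{(1-\lmin)n}/\poly(n)$.

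The proof of this analogue mirrors Lemma~\ref{lem:maxRatioEsum}. Fix a minimum disjoint solution $(S_1,S_2)$ and let $X \neq X'$ be subsets of $\overline{S_1 \cup S_2}$, of size $(1-\lmin)n$. If $\Sigma(X)=\Sigma(X')$ then $(X \setminus X', X' \setminus X)$ is a non-trivial disjoint solution of size $\abs{X \Delta X'} \leq (1-\lmin)n$, which for $\lmin > 1/2$ contradicts minimality; hence all $2^{(1-\lmin)n}$ sums $\Sigma(X)$ are distinct (the boundary $\lmin = 1/2$ loses only a factor of $2$). Plugging this bound into Lemma~\ref{lem:pz} and repeating the proofs of Theorems~\ref{Thm:Rep-Dsum} and~\ref{Thm:Mim-Dsum} verbatim produces the stated exponents of the two middle regimes, and Theorem~\ref{Thm:Mim-Dsum} applied with target size $\lmin n$ gives the fourth regime directly (its analysis only requires the existence of a size-$\lmin n$ solution, not its maximality).

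The first regime, $\ell_1' \leq \lmin < 1/2$, is the main obstacle because the collision-values bound above fails: the Plotkin-style contradiction requires $\lmin \geq 1/2$. The plan is to design a new quantum algorithm tailored to \Equ. Pick a random prime $p \approx 2^{bn}$ and a random bin~$k$, augment the dynamic programming table of Section~\ref{Sec:DynProg} with a cardinality axis so that size-$s$ subsets of $T_{p,k}$ can be indexed in polynomial time, and run quantum pair finding (Theorem~\ref{Thm:PairFinding}) on these restricted subsets to find two with equal sum. For a balanced minimum solution with $|S_1|=|S_2|=\lmin n/2$, every subset $X \subseteq \overline{S_1 \cup S_2}$ of size $s-\lmin n/2$ generates a colliding pair $(S_1 \cup X, S_2 \cup X)$ of size-$s$ sets, and a $1/p$ fraction lands in bin~$k$. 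Balancing the setup against the pair-finding cost and optimising over $s$ and $p$ gives the claimed exponent $\frac{1}{2} - \frac{1-\lmin}{4} h(\mu)$, with $\mu = \lmin/(2(1-\lmin))$ arising as the ratio $\abs{X}/\abs{\overline{S_1 \cup S_2}}$ at the optimum.

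The delicate technical point in the first regime will be handling minimum solutions whose halves are unequal, since the clean extension $(S_1 \cup X, S_2 \cup X)$ requires $|S_1|=|S_2|$ to preserve the size constraint on both sides. The plan is to guess $(|S_1|, |S_2|)$ from the $\bo{n}$ possibilities and, for each guess, count collisions of the form $(S_1 \cup X_1, S_2 \cup X_2)$ with $X_1,X_2 \subseteq \overline{S_1 \cup S_2}$ of the appropriate (possibly unequal) sizes and equal partial sums, via an additional entropy estimate that quantifies how many such $(X_1,X_2)$ land in a random bin. The cut-offs $\ell_1'$ and $\ell_2'$ then follow by equating the adjacent regime formulas as stated in the theorem, so that $\gamma'(\lmin)$ is the pointwise minimum of the four bounds and the meta-algorithm achieves the overall running time.
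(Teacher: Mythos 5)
Your handling of the three regimes other than the first matches the paper: run Algorithms~\ref{Algo:Rep-Dsum} and~\ref{Algo:Mim-Dsum} with $\lmin$ in place of $\ell$, observing that Lemma~\ref{lem:poly_prob} and the analysis of Algorithm~\ref{Algo:Mim-Dsum} only need \emph{some} solution of ratio $\lmin$, and that the collision-values bound $\abs{V}\geq 2^{(1-\lmin)n}$ survives for minimum solutions when $\lmin > 1/2$ by the same symmetric-difference contradiction. That part of your plan is fine.

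The gap is in the first regime. You assert that the collision-values argument ``fails'' for $\lmin < 1/2$ because the Plotkin-style contradiction requires $\lmin \geq 1/2$, and on that basis you propose an entirely new algorithm with size-restricted bins, an augmented DP table, and a case split over $(|S_1|,|S_2|)$. But the argument does not fail; it only fails when $X$ ranges over \emph{arbitrary} subsets of $\overline{S_1 \cup S_2}$, because then $\abs{X \Delta X'}$ can be as large as $(1-\lmin)n \geq \lmin n$. The paper's Lemma~\ref{lem:minRatioEsum} fixes this by restricting to sets $S \subseteq \overline{S_1 \cup S_2}$ of size exactly $\lmin n/2 - 1$: then $\abs{S \Delta S'} \leq \lmin n - 2 < \lmin n$ always, the contradiction with minimality holds for every $\lmin$, and one gets $\abs{V} \geq \binom{(1-\lmin)n}{\lmin n/2 - 1} = \Omega\bigl(2^{(1-\lmin)h(\lmin/(2(1-\lmin)))n}/\poly(n)\bigr)$ by Fact~\ref{fact:entropy}. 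Plugging this weaker bound into Lemma~\ref{lem:pz} and rebalancing the bin parameter via $b = (2-z-b)/3$ with $z = (1-\lmin)h(\lmin/(2(1-\lmin)))$ gives the first-regime exponent with the \emph{unmodified} Algorithm~\ref{Algo:Rep-Dsum}. Your alternative algorithm is not needed and, as written, is a sketch whose analysis (in particular the balancing over $s$ and $b$, and the treatment of unequal $|S_1|\neq|S_2|$) is not carried through; it cannot be accepted as a proof. It is worth noting that the ratio $\mu = \lmin/(2(1-\lmin))$ you predict as an optimum is exactly $\abs{S}/\abs{\overline{S_1\cup S_2}}$ for the paper's fixed choice, but in the paper it is not the result of an optimization — it is simply the largest extension size for which the symmetric-difference bound stays strictly below $\lmin n$.
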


The proof follows closely that of the quantum algorithm for \SSub\ (Theorem~\ref{Thm:Dsum}), with the main difference coming from a bound on the size of the collision values set $V = \set{v \in \N : \exists S_1 \neq S_2, v = \Sigma(S_1) = \Sigma(S_2)}$ for \Equ. For \SSub, Lemma~\ref{lem:maxRatioEsum} gives the bound $\abs{V} \ge 2^{(1-\ell)n}$ in terms of the \textit{maximum solution ratio} $\ell$. For \Equ\, we can obtain a similar statement in terms of the \textit{minimum solution ratio}.

\begin{lemma}
  \label{lem:minRatioEsum}
  If an instance of \Equ\ has minimum solution ratio $\lmin$ then the collision values set $V = \set{v \in \N : \exists  S_1 \neq S_2, v = \Sigma(S_1) = \Sigma(S_2)}$ satisfies
    \[\abs{V} \geq
        \begin{cases}
          2^{(1- \lmin)n} & \text{if }  \lmin > 1/2,\\
          2^{(1-\lmin)h\left(\frac{\lmin}{2(1-\lmin)}\right)n} & \text{otherwise.}
        \end{cases}\]
\end{lemma}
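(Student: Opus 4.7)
The plan is to generate many distinct elements of $V$ by extending a minimum solution with disjoint subsets of its complement. Fix a minimum solution $(S_1, S_2)$ of size $\lmin n$ and let $W := [n] \setminus (S_1 \cup S_2)$, so $|W| = (1-\lmin)n$. For every $X \subseteq W$ the pair $(S_1 \cup X,\, S_2 \cup X)$ is a disjoint \Equ\ solution whose common sum is $\Sigma(S_1) + \Sigma(X)$, and therefore this value belongs to $V$. It thus suffices to lower bound the number of \emph{distinct} values of $\Sigma(X)$ as $X$ ranges over an appropriate family of subsets of $W$.

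The key observation is that if $X, X' \subseteq W$ are distinct and satisfy $\Sigma(X) = \Sigma(X')$, then $(X \setminus X',\, X' \setminus X)$ is itself a disjoint \Equ\ solution, of size $|X \Delta X'|$. Minimality of $(S_1, S_2)$ then forces $|X \Delta X'| \geq \lmin n$. Contrapositively, any two distinct subsets of $W$ whose symmetric difference has size strictly less than $\lmin n$ produce distinct subset sums, and hence distinct values in $V$.

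In the regime $\lmin > 1/2$, the complement satisfies $|W| = (1-\lmin)n < \lmin n$, so every pair of distinct subsets of $W$ trivially has symmetric difference strictly smaller than $\lmin n$. All $2^{(1-\lmin)n}$ subsets of $W$ therefore produce pairwise distinct sums, giving $|V| \geq 2^{(1-\lmin)n}$. In the regime $\lmin \leq 1/2$, I restrict attention to subsets of $W$ of cardinality at most $k$, where $k$ is the largest integer with $2k < \lmin n$; any two such subsets have symmetric difference at most $2k < \lmin n$, hence distinct sums. The number of such subsets is at least $\binom{(1-\lmin)n}{k}$, and since $k/((1-\lmin)n)$ approaches $\lmin/(2(1-\lmin)) \leq 1/2$ (the inequality uses exactly $\lmin \leq 1/2$), Fact~\ref{fact:entropy} gives $\binom{(1-\lmin)n}{k} \geq 2^{(1-\lmin)\, h(\lmin/(2(1-\lmin)))\, n}/\poly(n)$, matching the claimed bound (the polynomial factor is absorbed by the $\wbo{\cdot}$ convention used when the lemma feeds into Theorem~\ref{Thm:Esum_quantum}).

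The conceptual heart, namely that minimality of $(S_1,S_2)$ turns sum-collisions among subsets of $W$ into a minimum-distance constraint, so that low-weight subsets of $W$ form an injective-sum family, is the only real content. The main care needed is verifying the range condition $k \leq (1-\lmin)n/2$ in the second case (needed to apply Fact~\ref{fact:entropy} at a point where the binomial is monotone increasing in $k$) and tracking the subpolynomial losses from integer rounding and from the entropy approximation; neither presents a genuine obstacle.
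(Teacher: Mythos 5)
Your proof is correct and essentially matches the paper's argument: both exploit minimality of $(S_1,S_2)$ to conclude that any two subsets of the complement $W$ with symmetric difference strictly below $\lmin n$ must have distinct sums (otherwise their set differences would yield a smaller solution), and then count a large such family via the binomial/entropy bound. The only cosmetic differences are that you handle the $\lmin > 1/2$ case directly within the same framework (the paper defers it to~\cite{MNPW19c}) and that you take all subsets of $W$ of size at most $k$ rather than, as the paper does, of fixed size $\lmin n/2 - 1$; both choices give the same bound up to the polynomial factors that both proofs implicitly absorb.
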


\begin{proof}
  The case $\lmin > 1/2$ is dealt with in~\cite{MNPW19c}, therefore consider $\lmin \le 1/2$. Let $S_1,S_2\subseteq \{1, \ldots, n\}$ be a minimum solution of size $\lmin n$. Then for any $S \subseteq  \overline{S_1 \cup S_2}$, with $|S| = \frac{\lmin n}{2} - 1$, the sets $S \cup S_1$ and $S \cup S_2$ form a solution, and for $S \neq S'$, the values $\Sigma(S_1 \cup S) $ and $\Sigma(S_1 \cup S')$ are distinct. Indeed, if this were not the case then $S \setminus S'$ and $S' \setminus S$ would form a disjoint solution of size less than $\lmin n$. Therefore $|V| \geq \binom{n(1-\lmin)}{\frac{\lmin n}{2 } -1}$, and the statement follows from Fact~\ref{fact:entropy}.
\end{proof}

We can now prove Theorem~\ref{Thm:Esum_quantum}.

\begin{proof}
  For each minimum solution ratio $\lmin\in(0,1)$, we use the better of Algorithms~\ref{Algo:Rep-Dsum} and~\ref{Algo:Mim-Dsum} with two modifications: (i) the minimum solution ratio $\lmin$ is used in place of the maximum solution ratio $\ell$ in the input of the algorithms, and (ii) we choose the value of $b$ in step~1 of Algorithm~\ref{Algo:Rep-Dsum} to be
    \[b(\lmin) =
      \begin{cases}
         \frac{1}{2} - \frac{1-\lmin}{4} h\pt[\big]{\frac{\lmin}{2(1-\lmin)}} & {\rm ~if~} \lmin \leq 1/2, \\
        (1 + \lmin)/4 & {\rm ~if~} 1/2 < \lmin \leq 3/5, \\
        2/5  &  {\rm ~if~} 3/5 < \lmin.
      \end{cases}\]
  The analysis of Algorithm~\ref{Algo:Mim-Dsum} is unaffected by the change to minimum solution ratio, as is the analysis of Algorithm~\ref{Algo:Rep-Dsum} for $\lmin > 1/2$. For Algorithm~\ref{Algo:Rep-Dsum} and $\lmin\le 1/2$, repeating the analysis of Lemma~\ref{lem:pz} using $\abs{V}\ge 2^{(1-\lmin)h\pt[\big]{\frac{\lmin}{2(1-\lmin)}}n}$ gives
    $\Pr_{p,k}\bc{v_{p,k} \geq 2^{ (z-b)n - 2} } = \om{1/n}$
  where $z = (1-\lmin)h\pt[\big]{\frac{\lmin}{2(1-\lmin)}}$. Recalling the proof of Theorem~\ref{Thm:Rep-Dsum}, the construction of the dynamic programming table takes time $\wbo{2^{bn}}$, and a collision can be found in time $\wbo{t_{p,k}^{2/3}/v_{p,k}^{1/3}} = \wbo{2^{(2-z-b)/3}}$. The running time for $\lmin < 1/2$ follows from balancing these two costs, i.e. by setting $b = (2-z-b)/3$.
\end{proof}

\end{document}